\documentclass[
 reprint,
 superscriptaddress,
 nofootinbib,
 amsmath,amssymb,
 aps,
 prx,
 onecolumn,
 tightenlines
]{revtex4-2}

\usepackage{graphicx}
\usepackage{dcolumn}
\usepackage{bm}
\usepackage[colorlinks,linkcolor=blue,urlcolor=blue, citecolor=blue]{hyperref}
\usepackage{bbm}
\usepackage[whole]{bxcjkjatype}
\usepackage[T1]{fontenc}

\usepackage{dsfont}
\usepackage{physics}
\usepackage{nicefrac}
\usepackage{enumitem}
\usepackage{tcolorbox}
\usepackage{needspace}
\usepackage{mathtools}
\usepackage{tabularx}
\newcolumntype{C}{>{\centering\arraybackslash}X}
\newcommand{\mc}{\mathcal}

\usepackage{amsmath, amssymb, amsfonts, amsthm}
\usepackage{thmtools, thm-restate}

\DeclareMathOperator{\Enc}{Enc}
\DeclareMathOperator{\Dec}{Dec}
\DeclareMathOperator{\EC}{EC}
\DeclareMathOperator{\id}{id}

\DeclareMathOperator{\Loc}{Loc}
\DeclareMathOperator{\eff}{eff}

\newtheorem{theorem}{Theorem}
\newtheorem*{theorem*}{Theorem}

\newtheorem{corollary}[theorem]{Corollary}

\newtheorem{definition}[theorem]{Definition}

\newtheorem{remark}[theorem]{Remark}

\begin{abstract}

Fault-tolerant capacities quantify the ability of a quantum channel to reliably transmit information when every component of the encoding and decoding procedure is noisy. Earlier work analyzed achievable communication rates under such noise using fault-tolerant implementations based on concatenated codes with a single logical qubit. In this work, we develop an alternative approach using concatenations of quantum Hamming codes, which offer constant space overhead by encoding many logical qubits simultaneously. We introduce modular techniques for implementing fault-tolerant circuits with quantum input/output interfaces using the concatenated quantum Hamming code.
These tools enable an analysis of fault-tolerant entanglement-assisted communication that is not only simpler, but also yields substantially higher achievable communication rates than previous methods, owing to the limited noise correlations in syndrome qubits of high-rate quantum Hamming codes.

\end{abstract}

\allowdisplaybreaks

\begin{document}
\author{Paula Belzig}
\affiliation{Institute for Quantum Computing and Department of Combinatorics and Optimization, \\ University of Waterloo, Waterloo, ON, Canada} 
\email{pbelzig@uwaterloo.ca}
\author{Hayata Yamasaki}
\affiliation{
Department of Computer Science, Graduate School of Information Science and Technology, The University of Tokyo, 7--3--1 Hongo, Bunkyo-ku, Tokyo, 113--8656, Japan
}
\email{hayata.yamasaki@gmail.com}

\title{Constant-space-overhead fault-tolerant quantum input/output and communication}

\maketitle

\tableofcontents

\section{Introduction}

\noindent For a wide range of quantum computing scenarios, it may be necessary at some point to transmit quantum information between distinct quantum systems. This is relevant for large quantum computations in distributed computing setups \cite{Y5,Wehnereaam9288,7562346}, gate teleportation, quantum repeaters, and, even more simply, for quantum communication setups where a sender and a receiver are connected by a noisy quantum channel and aim to reliably transmit an encoded message with vanishing error.

In order to communicate via a given quantum channel, using the tools of quantum Shannon theory, we would like to find and implement the best encoding and decoding procedure for the channel on quantum devices. However, quantum devices are highly susceptible to noise, and depending on the setup, this noise might be quite different from the channel noise.
Although noise effects in classical circuits are negligible on the time scales relevant for communication~\cite{Nicolaidis11}, the same assumption is not believed to be true for quantum circuits currently or in the future~\cite{Preskill18}.
Standard methods from fault-tolerant quantum computation~\cite{G,AGP06,yamasaki2022timeefficient} can reduce the effect of noise on a given circuit with \emph{classical} input and output for computation; however, in the context of communication, the sender's output and the receiver's input are quantum states that serve as inputs and outputs of the noisy quantum channel. In this work, we therefore develop tools for fault-tolerant quantum computation using the concatenated quantum Hamming code~\cite{yamasaki2022timeefficient} when the circuits have quantum input or output and show how these tools can be leveraged in order to show the achievability of non-zero communication rates even when the devices are noisy.

A fault-tolerant protocol simulates the original circuit with its error suppressed using the logical qubits of a quantum error-correcting code, even if the protocol is implemented by noisy physical qubits.
The trade-off between using more quantum resources and suppressing the logical error rate manifests in the protocol's \emph{space overhead}, which quantifies the required number of physical qubits per logical qubit, and the protocol's \emph{time overhead}, which quantifies the required time steps of physical gates per performing a logical gate. Previously,~\cite{CMH20,BCMH24} have shown that techniques using the concatenated 7-qubit Steane code~\cite{Steane_2002,KL96,AGP06} for protecting the encoding and decoding circuits of a communication protocol can lead to non-zero achievable rates for classical communication, quantum communication, and entanglement-assisted communication, even in the presence of independent and identically distributed (IID) Pauli noise in the quantum circuits for encoding and decoding.
Similar results were shown in~\cite{CFG24,tamiya2024polylogtimeconstantspaceoverheadfaulttolerantquantum} for more general classes of noise.
However, when using concatenated codes with a single logical qubit, protocols inevitably incur a polylogarithmically growing space overhead.

Progressing beyond concatenated many-to-one codes,~\cite{yamasaki2022timeefficient} introduced constructions of a quantum error-correcting code and a fault-tolerant protocol from concatenations of quantum Hamming codes, which share some properties with the 7-qubit Steane code but have many logical qubits and achieve an asymptotically constant space overhead.
However, the fault-tolerant protocol using the concatenated quantum Hamming code in~\cite{yamasaki2022timeefficient} is designed for quantum computation with classical inputs and outputs and thus cannot be applied directly to communication or distributed computing. Here, we close this gap by introducing and analyzing the notion of \emph{interfaced circuits} for this code, which provide fault-tolerant implementations of circuits using logical qubits of a quantum error-correcting code, equipped with interfaces for quantum inputs and outputs of states of physical qubits of the code.
We prove a level-conversion theorem and a threshold theorem for the constant-space-overhead interfaced circuits implemented using the concatenated quantum Hamming code, which enables the analysis of its logical action with a theoretical guarantee on the logical error rate.
We believe that these constructions and their resilience to errors are not only relevant for communication via a noisy quantum channel, but can also potentially find applications in other setups where a fault-tolerant circuit receives quantum input and output, such as third-generation quantum repeaters and magic state injection.

As an application of our methods, we study fault-tolerant entanglement-assisted communication, showing that the use of the concatenated quantum Hamming code simplifies and streamlines the analysis and also leads to substantially higher achievable communication rates compared to the previous work~\cite{BCMH24} based on the concatenated Steane code.
In particular, we derive lower bounds on the capacity for fault-tolerant entanglement-assisted communication that substantially narrow the gap to the upper bound and remain strictly positive for significantly higher, and thus more realistic, noise parameters (see Fig.~\ref{fig:capacity-comp1} for more details).

This manuscript is structured as follows. In Section~\ref{sec-prelim}, we introduce the relevant notions of capacity and the concatenated quantum Hamming code. In Section~\ref{sec-interfaced-circuits}, we introduce interfaced circuits and prove new level-conversion and threshold theorems for interfaced circuits, which allow us to deal with quantum inputs and outputs for fault-tolerant circuits implemented in the concatenated quantum Hamming code. In Section~\ref{sec-coding-thms}, we apply the level-conversion theorem to prove the coding theorems for the fault-tolerant entanglement-assisted classical capacity.

\section{Preliminaries}
\label{sec-prelim}

\subsection{Notation}

Let $\mathcal{M}_d$ denote the matrix algebra of $d\times d$ complex matrices. Quantum states of $d$-level quantum systems are given by positive semidefinite and unit-trace matrices $\rho\in\mathcal{M}_d$. We define quantum channels from a sender $A$ to a receiver $B$ as completely positive and trace-preserving (CPTP) maps $\mathcal{N}: \mathcal{M}_{d_A}\rightarrow \mathcal{M}_{d_B}$, where $d_A$ and $d_B$ is the dimensions of $A$ and $B$, respectively. Probability distributions $p$ of $d$ elements are represented as diagonal, positive semidefinite, and unit-trace matrices $\sum_{x}p(x)\ket{x}\bra{x}\in\mathcal{M}_{\mathrm{cl},d}\subset\mathcal{M}_{d}$, where $\mathcal{M}_{\mathrm{cl},d}$ is the algebra of diagonal matrices.
Quantum channels with classical input are defined as linear maps from $\mathcal{M}_{\mathrm{cl},d}$ that yield unit-trace positive semi-definite Hermitian matrices, and quantum channels with classical output are defined in the same way. The identity map on $\mathcal{M}_d$ is denoted by $\id_d$, and the identity map on classical bits is denoted by $\id_{\mathrm{cl},d}:\mathcal{M}_{\mathrm{cl},d}\to\mathcal{M}_{\mathrm{cl},d}$.

For measures of distance, we will use $F(\rho,\sigma)\coloneqq\qty(\Tr (\sqrt{\sqrt{\rho}\sigma \sqrt{\rho}}))^2$ to denote the fidelity of two quantum states $\rho$ and $\sigma$, and we will use
\begin{align}
    \| T-S\big\|_{1 \rightarrow 1} \coloneqq \sup \qty{\big\| (T-S)(\rho)\big\|_{\Tr} : \rho\in \mathcal{M}_{d_A}~\text{ is a quantum state}}
\end{align}
to denote the 1-to-1 distance of two quantum channels $T:\mathcal{M}_{d_A}\rightarrow \mathcal{M}_{d_B} $ and $S:\mathcal{M}_{d_A}\rightarrow \mathcal{M}_{d_B} $, where $\big\| \rho - \sigma\big\|_{\Tr} $ denotes the trace distance induced by the trace norm $\| \rho\|_{\Tr} \coloneqq \frac{1}{2}\| \rho\|_{1} =\frac{1}{2}  \Tr(\sqrt{ \rho ^{\dagger} \rho})$.

Quantum circuits form a dense subset of quantum channels written as a composition of a list of elementary gates, which we will specify in Section~\ref{sec-ft-setting}. We let $\Gamma$ denote a quantum circuit affected by noise, where a fault pattern $F$ (listing locations of the circuit and errors occurring at each location) occurs with a probability according to a distribution $\mathcal{F}$, and the faulty circuit is denoted by $[\Gamma]_{\mathcal{F}}$.
We will define the error model that is relevant to this work in Section~\ref{sec-ft-setting}.
A location in a circuit refers to each operation in the circuit.
Let $\Loc(\Gamma)$ denote the set of locations in the circuit $\Gamma$, and $|\Loc(\Gamma)|$ is the number of operations in $\Gamma$.

We write the binary entropy as $h_2(x)\coloneqq-x\log(x)-(1-x)\log(1-x)$ with $\log(x)$ referring to the logarithm with base $2$. We further write the von Neumann entropy of a quantum state $\rho\in\mathcal{M}_{d_A}$ as $H(A)_{\rho}\coloneqq-\Tr\left[\rho\log(\rho)\right]$.

A maximally entangled state of two qubits is denoted by $\phi_+=\ket{\phi_+}\bra{\phi_+}$, where $   \ket{\phi_+}\coloneqq\frac{1}{\sqrt{2}}(\ket{00}+\ket{11})$. 

\subsection{Fault-tolerant entanglement-assisted communication}

In classical communication, when a sender and a receiver want to communicate, their goal is to find a pair of procedures, an encoding procedure and a decoding procedure, so that copies of a noisy channel $\mathcal{N}$ transmit an encoded message with vanishing error and it can still be decoded reliably. The encoding and decoding procedures provide a \emph{coding scheme} for the channel $\mathcal{N}$, and the asymptotic rate of transmitted bits per channel use is an achievable communication rate for $\mathcal{N}$. The communication rate that is achieved for the best possible coding scheme for a given channel in the asymptotic limit is its capacity, which is an important quantifier of the channel's ability to transmit information and turns out to be equal to the mutual information~\cite{Shannon48} between the channel's input and output.

When the channel $\mc N$ connecting the sender and the receiver is quantum, different assumptions about the communication scenarios for a given quantum channel lead to different notions of capacity.
These notions include the classical capacity of a quantum channel~\cite{Holevo96,SW97}, which quantifies how well a quantum channel can transmit classical information encoded in a quantum state, and the quantum capacity~\cite{Lloyd97,Shor02,Devetak03}, which quantifies how well quantum information itself can be transmitted through the quantum channel. For these setups, \cite{CMH20} has constructed fault-tolerant coding schemes and shown lower bounds on the achievable rates when the coding scheme is subject to noise.
Entanglement-assisted communication setups lead to variants of these notions of capacity, which are often simpler to analyze; in this work, we focus on the entanglement-assisted classical capacity~\cite{BSST99,BSST02}, which is the classical capacity when the sender and the receiver share quantum entanglement, and the fault-tolerant entanglement-assisted capacity~\cite{BCMH24}.

In the entanglement-assisted communication setup, the sender and the receiver share quantum entanglement in the form of $nR_{\mathrm{ea}}$ copies of a maximally entangled state with some rate $R_{\mathrm{ea}}$ and aim to transmit a classical message. Then, the sender performs an encoding map $\mathcal{E}$ that encodes the $m$-bit classical message into a quantum state on $\mathcal{M}_{d_A}^{\otimes n}$. After the transmission through $n$ copies of the channel $\mathcal{N}$, the receiver receives a quantum state on $\mathcal{M}_{d_B}^{\otimes n}$ and applies a decoding map $\mathcal{D}$ with which they aim to recover the original message. The pair of maps $\mathcal{E}$ and $\mathcal{D}$ give a \emph{coding scheme} for the channel for entanglement-assisted communication.

\begin{definition}[Entanglement-assisted classical coding scheme]\label{def-ea-coding-scheme}
Let $\mathcal{N}:\mathcal{M}_{d_A} \rightarrow \mathcal{M}_{d_B}$ be a quantum channel, and let $n,m \in \mathbbm{N}$, $R_\mathrm{ea}\geq 0$, and $\epsilon>0$.
Then, an $(n,m,\epsilon,R_\mathrm{ea})$-coding scheme for entanglement-assisted classical communication consists of quantum channels $\mathcal{E}:\mathcal{M}_{\mathrm{cl},2}^{\otimes m}\otimes\mathcal{M}_2^{ \otimes \lfloor nR_\mathrm{ea} \rfloor}\rightarrow\mathcal{M}_{d_A}^{\otimes n}$ and $\mathcal{D}:\mathcal{M}_{d_A}^{\otimes n}\otimes \mathcal{M}_2^{ \otimes \lfloor nR_\mathrm{ea} \rfloor} \rightarrow\mathcal{M}_{\mathrm{cl},2}^{\otimes m}$ such that
\begin{align}
    F\Big(X,\mathcal{D} \circ  \big( (\mathcal{N}^{\otimes n } \circ \mathcal{E} )\otimes \id_2^{\otimes \lfloor nR_\mathrm{ea} \rfloor} \big) (X \otimes \phi_+^{\otimes \lfloor nR_\mathrm{ea} \rfloor })\Big) \geq 1-\epsilon
\end{align}
where $X=\sum_{x}p(x)\ket{x}\bra{x}$ for any probability distribution $p(x)$ over $m$-bit strings $x\in \{0,1\}^m$.
\end{definition}

\begin{definition}[Entanglement-assisted classical capacity]
Let $\mathcal{N}:\mathcal{M}_{d_A} \rightarrow \mathcal{M}_{d_B}$ be a quantum channel, and let $R_\mathrm{ea}\geq 0$ be the rate of entanglement-assistance.
For some $R_\mathrm{ea}$ and for every $n\in \mathbbm{N}$, consider an $(n,m(n),\epsilon(n),R_\mathrm{ea})$-coding scheme for entanglement-assisted classical communication.
A rate $R\geq 0$ is called achievable for entanglement-assisted classical communication via the quantum channel $\mathcal{N}$ if there exists $R_\mathrm{ea}\geq 0$ such that $R\leq \limsup_{n\rightarrow \infty} \Big\{ \frac{m(n)}{n} \Big\}$ and $\lim_{n\rightarrow \infty} \epsilon(n) \rightarrow 0$.
The entanglement-assisted classical capacity of $\mathcal{N}$ is given by
\begin{align}
  C^\mathrm{ea}(\mathcal{N}) = \sup \{ R : R \text{ achievable for entanglement-assisted communication via $\mathcal{N}$} \}.
\end{align}
\end{definition}

We define entanglement-assistance with respect to copies of the maximally entangled state $\phi_+$. This turns out to be sufficiently general as a consequence of~\cite{LP99,BDHSW14}, which show that arbitrary entangled states as a resource for communication do not increase the communication rates.

Importantly, the entanglement-assisted classical capacity is equal to the quantum mutual information of the channel input and output~\cite{BSST99,BSST02}:
    \[C^\mathrm{ea}(\mathcal{N}) = \sup_{\rho_{AA'}} I(B:A)_{(\id \otimes \mathcal{N})(\rho_{AA'})}\]
where the supremum is over all pure quantum states $\rho_{AA'}\in\mathcal{M}_{d_A}\otimes \mathcal{M}_{d_{A'}}$ and $I(B:A)_{\rho}=H(A)_{\rho}+H(B)_{\rho}-H(AB)_{\rho}$ is the quantum mutual information. In contrast to other notions of capacity, the entanglement-assisted classical capacity is special in the sense that it admits a characterization that is independent of the number of channel copies $n$, i.e., a single-letter characterization, and that its formula is a direct quantum analogue of the completely classical version from Shannon's noisy channel coding theorem~\cite[Section~13]{Shannon48}.

Given a fixed channel, if we want to use it for communication, it is important to find the best possible encoder and decoder in order to achieve the highest possible communication rate.
The encoder map and the decoder map then have to be implemented by circuits, and it is conventionally assumed that these circuits consist of noise-free gates and indeed implement the correct map. This would be justified for modern classical computers, where gate errors are negligible at the timescales relevant for communication protocols~\cite{Nicolaidis11}, but may not be true for quantum computers~\cite{Preskill18}. Due to this feature of noise in quantum computers, the same assumption as in the classical case of error-free gates, which has been made in most communication scenarios studied in quantum Shannon theory so far, may not be realistic in quantum communication. In light of this,~\cite{CMH20} initiated the study of communication under an error model describing noise in the encoding and decoding operations, introducing a notion of capacity for fault-affected encoder and decoder circuits for classical and quantum capacity of a quantum channel. Subsequently,~\cite{BCMH24} extended~\cite{CMH20} to the case of entanglement-assisted classical capacity. 

Here, we summarize the definitions of coding schemes, achievable rates, and capacities under noise assumptions on the encoding and decoding maps as introduced in \cite{CMH20,BCMH24,CFG24}.
To simplify the presentation, we consider a scheme where the entangled resource for communication is available to the sender and the receiver in the code space.
In an alternative model, we may consider perfect maximally entangled physical states that must be encoded into the code spaces via noisy gates and would subsequently be transformed into maximally entangled states in the code space using entanglement distillation 
~\cite{DW03}; in such a model, by fault-tolerantly implementing the scheme from~\cite{DW03} as a subroutine of the coding scheme using the concatenated quantum Hamming code (similarly to \cite{BCMH24}), we see no obstacle to obtaining versions of our main results for this alternative assumption.
For the circuits $\mathcal{E}$ of the encoder map and $\mathcal{D}$ of the decoder map, under an error model $\mathcal{F}(p_0)$ parameterized by physical error rate $p_0$, we write the corresponding faulty circuits as $[\mathcal{E}]_{\mathcal{F}(p_0)}$ and $[\mathcal{D}]_{\mathcal{F}(p_0)}$, respectively.

\begin{definition}[Fault-tolerant entanglement-assisted classical coding scheme]\label{defn:FTEACS}
Let $\mathcal{N}:\mathcal{M}_{d_A} \rightarrow \mathcal{M}_{d_B}$ be a quantum channel, and let $n,m \in \mathbbm{N}$, $R_\mathrm{ea}\geq 0$ and $\epsilon>0$.
For $0\leq p_0 \leq 1$, let $\mathcal{F}(p_0)$ denote an error model.
Then, an $(n,m,\epsilon,R_\mathrm{ea})$-coding scheme for fault-tolerant entanglement-assisted communication consists of quantum circuits $\mathcal{E}:\mathcal{M}_{\mathrm{cl},2}^{\otimes m}\otimes\mathcal{M}_2^{ \otimes  \lfloor nR_\mathrm{ea}\rfloor }\rightarrow\mathcal{M}_{d_A}^{\otimes n}$ and $\mathcal{D}:\mathcal{M}_{d_A}^{\otimes n}\otimes \mathcal{M}_2^{ \otimes \lfloor nR_\mathrm{ea}\rfloor } \rightarrow\mathcal{M}_{\mathrm{cl},2}^{\otimes m}$ such that
\begin{align}\label{eq-coding-error-fidelity}
F\big(X,\big[\mathcal{D}\big]_{\mathcal{F}(p_0)} \circ  \big( (\mathcal{N}^{\otimes n} \circ \big[\mathcal{E}\big]_{\mathcal{F}(p_0)}\big) \otimes \id_2^{\otimes  \lfloor nR_\mathrm{ea}\rfloor }  \big)   (X \otimes \phi_+^{\otimes  \lfloor nR_\mathrm{ea}\rfloor })\big) \geq 1-\epsilon,
\end{align}
where $X=\sum_{x}p(x)\ket{x}\bra{x}$ for any probability distribution $p(x)$ over $m$-bit strings $x\in \{0,1\}^m$.
\end{definition}

\begin{definition}[Fault-tolerant entanglement-assisted classical capacity]
Let $\mathcal{N}:\mathcal{M}_{d_A} \rightarrow \mathcal{M}_{d_B}$ be a quantum channel, and let $R_\mathrm{ea}\geq 0$ be the rate of entanglement-assistance.
For $0\leq p_0 \leq 1$, let $\mathcal{F}(p_0)$ denote an error model.
For some $R_\mathrm{ea}$ and for every $n\in \mathbbm{N}$, consider an $(n,m(n),\epsilon(n),R_\mathrm{ea})$-coding scheme for fault-tolerant entanglement-assisted classical communication under the noise model $\mathcal{F}(p_0)$.
A rate $R\geq 0$ is called achievable for fault-tolerant entanglement-assisted classical communication via the quantum channel $\mathcal{N}$ if there exists $R_\mathrm{ea}\geq 0$ such that $R\leq \limsup_{n\rightarrow \infty} \Big\{ \frac{m(n)}{n} \Big\}$ and $\lim_{n\rightarrow \infty} \epsilon(n) \rightarrow 0$.

The fault-tolerant entanglement-assisted classical capacity of $\mathcal{N}$ is given by
  \begin{equation*}
\begin{split}
  C^\mathrm{ea}_{\mathcal{F}(p_0)}(\mathcal{N}) =\sup &\{R : R \text{ achievable rate for fault-tolerant entanglement-assisted communication via $\mathcal{N}$}\}.
  \end{split}
\end{equation*}
\end{definition}

In principle, any quantum circuits $\mathcal{E}$ and $\mathcal{D}$ may be chosen in Definition~\ref{defn:FTEACS} to define a coding scheme for fault-tolerant entanglement-assisted classical communication. However, only one very specific coding scheme design based on the concatenated 7-qubit Steane code is currently proven to have achievable rates where the noiseless case is recovered as the physical error rate $p_0$ vanishes.
Under the assumptions listed in this section, the following lower bound on the fault-tolerant entanglement-assisted classical capacity of a channel $\mathcal{N}:\mathcal{M}_{d_A}\rightarrow \mathcal{M}_{d_B}$ with $j_1=\log(d_A)$, $j_2=\log(d_B)$ was proven in \cite{BCMH24}:
\begin{equation} \label{eq-bcmh-bound}
    C^\mathrm{ea}_{\mathcal{F}(p_0)}(\mathcal{N})\geq C^\mathrm{ea}(\mathcal{N})-f_\mathrm{CONT}(p_0)-f_\mathrm{AVP}(p_0)\end{equation}
with $f_\mathrm{CONT}(p_0)=8c(j_1+j_2)p_0j_2+(1+4c(j_1+j_2)p_0)h\qty(\frac{4c(j_1+j_2)p_0}{1+4c(j_1+j_2)p_0})$ and $f_\mathrm{AVP}(p_0)=2\sqrt{2{j_2p_0}} \big(2^{j_1+j_2} (j_1+j_2) +1\big) \left|2\log( \frac{2(j_1+j_2)cp_0 }{2^{j_1j_2}})\right|  +2h\qty(\sqrt{2{j_2p_0}} 2^{j_1+j_2} \left|2\log( \frac{2(j_1+j_2)cp_0 }{2^{j_1j_2}})\right| )+ 2 (j_1+j_2)cp_0j_2 =\mathcal{O}(p_0\log p_0)$.

In this work, we will develop new methods for fault-tolerant communication that substantially improve upon this previous lower bound. 
To achieve this, we will develop a construction for coding schemes based on the concatenated quantum Hamming codes introduced in~\cite{yamasaki2022timeefficient}, which we review in the next section.

\subsection{Concatenated quantum Hamming codes}

\label{sec-ft-setting}

In order to design a coding scheme for fault-tolerant communication, we will be using methods from the conventional setting of fault-tolerant quantum computation in order to compile the original circuits for encoding and decoding information into a circuit on physical qubits to simulate the original circuit at the logical level of quantum error-correcting codes. We call this compiled circuit on physical qubits a \textit{fault-tolerant circuit}. The fault-tolerant circuit is composed of a set of physical operations, i.e., preparations, gates, measurements, and wait,  i.e., performing the identity gate $\mathds{1}$.
Our model is as follows:

\begin{enumerate}
  \item \textbf{Circuit-level local stochastic Pauli error model}: Following the previous works~\cite{CMH20,BCMH24}, we assume that the fault-tolerant circuit undergoes a conventional error model, a \textit{local stochastic Pauli error model}. We say that a circuit undergoes a local stochastic Pauli error model if the faults occurring in the circuit satisfy the following: (i) a set $F$ of faulty locations is stochastically specified in such a way that any set $S$ of locations satisfies $\Pr[F\supset S]\leq\prod_{j\in S}p_{0,j}$, where $p_{0,j}\geq 0$ is a parameter associated to each location $j$ of the circuit; (ii) the operation at each faulty location in $F$ is replaced with a multiqubit Pauli channel $\mathcal{S}_j$ that may depend on the location.
  We may write
  \begin{equation}
      p_0\coloneqq\max_{j}\{p_{0,j}\},
  \end{equation}
  where the maximum is taken over all locations $j$ in the circuit.
  We call $p_{0,j}$ the physical error rate of the location $j$ and, for simplicity, may also call $p_0$ the physical error rate if obvious from the context. We call $p_{0,j}$ the physical error rate of the location $j$. We will usually refer to this error model as the \emph{local stochastic Pauli error model with noise parameter $p_0$} and, for simplicity, may also call $p_0$ the physical error rate if obvious from the context.
  \item \textbf{Parallel quantum operation}: The operations on physical qubits can be performed in parallel, where each qubit is involved in at most one operation at a time.
  \item \textbf{Faultless classical computation}: Conditioned on the outcome of measurements on physical qubits, we can perform classical computation to change the subsequent operations on the qubits adaptively. For simplicity of analysis, we assume that classical computation is faultless.
  \item \textbf{Parallel classical computation but nonzero runtime}: Similar to the operations on physical qubits, classical computation is assumed to be performed in parallel so that the depth of classical circuits determines the runtime of classical computation, as in~\cite{yamasaki2022timeefficient}.
    In our setting, errors may also occur on these wait operations according to the local stochastic Pauli error model.
  \item \textbf{Allocation of qubits and bits}: We allocate qubits for the preparation and deallocate them for the measurement. The measurement allocates bits for classical computation to process the measurement outcome. After using the measurement outcomes, we discard the outcomes and deallocate the bits. The number of qubits and bits at a time is that which has already been allocated before and is not yet deallocated at the time.
  \item \textbf{No geometrical constraint}: We assume that two-qubit gates are applicable to any pair of physical qubits without geometrical constraint as in photonic systems~\cite{yamasaki2020polylogoverhead,PhysRevResearch.2.023270,PhysRevLett.123.200502,PhysRevLett.112.120504,doi:10.1063/1.5100160} and distributed architectures~\cite{Y5,Wehnereaam9288,7562346}.
\end{enumerate}

Note that~\cite{G,yamasaki2022timeefficient} give techniques for implementing circuits for the local stochastic error model, which our work follows; by contrast,~\cite{CMH20,BCMH24} considered the IID Pauli error model, which is a special case of the general local stochastic error model.

We now summarize the construction of concatenated codes $\mathcal{Q}^{(L)}$ obtained by concatenating quantum Hamming codes as proposed in~\cite{yamasaki2022timeefficient}.
With $N_r\coloneqq 2^r-1$, let $\mathcal{C}_r$ ($r=2,3,\ldots$) denote the family of $[N_r,N_r-r,3]$ classical Hamming codes~\cite{6772729}, which have $N_r$-bit block length, $(N_r-r)$-bit dimension, and distance $3$~\cite{10.5555/552386}.
Quantum Hamming codes $\mathcal{Q}_{\text{H},r}$ ($r=3,4,\ldots$)~\cite{PhysRevA.54.4741} are Calderbank-Shor-Steane (CSS) codes~\cite{N4,PhysRevA.54.1098,S3} of $\mathcal{C}_r$ over its dual code $\mathcal{C}_r^\perp$, which are in a family of $[[N_r,K_r,3]]$ codes having $N_r=2^r-1$ physical qubits, $K_r\coloneqq N_r-2r$ logical qubits, and distance $3$.
We use $\mathcal{Q}_{l}=\mathcal{Q}_{\text{H},r_l}$ with parameter $r_l\coloneqq l+2$ for the concatenation at each level $l\in\{1,\ldots,L\}$, which leads to the sequence $\mathcal{Q}_{\text{H},3},\mathcal{Q}_{\text{H},4},\ldots$ of quantum Hamming codes starting from Steane's $7$-qubit code $\mathcal{Q}_{\text{H},3}$~\cite{PhysRevLett.77.793,S3} at level $1$, a $[[15,7,3]] $ code at level $2$, and a $[[31,21,3]]$ code at level $3$. For simplicity of notation, we will from now on write $K_l=K_{r_l}$ and $N_l=N_{r_l}$. Overall, the rate of logical qubits and physical qubits converges to $\frac{K_{l}}{N_{l}}\to 1$ as $l\to\infty$.

The concatenated code $\mathcal{Q}^{(L)}$ is constructed by iteratively defining $\mathcal{Q}^{(l)}$ for $l=L,L-1,\ldots,1$.
Let $K^{(l)}$ and $N^{(l)}$ denote the numbers of logical and physical qubits of $\mathcal{Q}^{(l)}$, respectively, which turn out to be $K^{(l)}=\prod_{l^{\prime}=1}^{l}K_{{l^{\prime}}}$, $N^{(l)}=\prod_{l^{\prime}=1}^{l}N_{{l^{\prime}}}$.
We define a \textit{level-$l$ register} as a collection of $K^{(l)}$ logical qubits of $\mathcal{Q}^{(l)}$,
where a physical qubit is referred to as a \textit{level-$0$ register} (or level-$0$ qubit). This collection of $K^{(l)}$ logical qubits will then serve as a unit that is treated as a "level-$l$ qubit" for the next level.

To define the concatenated code, we present the relation between $\mathcal{Q}^{(l)}$ and $\mathcal{Q}^{(l-1)}$ for each $l$.  To be precise, we will be using $N_l$ copies of $Q^{(l-1)}$ as building blocks of $\mc Q^{(l)}$.
To begin, we divide the $K^{(l)}=K^{(l-1)}\times K_{l}$ logical qubits in the level-$l$ register for $\mathcal{Q}^{(l)}$ into $K^{(l-1)}$ blocks of $K_{l}$ qubits.
Then, for each $k^{(l-1)}\in\{1,\ldots, K^{(l-1)}\}$, picking the $k^{(l-1)}$th qubit from each of the $N_{l}$ level-$(l-1)$ registers for $\mathcal{Q}^{(l-1)}$, we encode the $K_{l}$ qubits in the $k^{(l-1)}$th block into the picked $N_{l}$ qubits as the logical qubits of the quantum code $\mathcal{Q}_{l}$.

We design this concatenated code so that even if all the qubits in one of the level-$(l-1)$ registers suffer from correlated errors, we can still recover the encoded level-$l$ register. In this way, we obtain an $[[N^{(L)},K^{(L)},D^{(L)}]]$ concatenated code $\mathcal{Q}^{(L)}$ at the concatenation level $L$ with parameters
\begin{align}
    N^{(L)}=\prod_{l^\prime=1}^{L}N_{l^\prime},\\
    K^{(L)}=\prod_{l^\prime=1}^{L}K_{l^\prime},\\
    D^{(L)}=\prod_{l^\prime=1}^{L}D_{l^\prime}.
\end{align}
This construction of a concatenated code has a non-vanishing overhead rate~\cite{yamasaki2022timeefficient} of
\begin{equation}
    \frac{N^{(L)}}{K^{(L)}}=O(1)\quad\text{as $L\to\infty$}.
\end{equation}

As a concrete example, this leads to a $[[3255,147,27]]$ code $\mc Q^{(3)}$, a $[[105,7,9]]$ code $\mc Q^{(2)}$, and recovers the $7$-qubit Steane code $Q^{(1)}=Q_1=Q_{H,3}$ for one level of concatenation.\footnote{In comparison, the concatenated $7$-qubit Steane code at level $3$ is a $[[343,1,3]]$ code; this code has a rate of $343$ physical qubits per logical qubit, whereas the concatenated Hamming code at level $3$ merely needs $\frac{3255}{147}\approx 22.14$ physical qubits per logical qubit.} In order to construct $Q^{(3)}$, associate the $K^{(3)}=147$ logical qubits (which form a level-$3$ register) with $K^{(2)}=7$ blocks of $K_3=21$ qubits. Then, the $K_3=21$ logical qubits (of $Q^{(3)}$) from the first block are associated with $N_3=31$ qubits by using the Hamming code $Q_3$, and the same is repeated for the other $6$ blocks. Importantly, however, this encoding of $21$ qubits into $31$ qubits is performed in a specific way in order to counteract the correlated error that would otherwise arise. Specifically, we take $31$ copies of $\mc Q^{(2)}$, noting that each of them encodes $7$ logical qubits of $\mc Q^{(2)}$ (one level-$2$ register) into $105$ physical qubits, from which we obtain a total of $3255=31\times105$ physical qubits. Now, the $K_3=21$ logical qubits (of $Q^{(3)}$) in each block are encoded into $31$ physical qubits where each physical qubit originates from a different copy of $\mc Q^{(2)}$. This structure, where the encoding based on $Q_l$ is distributed across copies of $Q^{(l-1)}$, is integral for ensuring that potential correlated errors at one level do not spread in a correlated way to next level.

This fault-tolerant protocol implementing a circuit with $\mathcal{Q}^{(L)}$ suppresses the logical error rate in the following way:

\begin{theorem}[{Threshold theorem, \cite[Section~E]{yamasaki2022timeefficient}}]
For any $L\in\mathbbm{N}$,
let $\mathcal{Q}^{(L)}$ be the $[[N^{(L)},K^{(L)},D^{(L)}]]$ concatenated quantum Hamming code with concatenation level $L$ and threshold $p_{\textrm{th}}$. Let $U:\mathcal{M}_2^{\otimes K^{(L)}}\to\mathcal{M}_2^{\otimes K^{(L)}}$ be a $K^{(L)}$-qubit quantum circuit, and let $U_L:\mathcal{M}_2^{\otimes N^{(L)}}\to\mathcal{M}_2^{\otimes N^{(L)}}$ denote its implementation in $\mathcal{Q}^{(L)}$ (which is a $N^{(L)}$-qubit quantum circuit). Suppose that $\mathcal{F}(p_0)$ denotes the local stochastic Pauli error model with noise parameter $p_0$.
Then, there exists a threshold $p_\mathrm{th}>0$ such that for any $L$ and for any $0\leq p_0< p_\mathrm{th}$, we have
\[\big\| U-[U_{L}]_{\mathcal{F}(p_0)}\big\|_{1\rightarrow 1} \leq C \qty(\frac{p_0}{p_\mathrm{th}})^{2^L}p_\mathrm{th}  |\Loc(U)|\]
with some constant $C>0$.
 \end{theorem}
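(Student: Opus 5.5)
The proof follows the extended-rectangle (exRec) method of Aharonov--Ben-Or, Knill--Laflamme--Zurek and Aliferis--Gottesman--Preskill, adapted to the concatenated structure of $\mathcal{Q}^{(L)}$ in which each level-$l$ code block is spread over $N_l$ distinct level-$(l-1)$ registers.

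\textbf{Step 1: gadgets and exRecs.} First I will fix, for each level $l$, a level-$l$ gadget for every elementary operation in the gate set. These are preparation, Clifford gates and non-Clifford gates realised via magic-state gadgets, measurement, wait, and a Knill- or Steane-type error-correction gadget $\EC$ for $\mathcal{Q}_l$. Each gadget acts on level-$(l-1)$ registers and is implemented with level-$(l-1)$ gadgets. Level-$l$ exRecs are then defined as a gate gadget together with its leading and trailing $\EC$ gadgets, and an exRec is called \emph{good} if it contains at most one faulty level-$(l-1)$ sub-exRec.

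\textbf{Step 2: base-level correctness.} Since every $\mathcal{Q}_l$ has distance $3$ and the gadgets are fault tolerant in the usual sense, a good level-$1$ exRec is correct. Concretely, an ideal decoder pushed through the exRec yields the ideal gate composed with the ideal decoder. This rests on the standard properties that the EC output lies within one error of the codespace, and that a gadget with one fault propagates at most one error per output block. The essential point for the many-logical-qubit code is that a single fault, even one that corrupts an entire level-$(l-1)$ register, affects only one of the $N_l$ positions of each $\mathcal{Q}_l$ block, because block $k^{(l-1)}$ draws its qubits from distinct registers. So a bad level-$(l-1)$ sub-exRec acts as a single, possibly correlated, error at level $l$, which distance $3$ corrects. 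Applying this inductively gives a level-reduction lemma: a level-$l$ circuit built from level-$(l-1)$ exRecs is equivalent to the corresponding level-$(l-1)$ circuit in which each bad sub-exRec is replaced by a faulty location. The replacement still obeys a local stochastic model, with an effective rate $p_l$.

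\textbf{Step 3: recursion.} A level-$l$ exRec is bad only if it contains two independent bad sub-exRecs. Counting pairs gives
\[ p_l\le \binom{A_l}{2}p_{l-1}^2 ,\]
where $A_l$ is the maximal number of level-$(l-1)$ locations in a level-$l$ exRec. Here is the main obstacle I expect. Unlike the Steane case, $A_l$ grows with $l$, polynomially in $N_l=2^{l+2}-1$, so the usual constant threshold $1/\binom{A}{2}$ does not apply directly. I will handle this by writing $A_l\le c\,2^{\alpha l}$ and unrolling the recursion
\[ p_l \le \Big(\prod_{j\le l} c^{2^{l-j}}2^{2\alpha j 2^{l-j}}\Big)p_0^{2^l}. \]
Because $\sum_j j2^{-j}<\infty$, the prefactor is bounded by $B^{2^l}$ for a constant $B$, which yields $p_l\le p_{\mathrm{th}}(p_0/p_{\mathrm{th}})^{2^l}$ with $p_{\mathrm{th}}=1/B$ after adjusting constants. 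The adjustment involves a careful choice of the offset so that the $l=0$ case and the local stochastic correlation assumptions hold. Non-Clifford gadgets using magic-state distillation must be bounded in size at each level so that they fit into the $A_l$ estimate.

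\textbf{Step 4: conclusion.} At level $L$, each location of $U$ corresponds to one level-$L$ exRec. If every exRec is good, the ideal decoder outputs exactly $U$. A union bound over the $|\Loc(U)|$ exRecs therefore gives failure probability at most $C\,p_{\mathrm{th}}(p_0/p_{\mathrm{th}})^{2^L}|\Loc(U)|$, and this bounds the $1\to1$ distance after composition with ideal encoding and decoding. The constant $C$ absorbs the ratio between exRec count and location count.
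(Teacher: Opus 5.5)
Your proposal is correct and follows essentially the same route as the cited proof that the paper recaps in its level-conversion argument. The shared ingredients are the $*$-decoder/ExRec level reduction with ``good'' meaning at most $t_l=1$ fault, the interleaving of each $\mathcal{Q}_l$ block across $N_l$ distinct level-$(l-1)$ registers so that a failed register acts as a single-position error per block, and the recursion $p_l\le\binom{A_l}{2}p_{l-1}^2$ with $A_l\le 2^{\alpha l}$, unrolled into the doubly exponential bound. One point you leave implicit is that $A_l$ must also count the level-$(l-1)$ synchronization waits the protocol inserts around the EC gadgets; these stay polynomial in $N_l$, so your bound is unaffected.
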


\subsection{Main results}
\label{sec-strategy}

Standard techniques from quantum fault-tolerance often cannot be directly applied to the problem of communication or only allow for substantially reduced communication rates. Strategies with one (large) fault-tolerant implementation, where the communication channel is considered part of the circuit noise, will only enable communication for very specific channels: channels that are very close to the identity, i.e. with noise below the threshold, and channels for which the code space of the error-correcting code is well suited (see also Remark~\ref{remark-multiplicative-bound}). In general, however, we are interested in setups where the noise affecting the communication channel may differ significantly from the noise affecting the local computation. In particular, the noise affecting the \emph{channel} does not have to be below the fault-tolerance threshold in the context of our results.

To obtain fault-tolerant communication rates for such quantum channels, we will present a construction of a coding scheme with a fault-tolerant implementation in the concatenated quantum Hamming code in terms of interfaced circuits. Interfaced circuits, which we will introduce in Section~\ref{sec-interfaced-circuits}, implement the original circuit using the quantum error-correcting code together with additional circuits that serve as the interface between information in the code space and the physical system. This implementation is similar to the construction in~\cite{CMH20,BCMH24} for the concatenated 7-qubit Steane code; however, there are some notable differences.
Firstly, we will construct the interfaces as gadgets of the code at each concatenation level, which leads to a more direct compilation and analysis of the fault-tolerant protocol; secondly, the concatenated quantum Hamming code asymptotically achieves constant space-overhead by encoding a larger number of physical qubits, leading to a block-wise structure on the physical qubit systems.

As has been noted for some examples of quantum error-correcting codes in~\cite{gottesman2014faulttolerant}, \cite[Lemma~11]{8555154},~\cite{MGHHLPP14}, \cite[Theorem~III.3]{CMH20} and \cite{CFG24}, for circuits with quantum input and output of physical qubits, techniques from fault-tolerance do not guarantee that the logical error can be suppressed arbitrarily. This also manifests for the concatenated quantum Hamming code, and we show here that (similar to previous results for other codes) the logical error on a single-qubit input and/or output is proportional to the single gate error in our level-conversion theorem for the concatenated quantum Hamming code (Corollary~\ref{thm-threshold-interfaced}).

Using interfaced circuits implemented in the concatenated quantum Hamming code $\mathcal{Q}^{(L)}$, we construct a fault-tolerant entanglement-assisted coding scheme for a given channel $\mathcal{N}$ based on a coding scheme for communication via an effective channel with correlated channel uses with \emph{no} errors on the encoder and decoder. We thereby demonstrate that the strategy employed in~\cite{CMH20,BCMH24} is not limited to the concatenated 7-qubit Steane code, but rather generalizes to other concatenated codes. In addition, the constant space-overhead of the fault-tolerant protocol we consider limits the amount of correlation between the channel uses due to the small number of syndrome qubits. Owing to this limited correlation, the effective scenario in our case of using the constant-space-overhead protocol is simpler than the one obtained in~\cite{CMH20,BCMH24}. For this reason, our construction of a coding scheme in Theorem \ref{thm-cap-lower-bound} builds more directly on previous results, allows for a substantially tighter version of an important bound from~\cite{CMH20,BCMH24}, and thereby achieves higher rates for fault-tolerant entanglement-assisted communication with this code.

\section{Interfaced circuits for concatenated quantum Hamming codes}
\label{sec-interfaced-circuits}

In this section, we introduce and analyze a fault-tolerant protocol using concatenated quantum Hamming codes to implement quantum circuits with quantum input and output.
In Section~\ref{sec-definition-interfaced-circuits}, we introduce the notion of an interfaced circuit with quantum input and output. In Section~\ref{sec-compile}, we present our protocol that compiles a given circuit with quantum input and output into a circuit implemented by physical qubits.
Then, in Section~\ref{sec-gadgets}, we clarify the explicit construction of the relevant gadgets and give conditions for the fault tolerance of gadgets. In Section~\ref{sec-correctness-conditions} and \ref{sec-level-conversion}, we analyze the effective error rates in the implemented circuits and establish the level-conversion theorem for interfaced circuits.

\subsection{Definition: Interfaced circuit}
\label{sec-definition-interfaced-circuits}

In our protocol, the qubits of the original circuit are represented by the qubits in level-$L$ registers of the concatenated quantum Hamming code $\mathcal{Q}^{(L)}$ at a concatenation level $L$, which leads to the level-$L$ circuit composed of level-$L$ operations acting on level-$L$ registers.
The required concatenation level $L$ will be selected according to our desired logical error rate.
Then, we will derive the physical implementation of the intended level-$L$ circuit with a level-$0$ circuit by recursively compiling the level-$l$ circuit into the corresponding level-$(l-1)$ circuit for $l=L, L-1, \ldots, 1$.

However, the protocol here is different from that of~\cite{yamasaki2022timeefficient} in that~\cite{yamasaki2022timeefficient} only considers computation with classical inputs and outputs; by contrast, the compiled circuits here are intended to be able to conduct communication tasks through physical quantum channels and thus need to have interfaces between the logical and physical levels for quantum inputs and outputs.
Thus, we introduce an extended definition of level-$l$ circuits to include interfaces to lower concatenation levels, which we call \textit{interfaced circuits}. Then, we present the protocol for compiling such interfaced circuits.

Given an original circuit $U:\mathcal{M}_2^{\otimes K^{(L)}} \rightarrow \mathcal{M}_2^{\otimes K^{(L)}}$ representing a $K^{(L)}$-qubit unitary transformation, our protocol will provide an implementation containing a circuit $U_{Q^{(L)}}:\mathcal{M}_2^{\otimes N^{(L)}}\rightarrow \mathcal{M}_2^{\otimes N^{(L)}}$ on $N^{(L)}$ physical qubits which represents the conventional implementation of $U$ at level $L$ of $\mc Q^{(L)}$, as well as interfaces to unencoded qubits.

We write the implementation of the $K^{(L)}$-qubit unitary $U$ in $Q^{(l)}$ at the levels $l=0,1,...,L-1,L$ as follows:
\begin{align}
    U_{Q^{(0)}}&:\mathcal{M}_2^{\otimes K^{(L)}}\rightarrow \mathcal{M}_2^{\otimes K^{(L)}}
    \\
    U_{Q^{(1)}}&:\mathcal{M}_2^{\otimes N^{(L-1)}K_L}\rightarrow \mathcal{M}_2^{\otimes N^{(L-1)}K_L}\\ &\vdots\notag\\
    U_{Q^{(L)}}&:\mathcal{M}_2^{\otimes N^{(L)}}\rightarrow :\mathcal{M}_2^{\otimes N^{(L)}}
\end{align}
All of the above circuits are equivalent to $U$, but acting on registers rather than qubits. Correspondingly, the implementation of $U$ at level $L$ can be understood as both of the following objects:
\begin{align}
    U_{Q^{(L)}}&:\qty(\text{level-$L$ register})\rightarrow :\qty(\text{level-$L$ register})
    \\
    U_{Q^{(L)}}&:\qty(\text{level-$0$ register})^{\otimes N^{(L)}}\rightarrow :\qty(\text{level-$0$ register})^{\otimes N^{(L)}}
\end{align}
although its composition in terms of level-$l$ gates might look quite different. The conversion between these equivalent objects is outlined in the compilation procedure that we present later. In the circuit diagrams that follow, the input lines depict level-$l$ registers rather than physical qubits, and the gates depict level-$l$ gates rather than their physical implementation.

In the following, we define the objects $\Dec^{(L)}:\mathcal{M}_2^{\otimes N^{(L)}}\rightarrow \mathcal{M}_2^{\otimes K^{(L)}}$ and $\Enc^{(L)}:\mathcal{M}_2^{\otimes K^{(L)}}\rightarrow \mathcal{M}_2^{\otimes N^{(L)}}$ which map between level $0$ and level $L$ of $\mc Q^{(L)}$, and refer to the following object
\begin{align}
    \Dec^{(L)} \circ U_{Q^{(L)}} \circ \Enc^{(L)} &: \mathcal{M}_2^{\otimes K^{(L)}}\rightarrow \mathcal{M}_2^{\otimes K^{(L)}},
\end{align}
as an interfaced circuit.

More precisely, we construct $\Dec^{(L)}$ and $\Enc^{(L)}$ in a recursive fashion from intermediate interfaces between levels $l$ and $l-1$ for $l=L,L-1,...,2,1$. This construction is similar to how $\mc Q^{(L)}$ is constructed; the intermediate encoding interface corresponds to a circuit which performs the step where the qubits are encoded from level $l-1$ to $l$ according to the quantum Hamming code $\mc Q_l$. In the course of this, we use the same mechanism to redistribute the qubits to control the spread of errors.

For each $l\in\{1,\ldots,L\}$, we define a level-$l$ intermediate encoding interface $\Enc_l$ and a level-$l$ intermediate decoding interface $\Dec_l$. The level-$l$ intermediate encoding interface takes an input of $K_l$ copies of a level-$(l-1)$ register of $Q^{(l-1)}$, and outputs a level-$l$ register. Conceptually, the encoding interface thus maps between registers; as a circuit, $\Enc_l$ takes an input of $K_l$ copies of $K^{(l-1)}$ logical qubits encoded in $N^{(l-1)}$ physical qubits (= a total of $K_lN^{(l-1)}$ physical qubits) to an output of $N_l$ copies of a level-$(l-1)$ register of $Q^{(l-1)}$ by encoding $K^{(l)}$ logical qubits in $N^{(l)}$ physical qubits according to the construction of the concatenated code $\mathcal{Q}^{(l)}$ (= a total of $N^{(l)}$ physical qubits). In other words, we define the intermediate encoding interface as a circuit
\begin{align}
\Enc_l:\mathcal{M}_2^{\otimes K_l N^{(l-1)}} \rightarrow \mathcal{M}_2^{\otimes N_l N^{(l-1)}}  \end{align}
which, in the language of registers, is equivalent to 
\begin{align}
    &\Enc_l:\qty(\text{level-$(l-1)$ register})^{\otimes K_l}\to\qty(\text{level-$l$ register})^{\otimes 1}.
\end{align}

The level-$l$ intermediate decoding interface $\Dec_l$, on  the other hand, takes an input of a level-$l$ register and outputs $K_l$ copies of level-$(l-1)$ registers, defined as
\begin{align}\Dec_l:\mathcal{M}_2^{\otimes N_l N^{(l-1)}} \rightarrow \mathcal{M}_2^{\otimes K_l N^{(l-1)}}.\end{align}

 Thus, in total, the intermediate interfaces $\Enc_l$ and $\Dec_l$  transform between ($K_l$ copies of) level-$(l-1)$ registers and level-$l$ registers, i.e. encoding and decoding between $Q^{(l-1)}$ and $Q^{(l)}$. We note that this process corresponds to the way that the code $\mathcal{Q}^{(l)}$ is constructed from $\mathcal{Q}^{(l-1)}$, where the encoding goes from $(l-1)$ to $l$.

Then, the full interfaces for connecting $K^{(L)}$ qubits to $N^{(L)}$ qubits are constructed recursively from the level-wise intermediate interfaces as
\begin{align}
    \Enc^{(L)} &=\Enc_L \circ \qty(\Enc^{(L-1)})^{\otimes  K_L} \\&=
    \Enc_L \circ \qty(\Enc_{L-1})^{\otimes K_L} \circ \cdots \circ \qty(\Enc_2)^{\otimes K_L K_{L-1} \cdots K_3} \circ \qty(\Enc_1)^{\otimes K_L K_{L-1} \cdots K_2}
    \\&
    =\Enc_L \circ \qty(\Enc_{L-1})^{\otimes  \frac{K^{(L)}}{K^{(L-1)}}} \circ \cdots \circ \qty(\Enc_2)^{\otimes  \frac{K^{(L)}}{K^{(2)}}}  \circ \qty(\Enc_1)^{\otimes  \frac{K^{(L)}}{K^{(1)}}} 
\end{align}
and 

\begin{align}
    \Dec^{(L)} &=\qty(\Dec^{(L-1)})^{\otimes K_L}\circ\Dec_L,\\&=
    \qty(\Dec_1)^{\otimes  \frac{K^{(L)}}{K^{(1)}}} \circ \qty(\Dec_2)^{\otimes  \frac{K^{(L)}}{K^{(2)}}} \circ \cdots \circ \qty(\Dec_{L-1})^{\otimes  \frac{K^{(L)}}{K^{(L-1)}}}  \circ \Dec_L.
\end{align}

In total, $\Dec^{(L)}$ takes an $N^{(L)}$-qubit state (input to the intermediate interface $\Dec_L$) to a $K^{(L)}$-qubit state (output of $K^{(L)}$ copies of the intermediate interface $\Dec_1$, where each one outputs a single qubit), and $\Enc^{(L)}$ does the opposite. In other words, $\Dec^{(L)}$ takes a $K^{(L)}$-qubit logical state, which is encoded in $N^{(L)}$ physical qubits, to an unencoded $K^{(L)}$-qubit state, thus serving as an interface\footnote{The definitions we introduce here for interfaced circuits could, without loss of generality, also be adapted to define ``interfaced circuits with interfaces to intermediate levels'', e.g. applying a circuit encoded in $Q^{(L_1)}$ to an input state encoded in $Q^{(L_2)}$ for $L_2\leq L_1$ by constructing interfaces between (copies of) $Q^{(L_2)}$ and $Q^{(L_1)}$. In this work, we will usually refer to level-$L$ interfaced circuits as circuits with interfaces between level $0$ and level $L$.} between level $0$ and level $L$ of the code $\mc Q^{(L)}$.

We illustrate an example of an interfaced circuit for level $2$ ($L=2$) in Fig.~\ref{fig:level_l_circuit}.

\begin{figure}[h!]
    \centering
    \includegraphics[height=5cm]{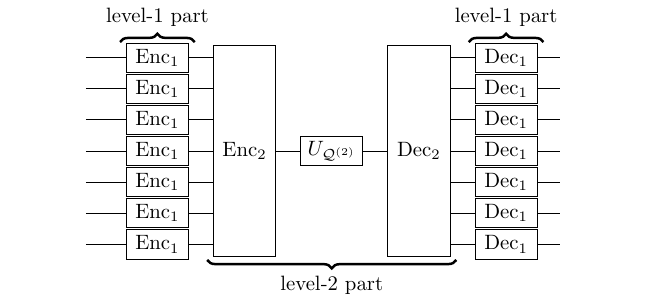}
    \caption{A level-$2$ interfaced circuit for a unitary $U$, implemented in $\mathcal{Q}^{(2)}$ and concatenated with circuits $\Enc_l$ and $\Dec_l$ mapping between the levels for $l=1,2$. The original circuit $U$ is a 7-qubit unitary, and $U_{\mc Q^{(2)}}$ is its implementation in $\mc Q^{(2)}$ in terms of a total of 105 physical qubits. We emphasize that the wires in this diagram do not correspond to physical qubits, but to registers: $\Enc_1$ takes a level-$0$ register (i.e. a physical qubit) and maps it to a level-$1$ register; $\Enc_2$ maps $K_2=7$ level-$1$ registers to a level-$2$ register, $U_{\mc Q^{(2)}}$ takes a level-$2$ register to a a level-$2$ register.}
    \label{fig:level_l_circuit}
\end{figure}

As shown in Fig.~\ref{fig:level_l_circuit}, a level-$l$ interfaced circuit may include level-$l^\prime$ parts for $l^\prime\in\{0,\ldots,l\}$.
Level-$l$ elementary operations act on level-$l$ registers in a level-$l$ part.
A level-$l$ encoding interface is a map from level-$(l-1)$ registers of a level-$(l-1)$ part into a level-$l$ register of a level-$l$ part, and a level-$l$ decoding interface from a level-$l$ register of a level-$l$ part into level-$(l-1)$ registers of a level-$(l-1)$ part.
For $l^\prime\in\{1,\ldots,l\}$, a level-$l^\prime$ part of a level-$l$ circuit is a part starting from level-$l^\prime$ preparation operations and level-$l^\prime$ encoding interfaces, ending by level-$l^\prime$ measurement operations and level-$l^\prime$ decoding interfaces, and composed of the level-$l^\prime$ elementary operations and level-$(l^\prime+1)$ parts in the middle (which may, recursively, contain level-$(l^\prime+2),\cdots,L$ parts).
A level-$0$ part refers to a circuit composed of level-$0$ (i.e., physical) operations, which may have open-end wires to input and output states of physical qubits (apart from closed-end wires starting from preparations or ending with measurements). Note that a level-$l^\prime$ part for $l^\prime\geq 1$ does not have such open-end wires. In a level-$l^\prime$ part of a level-$l$ circuit, the level-$l^\prime$ elementary operations and interfaces are called level-$l^\prime$ \textit{operation locations and interface locations}, respectively.

\subsection{Compilation for interfaced circuits}

\label{sec-compile}

For each level $l=0, \ldots, L$, we will construct a set of level-$l$ elementary operations (preparations, gates, and measurements) and, additionally, introduce level-$l$ encoding and decoding interfaces; then, we use these elementary operations and interfaces to write a level-$l$ (interfaced) circuit.

More precisely, for each level-$l$ elementary operation, we will define a corresponding level-$l$ \textit{gadget}, which is a level-$(l-1)$ circuit intended to carry out the logical elementary operation on the encoded level-$l$ registers. For each of the level-$l$ encoding and decoding interfaces, we will also define the corresponding level-$l$ encoding and decoding gadget, which is a level-$(l-1)$ circuit intended to carry out the corresponding encoding and decoding maps, respectively, between the level-$(l-1)$ registers and the encoded level-$l$ registers. We will further define a level-$l$ error-correction (EC) gadget, which is a level-$(l-1)$ circuit intended to carry out quantum error correction for an encoded level-$l$ register.

For a level-$l$ gadget $G_l$, we let $|G_l|$ denote the number of level-$(l-1)$ operation locations in the gadget.
To implement some of the gates required for universal quantum computation, our protocol may use gate teleportation that combines multiple level-$l$ elementary operations to implement a gate.
For simplicity of presentation, we may represent these multiple level-$l$ elementary operations collectively as a level-$l$ \textit{abbreviation}, which is a level-$l$ circuit intended to carry out the gate on the level-$l$ registers.
A level-$l$ circuit that includes level-$l$ abbreviations is identified with one composed only of level-$l$ elementary operations obtained by expanding all the level-$l$ abbreviations.

For the concatenated quantum Hamming codes,~\cite{yamasaki2022timeefficient} has constructed a set of level-$l$ elementary operations, gadgets, and abbreviations, which we summarize below. 
Note that the gadget for the $S^{\otimes K^{(l)}}$ gate was not present in~\cite{yamasaki2022timeefficient}, but more recently shown in~\cite{tansuwannont2025cliffordgateslogicaltransversality}. We will then introduce additional notations for level-$l$ interfaces.
The set of level-$l$ elementary operations  is denoted as follows:
\begin{align}
  \label{eq:measurement}
  &\text{measurement:}\nonumber\\&\quad\includegraphics{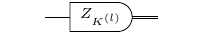},\\
  \label{eq:h_gate}
  &\text{$H$ gate:}\nonumber\\&\quad\includegraphics{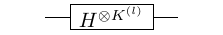},\\
  \label{eq:s_gate}
  &\text{$S$ gate:}\nonumber\\&\quad\includegraphics{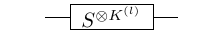},\\
  \label{eq:cnot_gate}
  &\text{\textsc{CNOT} gate:}\nonumber\\&\quad\includegraphics{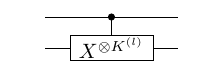},\\
  \label{eq:cz_gate}
  &\text{$CZ$ gate:}\nonumber\\&\quad\includegraphics{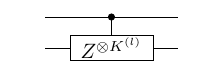},\\
  \label{eq:pauli_gate}
  &\text{Pauli gate:}\nonumber\\&\quad\includegraphics{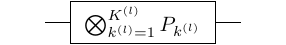},\\
  \label{eq:initial_state_preparation}
  &\text{initial-state preparation:}\nonumber\\&\quad\includegraphics{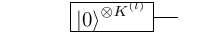},\\
  \label{eq:clifford_state_preparation}
  &\text{Clifford-state preparation:}\nonumber\\&\quad\includegraphics{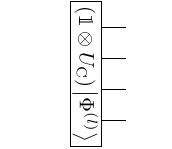},\\
  \label{eq:magic_state_preparation}
  &\text{magic-state preparation:}\nonumber\\&\quad\includegraphics{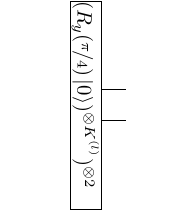}.
\end{align}
In these notations, a solid wire represents a level-$l$ register, and double-line wires represent $K^{(l)}$ classical bits.
The level-$l$ measurement operation performs measurements in $Z$ basis $\{\ket{0},\ket{1}\}$ of all the $K^{(l)}$ qubits in a level-$l$ register, outputting the $K^{(l)}$-bit measurement outcome, where we let $Z_{K^{(l)}}$ denote a label for representing this measurement.
The level-$l$ $H$-, \textsc{CNOT}-, and $CZ$-gate operations perform $H^{\otimes K^{(l)}}$, $\textsc{CNOT}^{\otimes K^{(l)}}$, and $CZ^{\otimes K^{(l)}}$ gates acting on all the qubits in level-$l$ registers.

The level-$l$ Pauli-gate operation performs a tensor product of arbitrary Pauli gates
\begin{equation}
  \label{eq:pauli_l}
  \bigotimes_{k^{(l)}=1}^{K^{(l)}}P_{k^{(l)}},
\end{equation}
where $P_{k^{(l)}}\in\{X, Z, Y,\mathds{1}\}$ is a single-qubit Pauli (or identity) gate acting on the ${k^{(l)}}$th qubit of the level-$l$ register, and the global phase is ignored.
The level-$l$ initial-state preparation operation prepares a state $\ket{0}^{\otimes K^{(l)}}$ of $K^{(l)}$ qubits in a level-$l$ register.
The level-$l$ Clifford-state preparation operation prepares a state $\left(\mathbbm{1}^{B_1 B_2}\otimes U_\mathrm{C}^{B_3 B_4}\right)\ket{\Phi^{(l)}}^{B_1 B_2 B_3 B_4}$ of four level-$l$ registers $B_1,B_2,B_3,B_4$ aligned from top to bottom in Eq.~\eqref{eq:clifford_state_preparation}, where $U_\mathrm{C}^{B_3 B_4}$ is an arbitrary Clifford unitary operator acting on qubits in $B_3$ and $B_4$,
\begin{align}
  \label{eq:phi}
  &\ket{\Phi^{(l)}}^{B_1 B_2 B_3 B_4}=\ket{\Phi}^{B_1 B_3}\otimes\ket{\Phi}^{B_2 B_4},
\end{align}
and
\begin{align}
  &\ket{\Phi}^{B_j B_{j^\prime}}=\frac{1}{\sqrt{2^{K^{(l)}}}}\sum_{m=0}^{2^{K^{(l)}}-1}\ket{m}^{B_j}\otimes\ket{m}^{B_{j^\prime}},\nonumber\\
  &\quad(j,j^\prime)\in\{(1,3),(2,4)\}.
\end{align}
Here, the superscripts represent the registers to which states and operators belong.
The level-$l$ magic-state preparation operation prepares a state ${(R_y(\nicefrac{\pi}{4})\ket{0})}^{\otimes K^{(l)}}$ of $K^{(l)}$ qubits in each of two level-$l$ registers $B_1,B_2$ aligned from top to bottom in Eq.~\eqref{eq:magic_state_preparation}.
For simplicity of notation, the level-$l$ preparation, gate, and measurement operations may be collectively labelled as $\ket{\psi_l}$, $U_l$, and $M_l$, respectively.
The notations on elementary operations may also be used to show the corresponding gadgets; in particular, whenever a level-$l$ elementary operation is depicted as the one acting on $N_{l}$ level-$(l-1)$ registers instead of each level-$l$ register, this notation represents the corresponding level-$l$ gadget.
For these gadgets, we will use the construction from~\cite{yamasaki2022timeefficient} in this work.

In addition, we introduce the level-$l$ encoding and decoding interfaces between level $l$ and level $(l-1)$
which were not previously considered in~\cite{yamasaki2022timeefficient}. 

These interfaces are represented by
\begin{align}
  \label{eq:enc}
  &\text{encoding:}\nonumber\\&\quad\includegraphics{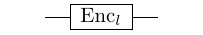},\\
  \label{eq:dec}
  &\text{decoding:}\nonumber\\&\quad\includegraphics{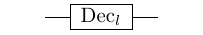}.
\end{align}
In these notations, a solid wire represents a level-$l$ register, and a thin wire collectively represents $K_l$ level-$(l-1)$ registers (which may be written as $K_l$ different wires if necessary).
The thin wire may also be written as a solid wire if the level-$(l-1)$ part of the circuit connected to the level-$l$ interface includes other level-$(l-1)$ operations. The notations on interfaces may also be used to show the corresponding gadgets; in particular, whenever a level-$l$ interface is depicted as the one between $K_{l}$ level-$(l-1)$ registers and $N_{l}$ level-$(l-1)$ registers instead of $K_{l}$ level-$(l-1)$ registers and a level-$l$ register, this notation represents the corresponding level-$l$ gadget.
The construction of these gadgets for interfaces will be given in Section~\ref{sec-gadgets}.

Apart from these elementary operations, interfaces, and the corresponding gadgets, the level-$l$ error-correction (EC) gadget is labelled as $\EC_l$ and denoted by
\begin{align}
  \label{eq:error_correction}
  &\text{error correction:}\nonumber\\
  &\quad\includegraphics{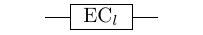}.
\end{align}
In addition, the following level-$l$ abbreviations are used:
\begin{align}
  \label{eq:clifford_abbreviation}
  &\text{two-register Clifford gate:}\nonumber\\&\quad\includegraphics{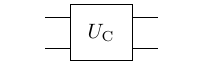},\\
  \label{eq:r_y_abbreviation}
  &\text{$R_y(\pm\nicefrac{\pi}{4})$ gate:}\nonumber\\&\quad\includegraphics{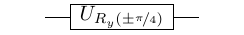},\\
  \label{eq:tr_abbreviation}
  &\text{trace:}\nonumber\\&\quad\includegraphics{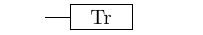},\\
  &\text{wait (arbitrary length):}\nonumber\\&\quad\includegraphics{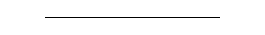},
\end{align}
which are abbreviated notations of combinations of level-$l$ elementary operations with functions described as follows.
The level-$l$ two-register Clifford-gate abbreviation applies an arbitrary Clifford unitary $U_\mathrm{C}$ on the two level-$l$ registers, which is implemented by gate teleportation in~\cite{yamasaki2022timeefficient} using level-$l$ elementary operations.
The level-$l$ $R_y(\pm\nicefrac{\pi}{4})$-gate abbreviation applies $U_{R_y(\pm\nicefrac{\pi}{4})}$ gates on the level-$l$ register, where $U_{R_y(\pm\nicefrac{\pi}{4})}$ is an arbitrary tensor product of $R_y(\nicefrac{\pi}{4})$, $R_y(-\nicefrac{\pi}{4})$, and $\mathds{1}$ and this abbreviation is implemented by gate teleportation in~\cite{yamasaki2022timeefficient} using level-$l$ elementary operations .
The level-$l$ trace abbreviation discards the input quantum state by performing level-$l$ measurement operations followed by forgetting the measurement outcome.
The level-$l$ wait abbreviation performs the identity operator on a level-$l$ register, which will be regarded as a special case of performing a sequence of the level-$l$ Pauli-gate operations of arbitrary length (from zero to many); we may also explicitly write a level-$l$ wait abbreviation as
\begin{align}
  &\text{wait (arbitrary length):}\nonumber\\&\quad\includegraphics{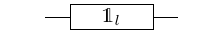}.
\end{align}
In our protocol, we use the construction in~\cite{yamasaki2022timeefficient} for the EC gadgets and the abbreviations.

\begin{figure}[htbp]
    \centering
    \includegraphics[scale=0.79]{figures/Compile-new-1.pdf}\\
    \hspace{0.5cm}\\
    {\Large{$\Downarrow$}}\\
    \hspace{0.5cm}\\
    \includegraphics[scale=0.79]{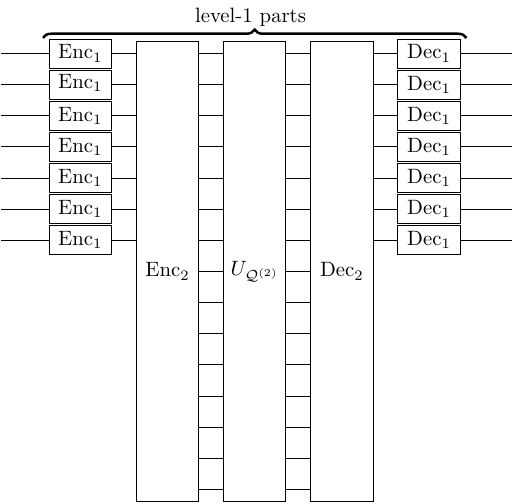}\\
    \hspace{0.5cm}\\
    {\Large{$\Downarrow$}}\\
    \hspace{0.5cm}\\
    \includegraphics[scale=0.79]{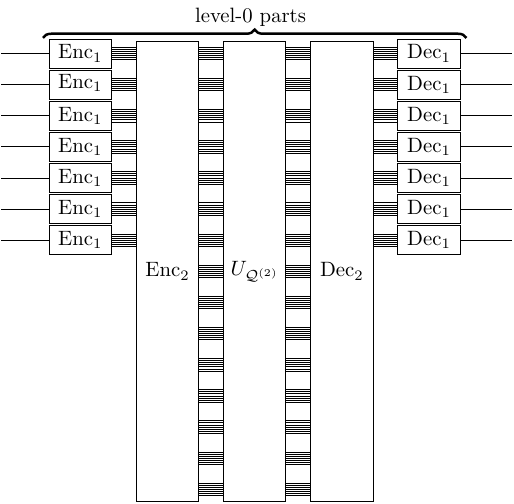}
    \caption{Compilation of a level-$2$ interfaced circuit $U$ ($L=2$), where we use the $[[7,1,3]]$ code at concentenation level $1$, and the $[[15,7,3]]$ code at level $2$. For each $l\in\{L,\ldots,1\}$, we recursively compile the level-$l$ circuit into the level-$(l-1)$ circuit by replacing every level-$l$ operation and interface location in the level-$l$ parts of the level-$l$ circuit with its corresponding level-$l$ gadget and by inserting the level-$l$ EC gadgets between all the adjacent pairs of two level-$l$ operations in $U^{Q^{(L)}}$. Then, the level-$1$ version of the circuit is obtained as the level-$2$ circuit circuit written in terms of level-$1$ gadgets and level-$1$ interfaces. Taking the interfaced circuit from Fig.~\ref{fig:level_l_circuit}, $U_{\mc Q^{(2)}}$ in the first figure is understood as a unitary on level-$2$ registers (i.e. the wires going in and out of $U_{\mc Q^{(2)}}$ represent one level-$2$ register); in the second picture, it represents a unitary on $15$ level-$1$ registers; in the third picture, it represents the unitary on the total 105 physical qubits. These unitaries and interfaces at lower levels are obtained from the higher levels by the compilation procedure outlined below.
    }
    \label{fig:circuit_conversion}
\end{figure}

Using these gadgets, we now present a recursive procedure to compile the level-$l$ interfaced circuit into a level-$(l-1)$ interfaced circuit for each $l=L,L-1,\ldots,1$, as illustrated in Fig.~\ref{fig:circuit_conversion} for L=2.
Let $U:\mc M_2^{\otimes K^{(l)}} \to \mc M_2^{\otimes K^{(l)}}$ be some quantum circuit. Then, $\Dec^{(l)} \circ U_{Q^{(l)}} \circ \Enc^{(l)}$ denotes the level-$l$ interfaced circuit.
Note that 
\[\Dec^{(l)} \circ U_{Q^{(l)}} \circ \Enc^{(l)}=(\Dec^{(l-1)})^{\otimes K_l} \circ  (\Dec_l \circ U_{Q^{(l)}} \circ \Enc_l) \circ (\Enc^{(l-1)} )^{\otimes K_l}.\]
The output of $(\Enc^{(l-1)})^{\otimes K_l}$ is given by $K_l N^{(l-1)}$ physical qubits, which by $\Enc_l$ are transformed into $N^{(l)}$ physical qubits, on which $U_{\mc Q^{(l)}} $ acts.

Now we use the following compilation procedure to compile the level-$l$ circuit $(\Dec_l \circ U_{Q^{(l)}}  \circ \Enc_l)$ in terms of a level-$(l-1)$ circuit.
For each level-$l$ register in the level-$l$ circuit, we use, in the corresponding level-$(l-1)$ circuit, a set of $K_{l}$ level-$(l-1)$ registers and further add a constant number of auxiliary level-$(l-1)$ registers per set for implementing the gadgets while maintaining the constant space overhead as in~\cite{yamasaki2022timeefficient}.
The corresponding level-$(l-1)$ circuit is given by replacing every level-$l$ operation and interface location in the level-$l$ parts of the level-$l$ circuit with its corresponding level-$l$ gadget, followed by inserting the level-$l$ EC gadgets between all the adjacent pairs of gadgets of two level-$l$ operations (see Fig.~\ref{fig:circuit_conversion}).
Note that the level-$l$ EC gadgets are not inserted between adjacent pairs of gadgets of a level-$l$ operation and a level-$l$ interface, nor between adjacent pairs of two level-$l$ interfaces (e.g., $\Enc_l$ followed immediately by $\Dec_l$)\footnote{To show that EC gadgets are not necessary around interfaces, we will introduce and use the correctness conditions $\textbf{Enc+U+EC}$ and $\textbf{EC+U+Dec}$ in Eqs.~\eqref{eq:correct-u-and-enc} and~\eqref{eq:correct-u-and-dec}.}.
To maintain the constant space overhead, we also insert waits before and after the EC gadgets, as described in~\cite{yamasaki2022timeefficient}. As a result, we obtain a level-$(l-1)$ interfaced circuit: this circuit acts on $K_l$ level-$(l-1)$ registers, and has $K_l$ interfaces $\Enc^{(l-1)}$ and $\Dec^{(l-1)}$ between level $0$ and level $(l-1)$.

By performing this procedure recursively for each level, we obtain the level-$0$ interfaced circuit.
Just after performing the above recursive procedure, the level-$0$ circuit is composed of the level-$0$ abbreviations and the level-$0$ elementary operations, which may use auxiliary level-$0$ registers by definition.
However, at level $0$, we are allowed to perform operations directly on physical qubits.
Thus, we substitute the level-$0$ two-register Clifford-gate abbreviations in the level-$0$ circuit with direct applications of the two-qubit Clifford gates and the level-$0$ $R_y(\pm\nicefrac{\pi}{4})$-gate abbreviations with the one-qubit $R_y(\pm\nicefrac{\pi}{4})$ gates, using the operations on physical qubits rather than performing the gate teleportation.
After this substitution for the level-$0$ abbreviations, we also substitute the remaining level-$0$ elementary operations with the corresponding operations on the physical qubits.
As a result, we do not use the auxiliary level-$0$ registers.
Hence after these substitutions, we remove the auxiliary level-$0$ registers from the level-$0$ circuit, which yields the fault-tolerant circuit on physical qubits to be executed in our fault-tolerant protocol.

\subsection{Level-conversion theorem for interfaced circuits}

In this section, we define properties of the parts of a circuit implemented in the concatenated quantum Hamming code that ensure that the implementation fulfills a notion of ``fault-tolerance''. For non-interfaced circuits, these concepts (properties of the gadgets or the associated extended rectangles) are well-known in the literature but sometimes appear under subtly different names. For this reason, and for the convenience of the reader, we summarize briefly in Table~\ref{tab:theproperties} how we call the various properties in this work and where they are defined explicitly in the following sections.
\begin{table}[h!]
    \centering
    \begin{tabular}{l|l|l}
    \hline
        Condition & Can apply to & Meaning \\ \hline
      fault-tolerant & gate gadget & fulfills the transformation rules from Eqs.~\eqref{eq-ft-cond1}-\eqref{eq-ft-cond2} \\
        valid & interface gadget & fulfills the transformation rules from Eqs.~\eqref{eq:validenc}-\eqref{eq:validdec} \\
        good & gate ExRec & has less than $t_l$ faults ($\sum_i s_i\leq t_l$)\\
        good &interface gadget &has no faults ($s=0$) \\correct& ExRec (gate or interface)& fulfills the transformation rules from Eqs.~\eqref{eq:correctness_prep_dec}-\eqref{eq:correctness_enc}\\
    \hline
    \end{tabular}
    \caption{Summary of the names and where to find the conditions we use to show a level-conversion theorem for interfaced circuits in the concatenated quantum Hamming code. The details are given in Sections~\ref{sec-gadgets} and \ref{sec-correctness-conditions}. In our analysis, we will later make use of these conditions by specifying when and how they imply one another. (For example, we will make use of the fact that a good ExRec is correct, that a good truncated ExRec is correct, and that a good and valid interface gadget is correct. These relations between the conditions play a crucial role in the proof of Theorem~\ref{thm-level-conversion} (the level-conversion theorem), which is one of the main results of this work.)
}
    \label{tab:theproperties}
\end{table}

\subsubsection{Fault tolerance conditions for interfaced circuits}
\label{sec-gadgets}

We present requirements for fault tolerance and a construction of the level-$l$ gadgets for level-$l$ elementary operations and interfaces for each concatenation level $l\in\{0,1,\ldots,L\}$. 

As for the level-$l$ interfaces,~\cite{CMH20} also provides a construction of gadgets for interfaces, but we here present a different, simpler construction for the corresponding gadgets; moreover, we also discuss a refined set of their required properties for the analysis of their fault tolerance.

We define the required properties of gadgets for fault tolerance.
To define such requirements, as in the existing analyses of fault-tolerant protocols with concatenated codes~\cite{G,yamasaki2022timeefficient}, we introduce a $*$-decoder and $r$-filters for our protocol.
In~\cite{G,yamasaki2022timeefficient}, the conditions for fault-tolerant gadgets are given in the form of equivalence relations between circuits including the gadget, the $r$-filters, the $*$-decoder, and the ideal intended elementary operation, conditioned on the number of faulty locations in the gadget.
In particular,~\cite{G} provides concrete definitions for the $*$-decoder and the $r$-filters for $[[N, 1, 2t + 1]]$ codes for $0\leqq r\leqq t$, and~\cite{yamasaki2022timeefficient} provides those for a $[[N, K, 3]]$ codes; by contrast, we here provide a general definition for $[[N, K, 2t + 1]]$ codes.
The proof of the threshold theorem in~\cite{G,yamasaki2022timeefficient} does not rely on specific definitions but on the equivalence relations alone; thus, we will give appropriate definitions of a $*$-decoder and $r$-filters so that the gadgets proposed here should satisfy effectively the same set of equivalence relations as those in~\cite{G,yamasaki2022timeefficient}.
In addition, we further introduce the notion of a $*$-encoder, an inverse of the $*$-decoder, which will be useful for our analysis.

We begin by defining the $r$-filter for the $[[N_l,K_l,D_l=2t_l+1]]$ code $\mathcal{Q}_{l}$ for $r\leq t_l$.
A level-$l$ register is encoded into $N_{l}$ level-$(l-1)$ registers,
and the $k^{(l-1)}$th qubit of the $n$th level-$(l-1)$ register is labeled $(n,k^{(l-1)})$ with $k^{(l-1)}\in\{1,\ldots,K^{(l-1)}\}$ and $n\in\{1,\ldots,N_{l}\}$.
Let
\begin{equation}
  \mathcal{H}_{n,k^{(l-1)}}\cong\mathbb{C}^2
\end{equation}
be the Hilbert space of the qubit $(n,k^{(l-1)})$ in a level-$(l-1)$ register.
Define
\begin{align}
  \mathcal{H}_{k^{(l-1)}}&\coloneqq\bigotimes_{n=1}^{N_{l}} \mathcal{H}_{n,k^{(l-1)}},\\
  \label{eq:H}
  \mathcal{H}&\coloneqq\bigotimes_{k^{(l-1)}=1}^{K^{(l-1)}} \mathcal{H}_{k^{(l-1)}},
\end{align}
where $\mathcal{H}$ is the whole space of the $N_{l}$ level-$(l-1)$ registers.
The quantum code $\mathcal{Q}_{l}$ defines a code space for each $k^{(l-1)}$
\begin{equation}
  \mathcal{H}_{k^{(l-1)}}^{\mathrm{code}} \subset \mathcal{H}_{k^{(l-1)}}.
\end{equation}
Then the whole code space in the $N_{l}$ level-$(l-1)$ registers is given by
\begin{equation}
  \mathcal{H}^{\mathrm{code}}\coloneqq\bigotimes_{k^{(l-1)}=1}^{K^{(l-1)}} \mathcal{H}_{k^{(l-1)}}^{\mathrm{code}} \subset \mathcal{H}.
\end{equation}
The level-$l$ $0$-filter acting on the $N_{l}$ level-$(l-1)$ registers is defined as a projector
\begin{equation}
  \label{eq:0filter}
  \Pi_l^{(0)}~\text{onto the subspace $\mathcal{H}^{\mathrm{code}}\subset\mathcal{H}$}.
\end{equation}
Suppose that a codeword $\ket{\psi_{k^{(l-1)}}} \in \mathcal{H}_{k^{(l-1)}}^{\mathrm{code}}$ suffers from an error represented by a Pauli operator $P_{k^{(l-1)}} \in \mathcal{P}_{N_{l}}$ acting on $\mathcal{H}_{k^{(l-1)}}$.
If the weight of $P_{k^{(l-1)}}$ is at most $r$, the code $\mathcal{Q}_{l}$ can correct the error.
As a whole, if a codeword $\ket{\psi} \in \mathcal{H}^{\mathrm{code}}$ suffers from an error represented by $P=\bigotimes_{k^{(l-1)}=1}^{K^{(l-1)}} P_{k^{(l-1)}}$ with the weight of each $P_{k^{(l-1)}}$ at most $r$, the $K^{(l-1)}$ blocks of the code $\mathcal{Q}_{l}$ can correct the error $P$.
For $r>0$, the level-$l$ $r$-filter $\Pi^{(r)}$ acting on the $N_{l}$ level-$(l-1)$ registers is defined as a projector 
\begin{equation}
  \label{eq:rfilter}
    \Pi_l^{(r)}~\text{onto the subspace spanned by all the states in the form $P\ket{\psi}$,}
\end{equation}
where each $P_{k^{(l-1)}}$ for $P=\bigotimes_{k^{(l-1)}=1}^{K^{(l-1)}} P_{k^{(l-1)}}$ has a weight at most $r$, and $\ket{\psi} \in \mathcal{H}^{\mathrm{code}}$.
We write a level-$l$ $r$-filter as
\begin{equation}
    \includegraphics{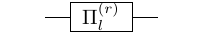}.
\end{equation}

For $K^{(l-1)}$ blocks of the $[[N_l,K_l,D_l=2t_l+1]]$ code $\mathcal{Q}_{l}$ used at concatenation level $l$, the level-$l$ $*$-decoder \[\Dec_{l}^*: \big( \mathcal{M}_2^{\otimes N^{(l-1)}}\big)^{\otimes N_l} \to \big( \mathcal{M}_2^{\otimes N^{(l-1)}} \big)^{\otimes K_l} \otimes\big( \mathcal{M}_2^{\otimes N^{(l-1)}}\big)^{\otimes (N_l-K_l)}\] is defined as a Clifford unitary transformation mapping a level-$l$ register  into $K_l$ level-$(l-1)$-registers and a syndrome state on the remaining $(N_l-K_l)$ $(l-1)$-registers. Note that, as a map, this means that $\Dec^*_l$ acts on a total number of $N^{(l)}=N_lN^{(l-1)}$ qubits (which are $N_l$ $l-1$-registers, i.e. $N_l$ copies of $K^{(l-1)}$ logical qubits encoded in $N^{(l-1)}$ physical qubits) and outputs a total of $K_lN^{(l-1)}$ qubits (which are $K_l$ copies of  $K^{(l-1)}$ logical qubits encoded in $N^{(l-1)}$ physical qubits) and the syndrome state, which is an $(N_l-K_l)N^{(l-1)}$-qubit state.

We depict the level-$l$ $*$-decoder as
\begin{equation}
\label{eq:stardec}
    \includegraphics{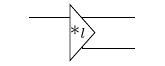},
\end{equation}
where the left solid wire represents the $N_l$ level-$(l-1)$ registers, the top-right thin wire represents the level-$l$ register, and the bottom-right thin wire represents the syndrome state living on $(N_l-K_l)$ $(l-1)$-registers.
The thin wires may also be written as solid wires if their meaning is obvious from the context.

Similarly, we define the level-$l$ $*$-encoder 
\[\Enc_{l}^*:\big( \mathcal{M}_2^{\otimes N^{(l-1)}} \big)^{\otimes K_l} \otimes\big( \mathcal{M}_2^{\otimes N^{(l-1)}}\big)^{\otimes (N_l-K_l)}\to  \big( \mathcal{M}_2^{\otimes N^{(l-1)}}\big)^{\otimes N_l} \]
as the inverse Clifford unitary transformation of the level-$l$ $*$-decoder, which is depicted as
\begin{equation}
\label{eq:starenc}
    \includegraphics{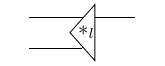}.
\end{equation}
A level-$l$ ideal decoder refers to the level-$l$ $*$-decoder with tracing out the syndrome qubits, and a level-$l$ ideal encoder refers to the level-$l$ $*$-encoder with inputting zero syndromes $\ket{\boldsymbol{0}}=\ket{0}\otimes\ket{0}\otimes\cdots\otimes\ket{0}$ to all the syndrome qubits.
The level-$l$ ideal decoder and encoder are depicted in the same way as Eq.~\eqref{eq:stardec} and Eq.~\eqref{eq:starenc}, respectively, with the thin wires representing the syndrome qubits omitted.
By definition, the $*$-decoder and the $*$-encoder satisfy the identity relations
\begin{align}
\label{eq:idenA}
    \includegraphics{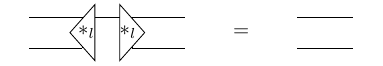},\\
\label{eq:idenB}
    \includegraphics{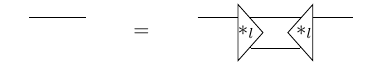}.
\end{align}

In this work, we will use the following notation for circuits subject to faults according to the local stochastic Pauli error model.
For any level-$l$ gadget, we will use a superscript to represent the number of faulty level-$(l-1)$ locations in the gadget.
For example, for a level-$l$ preparation operation labeled $\ket{\psi_l}$, the state of level-$(l-1)$ registers prepared by the corresponding gadget with $s$ faulty locations is denoted by $\ket{\psi_l}^{(s)}$.
For a level-$l$ gate operation $U_l$, the state transformation on level-$(l-1)$ registers implemented by the corresponding gadget with $s$ faulty locations is denoted by $U_l^{(s)}$.
In particular, a sequence of wait operations including $s$ faulty locations is denoted by $\mathds{1}_l^{(s)}$.
For a level-$l$ measurement operation $M_l$, the measurement of level-$(l-1)$ registers implemented by the corresponding gadget with $s$ faulty locations is denoted by $M_l^{(s)}$.
For level-$l$ encoding and decoding interfaces $\Enc_l$ and $\Dec_l$, the state transformations on level-$(l-1)$ registers implemented by the corresponding gadgets with $s$ faulty locations are denoted by $\Enc_l^{(s)}$ and $\Dec_l^{(s)}$, respectively.

Using these notations, for level-$l$ operations with the $[[N_l,K_l,D_l=2t_l+1]]$ code $\mathcal{Q}_{l}$, we define the fault tolerance conditions on the gadgets and subsequently define that for the level-$l$ EC gadget, which are given in such a way that the argument in~\cite{G,yamasaki2022timeefficient} carries over.
Roughly speaking, when a gadget includes many faulty locations in general, the states of the logical and syndrome qubits obtained from the output state of the gadget via the $*$-decoder may be affected.
We represent such an effect of errors by a CPTP map $\mathcal{S}$.
Under the local stochastic Pauli error model, if the overall circuit is a stabilizer circuit, the CPTP map $\mathcal{S}$ would be a random Pauli channel; however, in the presence of non-Clifford gates, $\mathcal S$ may be a more general CPTP map and need not be Pauli. Because errors at faulty locations in the local stochastic Pauli error model do not involve quantum memory in the environment, this general CPTP map nonetheless still acts only on syndrome or logical qubits.

In a slight abuse of notation, in the definitions below, $\mathcal{S}$ does not always refer to the same map in each definition; in fact, the map $\mathcal{S}$ may sometimes refer to a map on syndrome qubits for one code block (for the gadgets for preparation, measurement, and single-register gates) or to a map on syndrome qubits for two code blocks (for the gadgets for two-register gates).
 
The fault-tolerant conditions of the gadgets require that if the number of faulty locations is below $t_l$, the weight of the errors does not spread, and the logical qubits obtained from the $*$-decoder should remain correctable and thus unaffected by $\mathcal{S}$ while the syndrome qubits may still be affected.

In particular, a level-$l$ preparation gadget preparing the state $\ket{\psi_l}$ is fault-tolerant if it satisfies the following two properties:
\begin{align}\label{eq-ft-cond1}
    &\text{\textbf{Prep A\@:} when $s\leq t_l$}\nonumber\\
    &\includegraphics{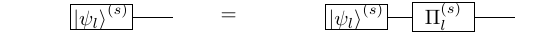},\\
    &\text{\textbf{Prep B\@:} when $s\leq t_l$}\nonumber\\
    &\includegraphics{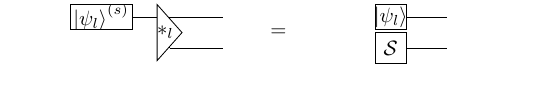},
\end{align}
where $\mathcal{S}$ is some quantum channel acting on the syndrome qubits that does not affect the logical state $\ket{\psi_l}$.
For a level-$l$ preparation gadget for multiple encoded registers, the fault-tolerant conditions are defined in the same way, e.g., in the case of two encoded registers
\begin{align}
    &\text{\textbf{Prep A\@:} when $s\leq t_l$}\nonumber\\
    &\includegraphics{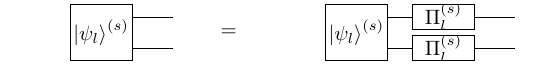},\\
    &\text{\textbf{Prep B\@:} when $s\leq t_l$}\nonumber\\
    &\includegraphics{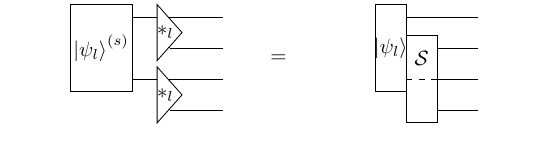},
\end{align}
where $\mathcal{S}$ is some quantum channel acting on two syndrome registers that does not affect the logical state.
A level-$l$ single-register gate gadget $U_l$ is fault-tolerant if it satisfies the following two properties: 

\begin{align}
    &\text{\textbf{Gate A\@:} when $s+r\leq t_l$}\nonumber\\
    &\includegraphics{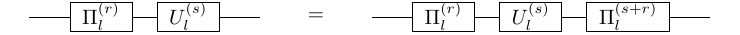},\\
    &\text{\textbf{Gate B\@:} when $s+r\leq t_l$}\nonumber\\
    &\includegraphics{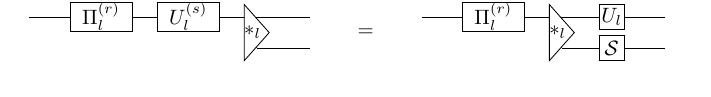}.
\end{align}
For a level-$l$ two-register gate gadget, the fault-tolerant conditions are defined as
\begin{align}
    &\text{\textbf{Gate A\@:} when $s+\sum_i r_i\leq t_l$}\nonumber\\
    &\includegraphics{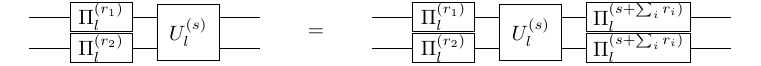},\\
    &\text{\textbf{Gate B\@:} when $s+\sum_i r_i\leq t_l$}\nonumber\\
    &\includegraphics{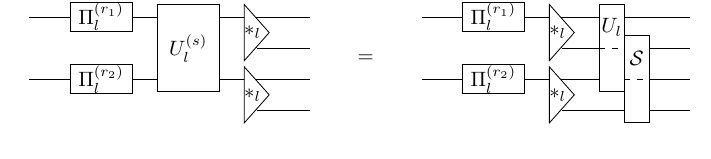},
\end{align}
where $\sum_i r_i$ is the sum over all the input encoded registers.
A level-$l$ measurement gadget $M_l$ is fault-tolerant if it satisfies the following property:
\begin{align}
    &\text{\textbf{Meas\@:} when $r+s\leq t_l$}\nonumber\\
    &\includegraphics{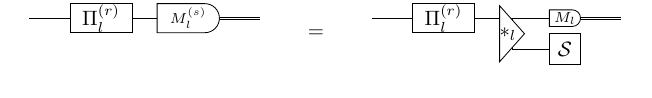}.
\end{align}
A level-$l$ EC gadget $\mathrm{EC}_l$ is fault-tolerant if it satisfies the following two properties:
\begin{align}
    &\text{\textbf{EC A\@:} when $s\leq t_l$}\nonumber\\
    &\includegraphics{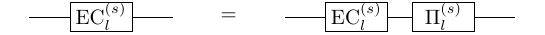},\\
    &\text{\textbf{EC B\@:} when $r+s\leq t_l$}\nonumber\\
    &\includegraphics{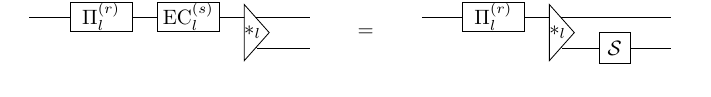}.\label{eq-ft-cond2}
\end{align}

For level-$l$ interfaces, we newly introduce the required properties of the corresponding gadgets.
The properties introduced here are different from the conditions in the previous work~\cite{CMH20,BCMH24} in that the level-$l$ gadgets for the level-$l$ interfaces here are composed of level-$(l-1)$ operations, and thus, their required properties are given in terms of those imposed on the level-$(l-1)$ operations while~\cite{CMH20,BCMH24} directly analyzes the level-$0$ (i.e., physical) implementations.
Our definition makes it possible to follow the conventional level-conversion argument in~\cite{G,yamasaki2022timeefficient}.
We call the required conditions on the level-$l$ gadgets for the interfaces \textit{validity conditions}. 
A level-$l$ encoding gadget $\mathrm{Enc}_l$ is valid if it satisfies the following property:
\begin{align}
\label{eq:validenc}
    &\text{\textbf{Enc\@:}}\nonumber\\
    &\includegraphics{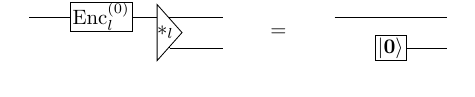}
\end{align}
where $\mathrm{Enc}_l^{(0)}$ has no faulty level-$(l-1)$ location, and $\ket{\boldsymbol{0}}=\ket{0}\otimes\ket{0}\otimes\cdots\ket{0}$ sets all syndrome qubits to zero. 
Also, a level-$l$ decoding gadget $\mathrm{Dec}_l$ is valid if it satisfies the following property:
\begin{align}
\label{eq:validdec}
    &\text{\textbf{Dec\@:}}\nonumber\\
    &\includegraphics{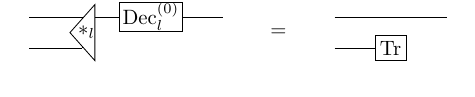}
\end{align}
where $\mathrm{Dec}_l^{(0)}$ has no faulty level-$(l-1)$ location.

\begin{figure}[t]
    \centering
    \includegraphics[width=7.0in]{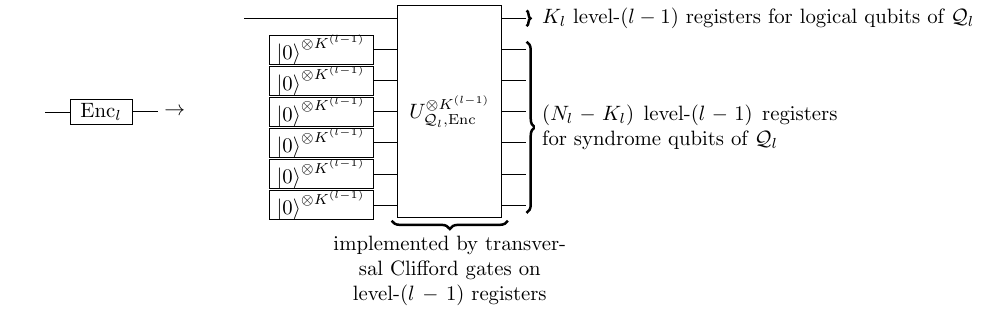}
    \includegraphics[width=7.0in]{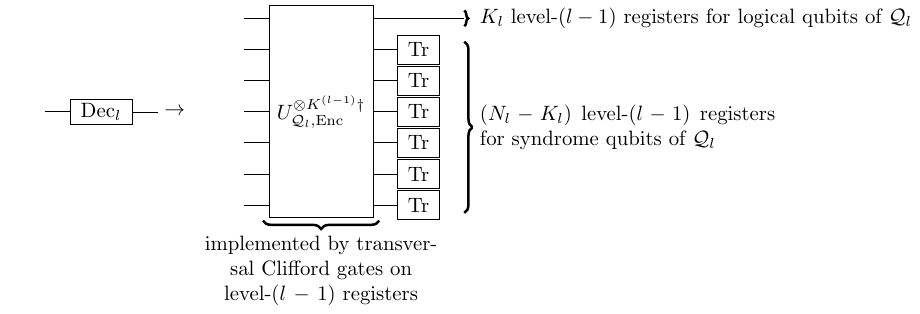}
    \caption{The level-$l$ encoding gadget at the top and the level-$l$ decoding gadget at the bottom.}
    \label{fig:interfaces}
\end{figure}

The explicit construction of the fault-tolerant gadgets in terms of the universal gate set for preparation, measurement, single-register gates, 2-register gates and EC is given in~\cite{yamasaki2022timeefficient}, and our protocol here uses the same gadgets for those operations.
As for the interfaces, we give the construction of level-$l$ encoding and decoding gadgets in Fig.~\ref{fig:interfaces}.
In particular, the construction in Fig.~\ref{fig:interfaces} uses the unitary
\begin{equation}
\label{eq:enc_unitary}
    U_{\mathcal{Q}_l,\mathrm{Enc}}^{\otimes K^{(l-1)}},
\end{equation}
where for any $K_l$-qubit state $\ket{\psi}$, $U_{\mathcal{Q}_l,\mathrm{Enc}}$ transforms $U_{\mathcal{Q}_l,\mathrm{Enc}}(\ket{\psi}\otimes\ket{0}^{\otimes(N_l-K_l)})=\ket{\overline\psi}$, and $\ket{\overline\psi}$ is an $N_l$-qubit state representing the logical state $\ket{\psi}$ of the code $\mathcal{Q}_l$. 
As shown in~\cite{PhysRevA.56.76,yamasaki2022timeefficient}, this $N_l$-qubit Clifford unitary $U_{\mathcal{Q}_l,\mathrm{Enc}}$ can be implemented by a stabilizer circuit composed of at most $O(N_l(N_l-K_l))$ $H$, \textsc{CNOT}, $CZ$, and $Z$ gates, i.e., with $O(N_l(N_l-K_l))$ depth on $N_l$ qubits.
Since we have $N_l-K_l=O(\log(N_l))$ for the quantum Hamming codes. the depth of the circuit implementing $U_{\mathcal{Q}_l,\mathrm{Enc}}$ is $O(N_l\log(N_l))$.
Then, the $K^{(l-1)}$ copies of the unitary in Eq.~\eqref{eq:enc_unitary} used in the construction of Fig.~\ref{fig:interfaces} are also implementable by a $O(N_l\log(N_l))$-depth circuit composed of level-$(l-1)$ $H$-, \textsc{CNOT}-, $CZ$-, and Pauli-gate operations acting on $K^{(l-1)}$ qubits of each of the $N_l$ level-$(l-1)$ registers.
Since these unitaries are dominant in  the construction of Fig.~\ref{fig:interfaces}, the number of level-$(l-1)$ operations in each of the level-$l$ encoding and decoding gadgets is bounded by
\begin{align}
\label{eq:dec_size}
    |\Dec_l|=O\qty(N_l^2\log(N_l)),\\
\label{eq:enc_size}
    |\Enc_l|=O\qty(N_l^2\log(N_l)).
\end{align}

\subsubsection{Correctness conditions for interfaced circuits}
\label{sec-correctness-conditions}

Using the gadgets defined in Section~\ref{sec-gadgets}, we obtain the fault-tolerant circuit from the level-$L$ circuit using the compilation procedure described in Fig.~\ref{fig:circuit_conversion}, and we will later prove level-conversion and threshold theorems for this interfaced circuit to study the effective error rates.
For this purpose, as in the conventional analysis of the threshold theorem for protocols with concatenated codes in~\cite{G,yamasaki2022timeefficient}, we count the number of faults in the extended rectangles, or ExRecs for short. 
As introduced in~\cite{yamasaki2022timeefficient}, given a level-$l$ circuit and the corresponding level-$(l-1)$ circuit, for each level-$l$ operation location of the level-$l$ circuit, the level-$l$ ExRec corresponding to the level-$l$ operation location is defined as a part of the level-$(l-1)$ circuit consisting of the level-$l$ gadget corresponding to the level-$l$ location, all the level-$l$ EC gadgets placed between the location and the adjacent locations, and the level-$(l-1)$ wait abbreviations inserted between these gadgets.
A difference from~\cite{G} arises from the fact that our protocol, as well as the protocol in~\cite{yamasaki2022timeefficient}, inserts level-$(l-1)$ wait abbreviations between the gadgets to synchronize the timing of logical gates.
Due to this difference, a level-$l$ ExRec for a level-$l$ location in our protocol includes the level-$(l-1)$ wait operations between the level-$l$ gadget for the location and each adjacent level-$l$ EC gadget (see also~\cite{yamasaki2022timeefficient} for illustration).
For each level-$l$ ExRec at a level-$l$ operation location, the level-$l$ EC gadgets appearing before the level-$l$ gadget of the level-$l$ location are called leading EC gadgets, and those appearing after are called trailing EC gadgets.

Following~\cite{G,yamasaki2022timeefficient}, we define the correctness of the level-$l$ ExRecs\@.
Roughly speaking, the correct level-$l$ ExRecs implement the logical operations correctly, where the noise at faulty locations affects only the syndrome qubits, and the logical qubits are manipulated as intended.
In our analysis, we introduce two notions of correctness; one is in terms of the level-$l$ $*$-decoder, and the other is in terms of the level-$l$ $*$-encoder.
Whereas the ones using the $*$-decoder are immediately obtained from the conventional definition in~\cite{G} and are also used for analyzing the previous protocol in~\cite{yamasaki2022timeefficient}, the additional ones using the $*$-encoder are also useful for our analysis, as we will see later.

A level-$l$ preparation ExRec is correct for some particular arrangements of faults if there exists a quantum channel $\mc S$ such that
\begin{align}
\label{eq:correctness_prep_dec}
    &\text{\textbf{Prep Dec$^*$\@:}}\nonumber\\
    &\includegraphics{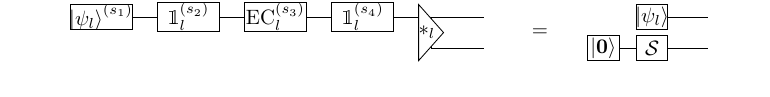},\\
\label{eq:correctness_prep_enc}
    &\text{\textbf{Prep Enc$^*$\@:}}\nonumber\\
    &\includegraphics{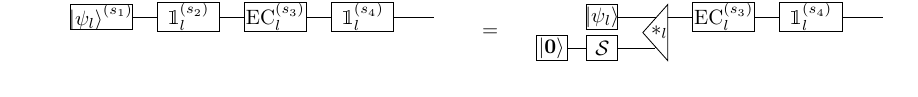},
\end{align}
where $\mathds{1}_l$ represents the wait locations in the ExRec inserted before and after the EC gadgets in our protocol.
For a level-$l$ preparation ExRec for multiple encoded registers, the correctness is defined in the same way, e.g., in the case of two encoded registers:
\begin{align}
\label{eq:correctness_prep2_dec}
    &\text{\textbf{Prep Dec$^*$\@:}}\nonumber\\
    &\includegraphics{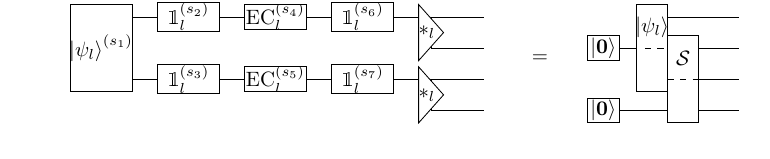},\\
\label{eq:correctness_prep2_enc}
    &\text{\textbf{Prep Enc$^*$\@:}}\nonumber\\
    &\includegraphics{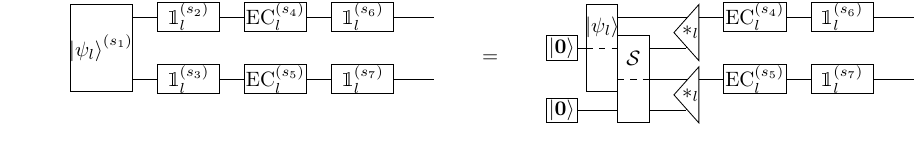}.
\end{align}
A level-$l$ gate ExRec is correct for some particular arrangements of faults if
\begin{align}
\label{eq:correctness_gate_dec}
    &\text{\textbf{Gate Dec$^*$\@:}}\nonumber\\
    &\includegraphics[width=7.0in]{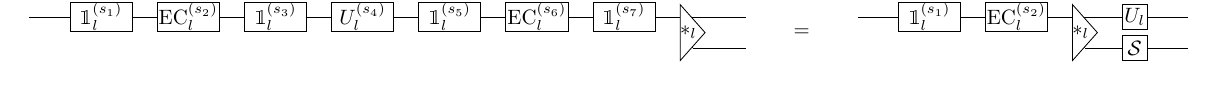},\\
\label{eq:correctness_gate_enc}
    &\text{\textbf{Gate Enc$^*$\@:}}\nonumber\\
    &\includegraphics[width=7.0in]{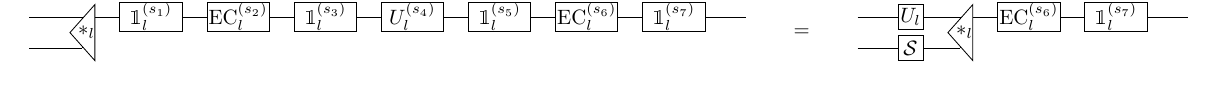}.
\end{align}
For a level-$l$ gate ExRec for multiple encoded registers, the correctness is defined in the same way, e.g., in the case of two encoded registers: 

\begin{align}
\label{eq:correctness_gate2_dec}
    &\text{\textbf{Gate Dec$^*$\@:}}\nonumber\\
    &\includegraphics[width=7.0in]{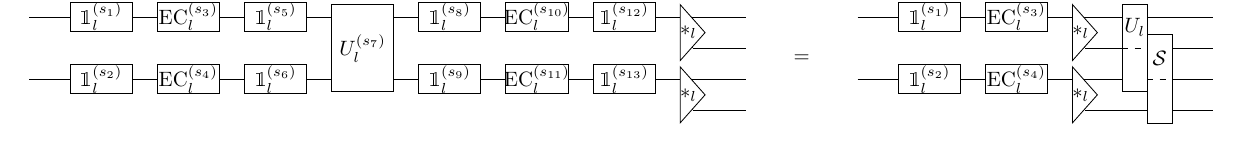},\\
\label{eq:correctness_gate2_enc}
    &\text{\textbf{Gate Enc$^*$\@:}}\nonumber\\
    &\includegraphics[width=7.0in]{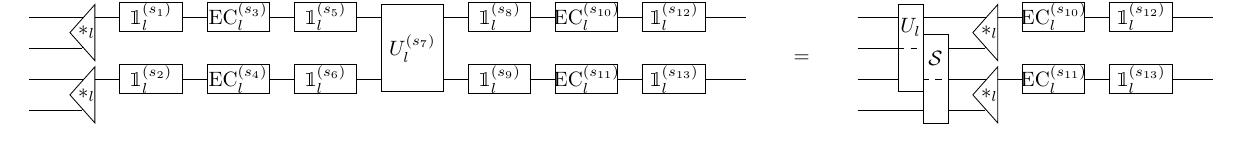}.
\end{align}
A level-$l$ measurement ExRec is correct for some particular arrangements of faults if
\begin{align}
\label{eq:correctness_meas_dec}
    &\text{\textbf{Meas Dec$^*$\@:}}\nonumber\\
    &\includegraphics{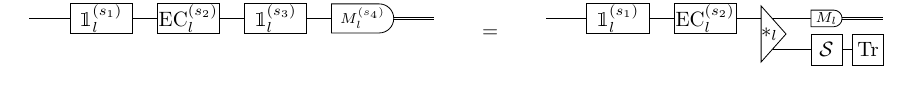},\\
\label{eq:correctness_meas_enc}
    &\text{\textbf{Meas Enc$^*$\@:}}\nonumber\\
    &\includegraphics{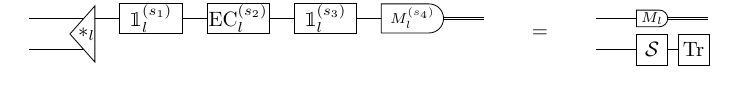}.
\end{align}

In the same way as the correctness of the level-$l$ ExRecs, we also introduce the notion of correctness of the level-$l$ interface gadgets.
A level-$l$ decoding gadget is correct for some particular arrangements of faults if
\begin{align}
\label{eq:correctness_dec}
    &\text{\textbf{Dec\@:}}\nonumber\\
    &\includegraphics{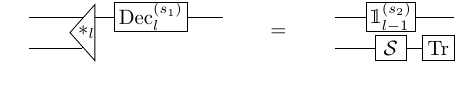}.
\end{align}

A level-$l$ encoding gadget is correct for some particular arrangements of faults if
\begin{align}
\label{eq:correctness_enc}
    &\text{\textbf{Enc\@:}}\nonumber\\
    &\includegraphics{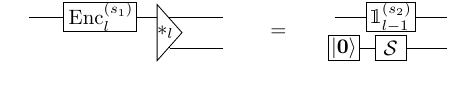}.
\end{align}

The level-$l$ combination of a circuit $U$ and an interface is correct for some arrangement of faults if

\begin{align}
\label{eq:correct-u-and-enc}
    &\text{\textbf{Enc+U+EC\@:}}\nonumber\\
    &\includegraphics{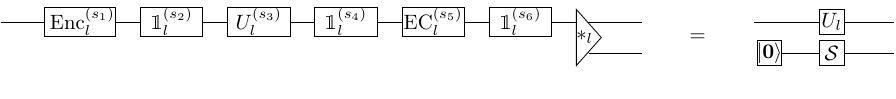},
\end{align}

\begin{align}
\label{eq:correct-u-and-dec}
    &\text{\textbf{EC+U+Dec\@:}}\nonumber\\
    &\includegraphics{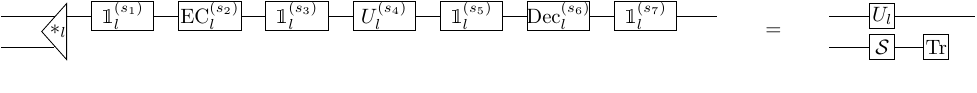}.
\end{align}

With these definitions of the correctness of the level-$l$ ExRecs and interface gadgets,
the error analysis of the fault-tolerant circuit reduces to the analysis of the conditions on the correctness of each ExRecs and interface gadgets for the $[[N_l,K_l,D_l=2t_l+1]]$ code.
A level-$l$ ExRec is said to be good if it contains at most $t_l$ faults, i.e., in Eqs.~\eqref{eq:correctness_prep_dec}--\eqref{eq:correctness_meas_enc}
\begin{equation}
    \sum_i s_i\leq t_l,
\end{equation}
and bad otherwise~\cite{G}.
Regarding the correctness defined for the level-$l$ $*$-decoder, i.e., Eqs.~\eqref{eq:correctness_prep_dec},~\eqref{eq:correctness_prep2_dec},~\eqref{eq:correctness_gate_dec},~\eqref{eq:correctness_gate2_dec}, and ~\eqref{eq:correctness_meas_dec}, it is proven that a good ExRec is correct, as shown in~\cite{G} by moving the $*$-decoder in these definitions using the fault-tolerance conditions.
As for the correctness defined for the level-$l$ $*$-encoder, we can also prove that a good ExRecs is correct in the sense of Eqs.~\eqref{eq:correctness_prep_enc},~\eqref{eq:correctness_prep2_enc},~\eqref{eq:correctness_gate_enc},~\eqref{eq:correctness_gate2_enc}, and~\eqref{eq:correctness_meas_enc}.
In particular, Eqs.~\eqref{eq:correctness_prep_enc},~\eqref{eq:correctness_prep2_enc},~\eqref{eq:correctness_gate_enc}, and~\eqref{eq:correctness_gate2_enc} is obtained by inserting the identity on the right-hand side of Eq.~\eqref{eq:idenB} before each of the trailing EC gadgets of Eqs.~\eqref{eq:correctness_prep_dec},~\eqref{eq:correctness_prep2_dec},~\eqref{eq:correctness_gate_dec},~\eqref{eq:correctness_gate2_dec}, moving the $*$-decoder by the fault-tolerance conditions in the same way, and cancelling out the $*$-decoder with the $*$-encoder using Eq.~\eqref{eq:idenA}.

If we have a bad ExRec, the error represented by $\mathcal{S}$ may occur on the logical qubits and the syndrome qubits; e.g., in the case of the gate ExRec, we have
\begin{align}
    &\text{\textbf{Gate Dec$^*$\@:}}\nonumber\\
    &\includegraphics[width=7.0in]{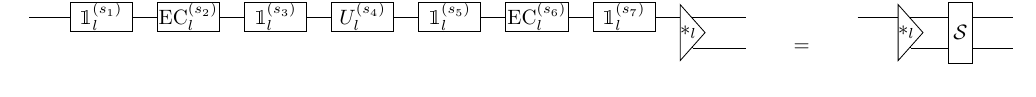},\\
    &\text{\textbf{Gate Enc$^*$\@:}}\nonumber\\
    &\includegraphics[width=7.0in]{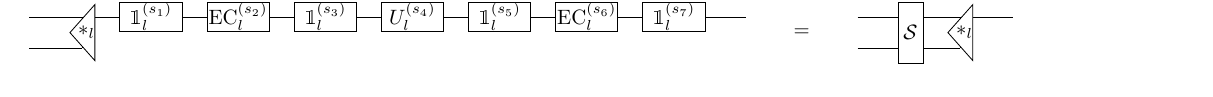}.
\end{align}

As discussed in~\cite{G}, if we have a bad ExRec while moving the $*$-decoder, such bad ExRecs may eliminate the trailing EC gadgets of the previous ExRecs\@.
In the same way, if we have a bad ExRec while moving the $*$-encoder, bad ExRecs may eliminate the leading EC gadgets of the later ExRecs\@.
We call these ExRecs missing one or more EC gadgets \emph{truncated ExRecs}.
A truncated ExRec is said to be good if it contains at most $t_l$ faults, and bad otherwise.
As shown in~\cite{G}, a good truncated ExRec is correct, and the probability of a truncated ExRec being bad is upper bounded by the corresponding (full) ExRec without truncation, which holds regardless of moving the $*$-decoder as in~\cite{G} or the $*$-encoder as in our case.

The error analysis of the interface gadgets uses the validity condition in place of the fault-tolerance conditions.
In particular, we say that a level-$l$ interface gadget is good if it contains no fault, i.e., in Eq.~\eqref{eq:correctness_dec} and Eq.~\eqref{eq:correctness_enc}
\begin{equation}
    s=0,
\end{equation}
and bad otherwise.
Then by definition, a good level-$l$ decoding gadget satisfying the validity condition Eq.~\eqref{eq:validdec} is correct in the sense of Eq.~\eqref{eq:correctness_dec}, and a good level-$l$ encoding gadget satisfying the validity condition Eq.~\eqref{eq:validenc} is correct in the sense of Eq.~\eqref{eq:correctness_enc}. 
If interface gadgets are bad, we have
\begin{align} \label{eq:interfacesbad}
    &\text{\textbf{Dec\@:}}\nonumber\\
    &\includegraphics{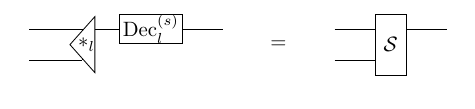},\\
    &\text{\textbf{Enc\@:}}\nonumber\\
    &\includegraphics{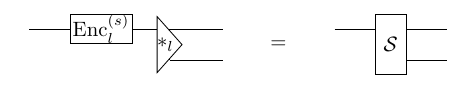}.
\end{align}

Similarly, if the combination of an interface and a gate is bad, we have
\begin{align} \label{eq:u+interfacesbad}
    &\text{\textbf{Dec\@:}}\nonumber\\
    &\includegraphics{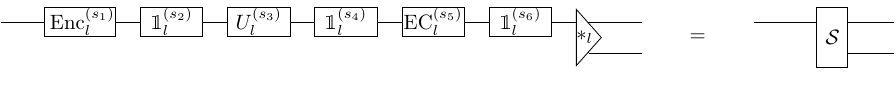},\\
    &\text{\textbf{Enc\@:}}\nonumber\\
    &\includegraphics{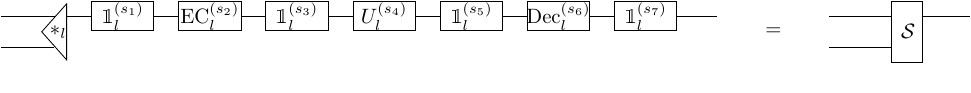}.
\end{align}

\subsubsection{Level-conversion theorem and its consequences}
\label{sec-level-conversion}

We are now in a position to prove a level-conversion theorem bounding the effective error rates of an interfaced circuit implemented in the concatenated quantum Hamming code.

\begin{figure}[h!]
\centering
    \includegraphics[scale=0.7]{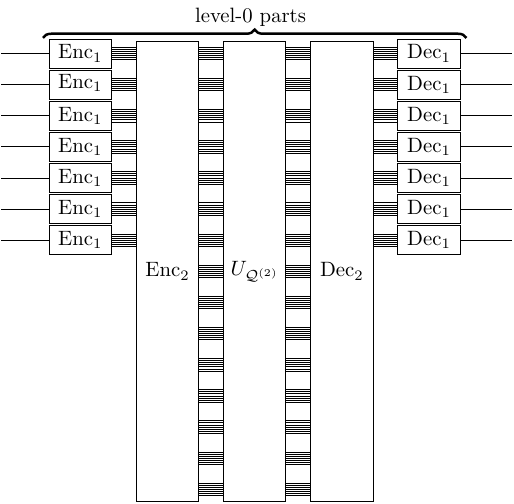} 
    \raisebox{3cm}{\hbox{\Large{$\Rightarrow$}}}
    \includegraphics[scale=0.7]{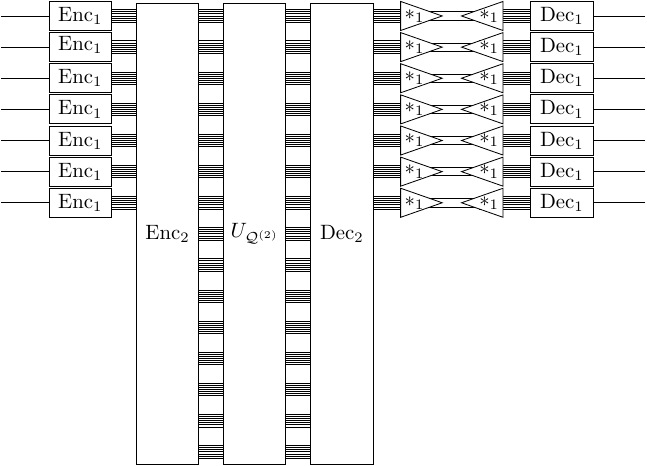}\\
    \raisebox{2cm}{\hbox{\Large{$\Rightarrow$}}}
    \raisebox{1cm}{\includegraphics[scale=0.7]{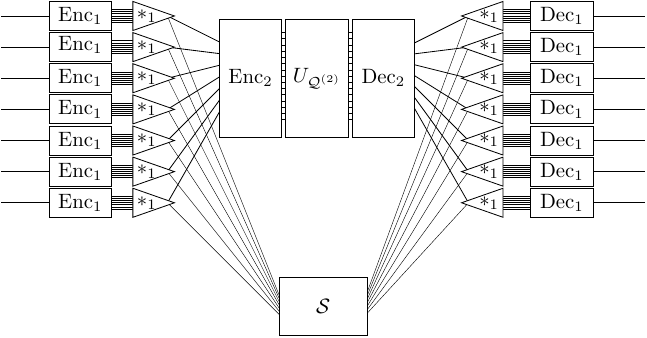}}
    \raisebox{2cm}{\hbox{\Large{$\Rightarrow$}}}
    \includegraphics[scale=0.7]{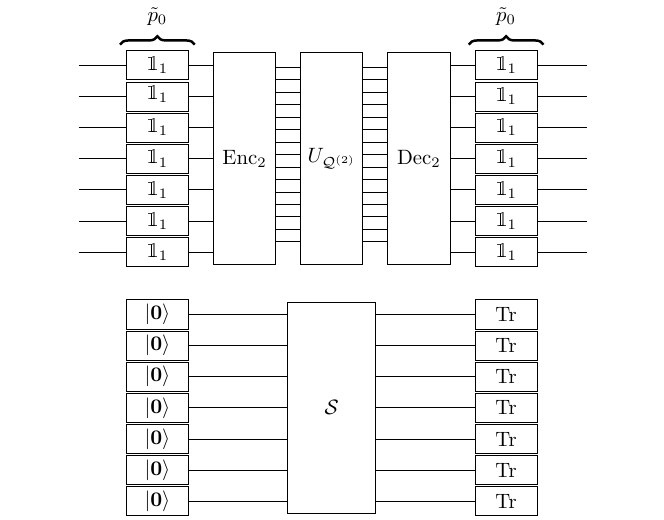}\\
    \raisebox{2cm}{\hbox{\Large{$\Rightarrow$}}}
    \includegraphics[scale=0.9]{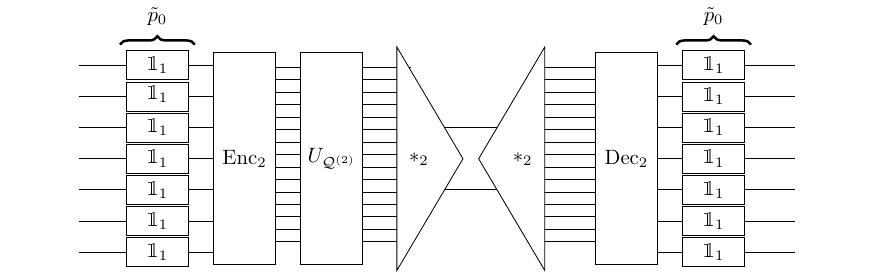}\\
    \raisebox{2cm}{\hbox{\Large{$\Rightarrow$}}}
    \includegraphics[scale=0.9]{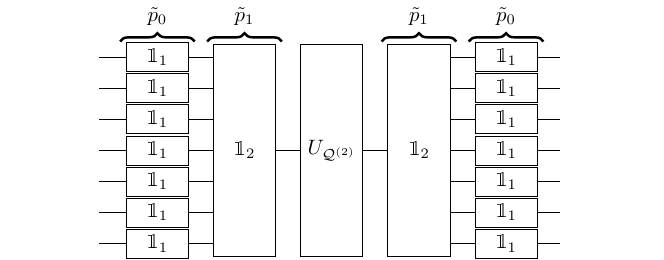}
    \caption{Level conversion to identify the error rate of each of the level-$l$ operations and interfaces ($l\in\{0,1,\ldots,L\}$) in the intended level-$L$ circuit implemented by the fault-tolerant circuit. For each level $l\in\{1,\ldots,L\}$, the level conversion of the interfaced circuit for our protocol is performed by inserting a pair of the $*$-decoder and the $*$-encoder before every level-$l$ decoding gadget $\Dec_l$ and every level-$l$ measurement gadget. The inserted $*$-encoders and $*$-decoders are moved through the level-$(l-1)$ circuit using the conditions on the correctness of the level-$l$ ExRecs until they are eliminated by the correctness conditions of the level-$l$ ExRecs or cancelled out. Then, the obtained level-$l$ circuit undergoes the local stochastic Pauli error model, and the error rate of each level-$l$ operation and interface location is upper bounded according to Theorem~\ref{thm-level-conversion}.
    }
    \label{fig:level_conversion}
\end{figure}

\begin{theorem}[Level conversion] \label{thm-level-conversion}
Let $\mathcal{Q}^{(L)}$ be the $[[N^{(L)},K^{(L)},D^{(L)}]]$ concatenated quantum Hamming code with concatenation level $L$ and threshold $p_{\textrm{th}}$.

Let $U:\mathcal{M}_2^{\otimes K^{(L)}}\to\mathcal{M}_2^{\otimes K^{(L)}}$ be a $K^{(L)}$-qubit quantum circuit, and let $U_{\mathcal{Q}^{(L)}}:\mathcal{M}_2^{\otimes N^{(L)}}\to\mathcal{M}_2^{\otimes N^{(L)}}$ denote its implementation in $\mathcal{Q}^{(L)}$ (in terms of physical qubits). Let $\Dec^{(L)}\circ U_{\mathcal{Q}^{(L)}} \circ \Enc^{(L)} :\mathcal{M}_2^{\otimes K^{(L)}}\to\mathcal{M}_2^{\otimes K^{(L)}}$ denote the corresponding interfaced circuit.

Suppose that the interfaced circuit is affected by local stochastic Pauli noise with noise parameter $p_0$. Then, for any $p_0\leq p_{\textrm{th}}$, there exists a level-$l$ interfaced implementation of the same circuit that is subject to local stochastic Pauli noise, with a universal error rate $\tilde{p}_l$ with
\begin{equation}
\label{eq:error_rate_universal}
    \tilde{p}_l\leq\qty(\frac{p_0}{\tilde{p}_\mathrm{th}})^{2^l}\tilde{p}_\mathrm{th},
\end{equation}
such that the error rate of each level-$l$ operation is upper bounded by $\tilde{p}_l$ and the error rate of each level-$l$ interface
 is upper bounded by $\tilde{p}_{l-1}$.
\end{theorem}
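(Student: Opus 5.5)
We argue by induction on the level, peeling off one concatenation level at a time as in Fig.~\ref{fig:level_conversion}. The inductive claim is this. Suppose the level-$l$ interfaced circuit, written in terms of level-$l$ gadgets, interfaces and EC gadgets acting on level-$(l-1)$ registers, is subject to local stochastic Pauli noise. Suppose further that each level-$(l-1)$ operation location has rate at most $\tilde p_{l-1}$ and each level-$(l-1)$ interface location has rate at most $\tilde p_{l-2}$. Then the level-$(l-1)$ circuit is equivalent to a level-$l$ interfaced circuit in which each level-$l$ operation location fails with probability at most $\tilde p_l$ and each level-$l$ interface fails with probability at most $\tilde p_{l-1}$, again in the local stochastic sense. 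The base case is the physical circuit with rate $p_0$. The recursion $\tilde p_l\le \tilde p_{l-1}^2/\tilde p_{\mathrm{th}}$ then unrolls to Eq.~\eqref{eq:error_rate_universal}, and the interface rate $\tilde p_{l-1}\le(p_0/\tilde p_{\mathrm{th}})^{2^{l-1}}\tilde p_{\mathrm{th}}$ follows from the same bound one level lower.

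\textbf{Step 1 (moving $*$-decoders/encoders).} At level $l$, we insert the identity $\Enc_l^*\circ\Dec_l^*$ of Eq.~\eqref{eq:idenB} immediately before every $\Dec_l$ gadget and every level-$l$ measurement gadget. We then push the $*$-decoders backward (leftward) through the circuit. For every correct ExRec we use Gate/Prep/Meas Dec$^*$, which turns it into the ideal logical operation followed by a syndrome channel $\mathcal S$. For a bad ExRec we use the bad-ExRec relations, at the cost of a logical channel $\mathcal S$ and possibly truncating the adjacent ExRecs. The paired $*$-encoders are pushed forward using the Enc$^*$ versions of the correctness conditions. Each $*$-encoder ends either absorbed at a decoding gadget, via the validity conditions Eqs.~\eqref{eq:validenc}--\eqref{eq:validdec} and the combined conditions \textbf{Enc+U+EC} and \textbf{EC+U+Dec}, or cancelled against a $*$-decoder via Eq.~\eqref{eq:idenA}. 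The result is a level-$l$ circuit in which each location acts ideally on logical qubits unless its (truncated) ExRec or interface gadget is bad.

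\textbf{Step 2 (counting).} A level-$l$ operation location is faulty only if its ExRec contains more than $t_l=1$ fault, since Hamming codes have distance $3$. For the local stochastic model we need, for any set $S$ of level-$l$ locations, $\Pr[\text{all bad}]\le\prod_{j\in S}\tilde p_l$. We obtain this as in \cite{G,yamasaki2022timeefficient}: union-bound over pairs of level-$(l-1)$ locations in each ExRec, noting that overlapping ExRecs share at most EC gadgets, which is handled by the standard argument. The result is $\tilde p_l\le \binom{A_l}{2}\tilde p_{l-1}^2$, where $A_l$ is the maximal ExRec size. Size bounds from \cite{yamasaki2022timeefficient} give $A_l=\mathrm{poly}(N_l)$, and $N_l$ grows with $l$. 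We therefore follow \cite{yamasaki2022timeefficient} and absorb this growth into a modified threshold $\tilde p_{\mathrm{th}}$, which handles the $l$-dependence via the doubly exponential decay. An interface gadget is bad if it contains \emph{any} fault, because good means $s=0$. Hence its failure probability is at most $|\Enc_l|\tilde p_{l-1}$, where $|\Enc_l|=O(N_l^2\log N_l)$ by Eqs.~\eqref{eq:dec_size}--\eqref{eq:enc_size}. This is only linear in $\tilde p_{l-1}$, which is why interfaces are bounded by $\tilde p_{l-1}$ rather than $\tilde p_l$. The constant factor must again be absorbed, and we do this by shrinking $\tilde p_{\mathrm{th}}$ so that $|\Enc_l|\tilde p_{l-1}\le$ the claimed bound. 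Where the constant cannot be fully absorbed, we state the bound up to the level-dependent factor that is swallowed by the threshold.

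\textbf{Main obstacle.} The delicate step is Step 1 at the interfaces. A $*$-encoder pushed forward from a bad ExRec must be shown to terminate correctly at an encoding interface, where no EC gadget precedes the interface. This requires proving that the $\textbf{Enc+U+EC}$ and $\textbf{EC+U+Dec}$ combinations are correct whenever they contain at most one fault outside the interface gadget and none inside it. It also requires showing that the logical channel left behind by a bad interface acts only on the $K_l$ level-$(l-1)$ registers touched by that gadget, so that locality of the induced noise is preserved. We would first try to prove this by showing that the encoding unitary $U_{\mathcal Q_l,\mathrm{Enc}}^{\otimes K^{(l-1)}}$, followed by the first gate ExRec, satisfies the Gate A/B conditions with zero input filter. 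This holds because fresh syndrome inputs carry no error when the interface is good.
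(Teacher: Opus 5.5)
Your proposal follows the paper's proof essentially step for step. You insert $*$-decoder/$*$-encoder pairs before every $\Dec_l$ and level-$l$ measurement gadget and move them using the correctness relations: good ExRecs, good truncated ExRecs and faultless valid interface gadgets are correct. You bound a level-$l$ operation by the probability of a bad ExRec, which is quadratic in the level-$(l-1)$ rate as in the cited Hamming-code analysis. You bound a level-$l$ interface by $|\Enc_l|\,\tilde p_{l-1}$, because it is good only with zero faults. Finally you absorb the $\mathrm{poly}(N_l)$ factors into a smaller threshold.

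The one genuine issue is the one your hedge at the end of Step~2 gestures at, and the paper's ``take the largest bound and rescale'' step glosses over it too. Take $\tilde p_{\mathrm{th}}=p_{\mathrm{th}}/C$ and $|\Enc_l|,|\Dec_l|\le 2^{\alpha l}$. The interface bound $2^{\alpha l}(p_0/p_{\mathrm{th}})^{2^{l-1}}p_{\mathrm{th}}$ is at most $(p_0/\tilde p_{\mathrm{th}})^{2^{l-1}}\tilde p_{\mathrm{th}}$ exactly when $2^{\alpha l}\le C^{2^{l-1}-1}$. This holds for all $l\ge 2$ once $C\ge 2^{2\alpha}$. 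At $l=1$, however, the exponent $2^{l-1}-1$ is zero, and Eq.~\eqref{eq:error_rate_universal} forces $\tilde p_0\le p_0$ however the threshold is chosen. Hence the argument only gives $\max\{|\Enc_1|,|\Dec_1|\}\,p_0$ for the level-$1$ interfaces. This is not just a counting artifact: a fault on the still-unencoded input qubit cannot be corrected, and typically several distinct single faults in $\Enc_1$ already cause a logical error. The factor is therefore not ``swallowed by the threshold.'' What you can actually prove, and what the statement should say, is that level-$l$ interfaces fail with probability at most $\tilde p_{l-1}$ for $l\ge 2$ and at most $c\,p_0$ for $l=1$, with $c$ a constant of order $|\Enc_1|$.

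A smaller point: your ``main obstacle'' paragraph mixes up the two ends of the circuit. A $*$-encoder pushed forward can only terminate at a decoding interface or a measurement. The missing EC there is the one between the last gate gadget and $\Dec_l$; this is the \textbf{EC+U+Dec} case, which is the usual truncated-ExRec argument plus the validity condition for $\Dec_l$. The encoding side, with no EC between $\Enc_l$ and the first gate gadget (the \textbf{Enc+U+EC} case), is reached by $*$-decoders travelling backward. Your sketch is the right argument for that case: a faultless valid $\Enc_l$ outputs a state that passes the level-$l$ $0$-filter, so Gate~A/B with $r=0$ make $\Enc_l$, the first gate gadget and its trailing EC correct whenever that truncated ExRec has at most $t_l$ faults. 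It is worth keeping, since the paper only builds these two relations into its definitions and uses them in the corollaries without proving that good configurations satisfy them.
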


\begin{proof}
\textbf{Level conversion.}
We perform the level conversion as in the conventional analysis in~\cite{G}  using the $*$-decoder, as shown in Fig.~\ref{fig:level_conversion}.

Given a fault-tolerant circuit obtained from a level-$L$ circuit, we recursively perform a level-conversion procedure for $l\in\{1,2,\ldots,L\}$.
In the $l$-th step, we find parts of the circuit corresponding to the level-$l$ parts of the original level-$L$ circuit by using the compilation procedure introduced in Section~\ref{sec-compile}.
Then, we insert the right-hand side of the identity Eq.~\eqref{eq:idenB}
before every $\Dec_l$ interface and every level-$l$ measurement operation.
The inserted $*$-decoders are moved through the level-$l$ circuit using the conditions on the correctness of the level-$l$ ExRecs until they are eliminated by the correctness conditions of the level-$l$ preparation ExRecs (i.e., Eqs.~\eqref{eq:correctness_prep_dec},~\eqref{eq:correctness_prep2_dec}, and~\eqref{eq:correctness_meas_enc}) and the level-$l$ interface gadgets (Eqs.~\eqref{eq:correctness_dec} and~\eqref{eq:correctness_enc}), or cancelled out by the identity Eq.~\eqref{eq:idenA}.
If the level-$l$ ExRecs are good, they are correct and are replaced with the logical operations as intended; also, if the level-$l$ interface gadgets are good, they are correct and are replaced with the wait operation as shown in Eqs.~\eqref{eq:correctness_dec} and~\eqref{eq:correctness_enc}.
Due to the conventional argument in~\cite{G}, with this procedure, the level-$l$ parts undergo the local stochastic Pauli error model if the level-$(l-1)$ parts do. This holds by induction under our assumption that the level-$0$ (physical) part undergoes the local stochastic Pauli error model. Moreover, the error rate of each level-$l$ operation and interface location is upper bounded by the probability of having a bad level-$l$ ExRec and interface gadget, respectively, corresponding to the location\footnote{This argument holds not only for the local stochastic Pauli error model in our setting but also generally for the local stochastic error model as in~\cite{G,yamasaki2022timeefficient}, where the noise can be arbitrary CPTP maps rather than Pauli errors.}.

\textbf{Bounding the error rates.}
To bound the error rate of each level-$l$ operation, as in~\cite{yamasaki2022timeefficient}, we recall that our protocol uses a concatenation of the $[[N_l=2^{l+2}-1,K_l=2^{l+2}-2(l+2)-1,3]]$ quantum Hamming code at each concatenation level $l$.
Then, as shown in~\cite{yamasaki2022timeefficient}, the logical error rate of each level-$l$ operation in the level-$l$ parts is upper bounded by
\begin{align}
    \label{eq:error_rate} p_l = \text{Prob}(\text{A level-$l$ operation is not correct})\leq \qty(\frac{p_0}{p_\mathrm{th}})^{2^l}p_\mathrm{th}
\end{align}
for some threshold constant $p_\mathrm{th}>0$.

Here, in addition to the error rate of the level-$l$ operations, we also need to clarify the error rate of the level-$l$ interfaces.

We first discuss the decoding interface.
Since the good level-$l$ interface gadgets composed of the level-$(l-1)$ operations at error rate $p_{l-1}$ in Eq.~\eqref{eq:error_rate} are correct, the error rates of the level-$l$ decoding interface are upper bounded by as follows\footnote{Notably, the interface from the physical system to the first level of concatenation $\Enc_1$ fails with a probability proportional to the physical error rate $p_0$. Unfortunately, this error cannot be suppressed arbitrarily and is an effect of considering circuits with quantum input and output, as previously observed in~\cite{gottesman2014faulttolerant}, \cite[Lemma~11]{8555154},~\cite{MGHHLPP14} and \cite[Theorem~III.3]{CMH20}. This is precisely the reason why an interfaced circuit with quantum input or output has different behaviour in the context of fault tolerance.}:
\begin{align}
    \text{Prob}(\text{$\Dec_l^{(s)}$ is not correct})&\leq \text{Prob}(\text{$\Dec_l^{(s)}$ is not good})\\
    &\leq |\Dec_l| \times \text{Prob}(\text{A level $(l-1)$-operation is not correct})\\
    &\leq |\Dec_l|p_{l-1}.
\end{align}

Due to Eq.~\eqref{eq:dec_size}, for some constant $\alpha>0$, we have
\begin{equation}
\label{eq:dec_size_bound}
    |\Dec_l|\leq \mathrm{poly}(N_l)\leq 2^{\alpha l},
\end{equation}
where we use $N_l=2^{l+2}-1$.
Therefore, from Eq.~\eqref{eq:error_rate} and Eq.~\eqref{eq:dec_size_bound}, we obtain the following upper bound:
\begin{equation}
\label{eq:error_rate_decoding}
    |\Dec_l|p_{l-1}\leq\qty(\frac{p_0}{p_\mathrm{th}})^{2^{l-1}}p_\mathrm{th}2^{\alpha l}.
\end{equation}
and thus have
\begin{align}
    \text{Prob}(\text{$\Dec_l^{(s)}$ is not correct})&\leq   \qty(\frac{p_0}{p_\mathrm{th}})^{2^{l-1}}p_\mathrm{th}2^{\alpha l}.
\end{align}

In the same way, using Eq.~\eqref{eq:enc_size} in place of Eq.~\eqref{eq:dec_size}, we also obtain an upper bound of the error rate of the level-$l$ encoding interface
\begin{equation}
\label{eq:error_rate_encoding}
    |\Enc_l|p_{l-1}\leq \qty(\frac{p_0}{p_\mathrm{th}})^{2^{l-1}}p_\mathrm{th}2^{\alpha^\prime l},
\end{equation} 
where $\alpha^\prime>0$ is some constant.

For simplicity, we summarize Eqs.~\eqref{eq:error_rate},~\eqref{eq:error_rate_decoding}, and~\eqref{eq:error_rate_encoding} (taking the largest upper bound and rescaling depending on $\alpha$ or $\alpha'$): In total, we have a universal threshold $\tilde{p}_\mathrm{th}>0$ and a universal error rate $\tilde{p}_l$ scaling as
\begin{equation}
    \tilde{p}_l\leq\qty(\frac{p_0}{\tilde{p}_\mathrm{th}})^{2^l}\tilde{p}_\mathrm{th},
\end{equation}
such that the error rate of each level-$l$ operation is upper bounded by $\tilde{p}_l$ and the error rate of each level-$l$ interface
 is upper bounded by $\tilde{p}_{l-1}$.
\end{proof}

Using this level-conversion theorem, we will formulate a threshold theorem for interfaced circuit where the noisy interfaces are understood as effective error channels. To begin, we present a simpler version of a threshold theorem for a circuit that starts and ends in an intermediate interface mapping between level $l$ and level $l-1$:

\begin{corollary}[Single-level interfaced circuit]\label{thm-single-level-threshold}
Let $U:\mc M_2^{\otimes K^{(l)}} \to \mc M_2^{\otimes K^{(l)}}$ be some quantum circuit, and let $U_{l}:\mc M_2^{\otimes N^{(l)}} \to \mc M_2^{\otimes N^{(l)}}$ be the implementation of $U$ in $Q^{(l)}$. Note that $\Dec_l \circ U_{l}  \circ \Enc_l:\mc M_2^{\otimes K_l N^{(l-1)}}\to \mc M_2^{\otimes K_l N^{(l-1)}}$. Let $\tilde{U}_{l}:\mc M_2^{\otimes K_l N^{(l-1)}}\to \mc M_2^{\otimes K_l N^{(l-1)}}$ be the unitary that implements $U_{l}$ in terms of $K_l$ $(l-1)$-registers.
Then, there exists a quantum state $\sigma_S\in \mc M_2^{\otimes (N_l-K_l) N^{(l-1)}}$ and quantum channels $\mc V_1:\mc M_2^{\otimes K_l N^{(l-1)}}\to \mc M_2^{\otimes K_l N^{(l-1)}}$ and $\mc V_2:\mc M_2^{\otimes N^{(l)}}\to \mc M_2^{\otimes K_l N^{(l-1)}}$ such that
\begin{align}
   \| [ \Dec_l \circ U_{l} \circ \Enc_l ]_{\mc F}-  \mc R_{\mathrm{dec},l} \circ \big( ( \tilde{U}_{l} \circ \mc R_{\mathrm{enc},l} ) \otimes \mc \sigma_S \big)  \|_{1\to 1} \leq \tilde{p}_l |\mathrm{Loc}(U)|
\end{align}
with
\begin{align}
   & \mc R_{\mathrm{enc},l}= (1-\tilde{p}_{l-1})\id_{K_l N^{(l-1)}} +  \tilde{p}_{l-1} \mc V_1,\\& \mc R_{\mathrm{dec},l} = (1-\tilde{p}_{l-1}) \id_{K_l N^{(l-1)}}\otimes \Tr_{(N_l-K_l) N^{(l-1)}} +\tilde{p}_{l-1} \mc V_2.
\end{align}
and $\tilde{p}_{l}$ given by Eq.~\eqref{eq:error_rate_universal}.
\end{corollary}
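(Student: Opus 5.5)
We apply the level-conversion procedure from the proof of Theorem~\ref{thm-level-conversion}, restricted to the single level $l$, and then read off the effective channel. Concretely, we insert the right-hand side of the identity Eq.~\eqref{eq:idenB} (a $*$-encoder followed by a $*$-decoder) immediately before the decoding gadget $\Dec_l$ and before every level-$l$ measurement gadget inside $U_l$. We then move the inserted $*$-decoders backwards through $U_l$ using the correctness conditions of the level-$l$ ExRecs: \textbf{Gate Dec$^*$}, \textbf{Prep Dec$^*$}, and their truncated versions. They are eventually absorbed at the encoding gadget by the encoding correctness condition Eq.~\eqref{eq:correctness_enc}, or by the combined condition \textbf{Enc+U+EC} of Eq.~\eqref{eq:correct-u-and-enc}. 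By Eq.~\eqref{eq:correctness_enc}, a good and valid $\Enc_l$ followed by the $*$-decoder equals the identity on the $K_l$ input $(l-1)$-registers, tensored with a syndrome state produced from $\ket{\boldsymbol 0}$ by some channel $\mc S$ on the syndrome registers. We define $\sigma_S$ as the syndrome state that results after all $\mc S$ maps accumulated along the way have been pushed onto it. Symmetrically, at the output, Eq.~\eqref{eq:correctness_dec} (or \textbf{EC+U+Dec}) turns the $*$-encoder followed by a good $\Dec_l$ into the ideal decoder, that is, identity on the logical registers and a partial trace $\Tr_{(N_l-K_l)N^{(l-1)}}$ on the syndrome. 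In the middle, the good ExRecs are replaced by the ideal logical operations, so the logical part becomes $\tilde U_l$.

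Next, we handle faults probabilistically. We condition on the fault pattern and decompose according to which of the following are bad: the encoding gadget, the decoding gadget, and the operation ExRecs. When $\Enc_l$ is bad, Eq.~\eqref{eq:interfacesbad} gives some channel on the $K_l N^{(l-1)}$ logical registers, possibly correlated with the syndrome. We average these channels into $\mc V_1$; any syndrome dependence is absorbed by redefining $\mc V_1$ as a channel that prepares its own syndrome state internally. By Theorem~\ref{thm-level-conversion}, this event has probability at most $\tilde p_{l-1}$. Since we only need an upper bound, the weight is padded to exactly $\tilde p_{l-1}$ by moving part of the identity weight into $\mc V_1$. The same is done for $\Dec_l$: with probability at most $\tilde p_{l-1}$ it is replaced by some channel $\mc V_2$ acting on all $N^{(l)}$ qubits. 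This yields the mixtures $\mc R_{\mathrm{enc},l}$ and $\mc R_{\mathrm{dec},l}$. Every remaining bad event comes from a bad operation ExRec in $U_l$. By Theorem~\ref{thm-level-conversion}, each such ExRec is bad with probability at most $\tilde p_l$, so a union bound over the $|\mathrm{Loc}(U)|$ locations shows that the total probability is at most $\tilde p_l|\mathrm{Loc}(U)|$. On the complement event, the circuit equals the ideal expression exactly. On the bad event, the difference of two channels has $1\to1$ norm at most $1$ in the paper's normalized trace norm. This gives the stated bound.

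The main obstacle is making the factorization rigorous: the conditional channels on the interfaces must combine into a product form $\mc R_{\mathrm{dec},l}\circ(\cdot)\circ\mc R_{\mathrm{enc},l}$, even though interface faults and ExRec faults are correlated under the local stochastic model. My plan is to condition only on the events ``$\Enc_l$ good/bad'' and ``$\Dec_l$ good/bad'', treating the operation ExRecs as a separate error term. The interface gadgets and the ExRecs occupy disjoint locations. Whenever the ExRecs are all good, the middle of the circuit is exactly $\tilde U_l\otimes(\text{syndrome evolution})$, irrespective of what the interfaces did. The joint interface distribution may then be correlated, and this correlation is absorbed into a joint error term that acts only where the individual bad events act. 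I would present this as a convex decomposition dominated by the stated product form, accepting that a slightly weaker but equivalent formulation of $\mc V_1,\mc V_2$ may be needed. A second subtlety is that no EC gadgets sit next to the interfaces. Here I would rely directly on \textbf{Enc+U+EC} and \textbf{EC+U+Dec} in place of separate interface and ExRec correctness.
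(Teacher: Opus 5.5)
Your proposal is correct and follows essentially the same route as the paper's proof. You insert a $*$-decoder/$*$-encoder pair in front of $\Dec_l$, use the correctness and badness relations for $\Dec_l$ and for the combination $\Enc_l+U_l$ to get the two mixtures with failure weights padded up to $\tilde p_{l-1}$, collect the syndrome channels into $\sigma_S$, and charge the event that $U_l$ is not correct to a union bound giving $\tilde p_l|\mathrm{Loc}(U)|$. The paper does not address the fault-correlation issue you raise either: it simply splits the fault pattern into $\mathcal F_1$ and $\mathcal F_2$ and composes the two mixtures. You can sidestep it because $\mathcal V_1,\mathcal V_2$ are arbitrary: take $\mathcal V_2=\id\otimes\Tr$ and let $\tilde U_l\circ\mathcal V_1$ be the averaged channel on the event ``all ExRecs of $U_l$ good but some interface gadget bad'', mixed with $\tilde U_l$ up to total weight $\tilde p_{l-1}$; then the actual and target channels differ only on the bad-ExRec event, at the price of bounding that union of interface failures by $\tilde p_{l-1}$ (a factor of $2$ absorbed into $\tilde p_\mathrm{th}$).
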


\begin{proof}
We insert the $*$-encoders and $*$-decoders:
    \begin{align}
    [ \Dec_l \circ U_l \circ \Enc_l ]_{\mc F}  & =[ \Dec_l \circ \Enc_l^* \circ \Dec_l^* \circ U_l \circ \Enc_l ]_{\mc F} \\ &= [\Dec_l \circ \Enc_l^*]_{\mc F_1 } \circ [\Dec_l^* \circ U_l \circ \Enc_l ]_{\mc F_2}
\end{align}

Assume $U_l$ ends in $\EC_l$ and that $U_l$ is correct under the fault pattern, i.e. every operation in $U_l$ is correct. Then, using the transformation rules \eqref{eq:correctness_dec} and \eqref{eq:interfacesbad}, we have
\begin{align} \label{eq-effchannelstep0}
    [\Dec_l \circ \Enc_l^*]_{\mc F_1 } = (1-q) \id_{l-1}\otimes \mc (\Tr \circ \mc S_1) + q \tilde{\mc V}_1
\end{align}
where $\mc S_1$ is some quantum channel that is followed by a trace, and $q$ is the probability that $\Dec_{l}$ is not correct which is upper bounded by $\tilde{p}_{l-1}$ according to Theorem~\ref{thm-level-conversion}. We thus also have that there exists some $\mc V_1$ such that \begin{align} \label{eq-effchannelstep1}
    [\Dec_l \circ \Enc_l^*]_{\mc F_1 } = (1-q) \id_{l-1}\otimes \mc (\Tr \circ \mc S_1) + q \tilde{\mc V}_1 = (1-\tilde{p}_{l-1}) \id_{l-1}\otimes \mc (\Tr \circ \mc S_1) + \tilde{p}_{l-1} \mc V_1,
\end{align}
where $\mc V_1=\qty(1/\tilde{p}_{l-1})\qty(q\tilde{\mc V}_1+\qty(\tilde{p}_{l-1}-q)\id_{l-1}\otimes \mc (\Tr \circ \mc S_1))$.

Using the transformation rules \eqref{eq:correctness_enc} and \eqref{eq:interfacesbad}, we further have 
\begin{align} \label{eq-effchannelstep2}
    [\Dec_l^* \circ U_{l} \circ \Enc_l ]_{\mc F_2} = (1-r) \tilde{U}_{l} \otimes \mc S_2 +r \tilde{ \mc V}_2,
\end{align}
where $\mc S_2$ is some preparation channel and $r$ is the probability that $\Enc_l+U_{l}$ is correct. If $U_{l}$ is correct, this probability is equal to the probability that $\Enc_l$ is correct, which is upper bounded by $\tilde{p}_{l-1}$ according to Theorem~\ref{thm-level-conversion}.
Combining \eqref{eq-effchannelstep1} and \eqref{eq-effchannelstep2}, we have some quantum state $\sigma_S$ obtained by $\mc S_2$ followed by $\mc S_1$ (i.e., some noisy preparation followed by more noise).

The above transformation rule only applies if $U_l$ is indeed correct, which is true with probability upper bounded by $\tilde{p_l}\Loc(U)$ according to Theorem~\ref{thm-level-conversion} and the union bound.
\end{proof}

In other words, we have shown that, if the decoding interface is correct, $\Dec_l$ acts like an identity from the code space at level $l$ to the level $l-1$ --- if not, it may act in a highly convoluted way as some channel that we call $\tilde{\mathcal{V}}_2$ (and a corresponding statement is true for the encoding interface as well). For an interfaced circuit implementation between level $L$ and the physical level $l=0$, which is composed of various level-$l$ parts and layers of interfaces between the levels, we can apply a similar procedure. The intermediate interfaces act as effective channels that act as an identity channel between the levels with some probability $\tilde{p}_{l-1}$, and in some convoluted way otherwise. This convoluted way is related to a concatenation of the errors from Eq.~\eqref{eq:interfacesbad} at various levels and the gates contained in the implementation of $U_{L}$ at various levels, but in our analysis, we will make no further assumptions about its structure.

\begin{corollary}[Level conversion for interfaced circuits]\label{thm-threshold-interfaced}
    Let $\mathcal{Q}^{(L)}$ be the concatenated quantum Hamming code with concatenation level $L$.
    There exists a threshold $\tilde{p}_\mathrm{th}>0$ such that, for any physical error rate $p_0$ with $0\leq p_0< \tilde{p}_\mathrm{th}$, any concatenation level $L\in\mathbbm{N}$, and any quantum circuit $U:\mathcal{M}_2^{\otimes K^{(L)}}\rightarrow \mathcal{M}_2^{\otimes K^{(L)}}$, there exist quantum channels $\mathcal{V}_{\mathrm{enc},l}:\mathcal{M}_2^{\otimes N^{(l)}}\rightarrow \mathcal{M}_2^{\otimes N^{(l)}}$ and $\mathcal{V}_{\mathrm{dec}, l}:\mathcal{M}_2^{\otimes N^{(l)}} \rightarrow \mathcal{M}_2^{\otimes K^{(l)}}$ for each $l\in[1,L]$, and a quantum state $\sigma_S\in \mathcal{M}_2^{\otimes (N^{(L)}-K^{(L)})}$
    such that
    \begin{align}\label{eq-error-term-threshold-thm}
    &\left\| [\Dec^{(L)} \circ  U_{Q^{(L)}}\circ \Enc^{(L)}  ]_{\mathcal{F}(p_0)}  - \mathcal{N}_\mathrm{dec} \circ 
   \qty( (U \circ \mathcal{N}_\mathrm{enc})\otimes \sigma_S )\right\|_{1\rightarrow 1} \leq \qty(\qty(\frac{p_0}{\tilde{p}_\mathrm{th}})^{2^{L}}\tilde{p}_\mathrm{th})|\Loc(U)| .\end{align} Here, $ \mc N_{\mathrm{dec}}:M_2^{\otimes N^{(L)}}\to \mc M_2^{\otimes K^{(L)}}$ is given by
    \begin{align}
        \mc N_{\mathrm{dec}}&=  \mc N_{\mathrm{dec},1}^{\otimes {K^{(L)}}}\circ  \mc N_{\mathrm{dec},2}^{\otimes \frac{K^{(L)}}{K^{(2)}}}\circ \cdots \circ \mc N_{\mathrm{dec},L-1}^{\otimes K_L }\circ \mc N_{\mathrm{dec},L} ,
        \end{align}
   where, for each $l$, the channels $\mc N_{\mathrm{dec},l}:M_2^{\otimes N^{(l)}}\to \mc M_2^{\otimes K_l N^{(l-1)}}$ are given by
\begin{align}
       \mc N_{\mathrm{dec},l} &= (1-\tilde{p}_{l-1}) \id_{K^{(l)}}\otimes \Tr_{(N^{(l)}-K^{(l)})} + \tilde{p}_{l-1} \mc V_{\mathrm{dec},l}.
       \end{align}
    The channel $\mc N_{\mathrm{enc}}: \mc M_2^{\otimes  K^{(L)}}\to \mc M_2^{\otimes N^{(L)}}$ is given by
    \begin{align}
        \mc N_{\mathrm{enc}}&= \mc N_{\mathrm{enc},L} \circ \mc N_{\mathrm{enc},L-1}^{\otimes K_L}\circ \cdots \circ \mc N_{\mathrm{enc},2}^{\otimes \frac{K^{(L)}}{K^{(2)}}} \circ \mc N_{\mathrm{enc},1}^{\otimes \frac{K^{(L)}}{K_1}},
\end{align}
   where $\mc N_{\mathrm{enc},l}:M_2^{\otimes K_l N^{(l-1)}}\to \mc M_2^{\otimes K_l N^{(l-1)}}$ is given by
   \begin{align}
      \mc N_{\mathrm{enc},l} &= (1-\tilde{p}_{l-1}) \id_{K_l N^{(l-1)}}  + \tilde{p}_{l-1} \mc V_{\mathrm{enc},l}
    \end{align}
and $\tilde{p}_l$ is given by Eq.~\eqref{eq:error_rate_universal}.
\end{corollary}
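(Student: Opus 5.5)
The plan is to prove Corollary~\ref{thm-threshold-interfaced} by induction on $L$, peeling off one concatenation level at a time with the mechanism of Corollary~\ref{thm-single-level-threshold}, and collecting all failures of ExRecs into a single union bound.

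\textbf{Step 1: unfold the interfaces.} Using the recursive decomposition
\[
\Dec^{(L)}\circ U_{Q^{(L)}}\circ\Enc^{(L)}=(\Dec^{(L-1)})^{\otimes K_L}\circ(\Dec_L\circ U_{Q^{(L)}}\circ\Enc_L)\circ(\Enc^{(L-1)})^{\otimes K_L},
\]
we apply the level-conversion procedure of Theorem~\ref{thm-level-conversion} to the outermost level. Concretely, insert $\Enc_L^*\circ\Dec_L^*$ (the right-hand side of Eq.~\eqref{eq:idenB}) directly before $\Dec_L$. Then move $\Dec_L^*$ backwards through the level-$L$ ExRecs of $U_{Q^{(L)}}$ using the $\textbf{Gate Dec}^*$ rules, until it reaches $\Enc_L$ and is absorbed via the combined rule $\textbf{Enc+U+EC}$, Eq.~\eqref{eq:correct-u-and-enc}.

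Conditioned on all level-$L$ ExRecs being good, this yields three pieces:
\begin{itemize}
\item on the $K_L$ level-$(L-1)$ registers, a mixture $(1-q)\,\id+q\,\tilde{\mc V}$ with $q\le\tilde p_{L-1}$ (by the interface bound of Theorem~\ref{thm-level-conversion});
\item on the logical registers, the ideal logical operation $\tilde U_L$;
\item on the syndrome registers, some channel.
\end{itemize}
As in the proof of Corollary~\ref{thm-single-level-threshold}, we rewrite this mixture with the exact weight $\tilde p_{L-1}$ by absorbing the slack $(\tilde p_{L-1}-q)$ into $\mc V_{\mathrm{enc},L}$. On the decoding side, $\Dec_L\circ\Enc_L^*$ gives $(1-\tilde p_{L-1})\,\id\otimes\Tr+\tilde p_{L-1}\mc V_{\mathrm{dec},L}$ in the same way.

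\textbf{Step 2: recurse.} What remains is $(\Dec^{(L-1)})^{\otimes K_L}\circ\mc N_{\mathrm{dec},L}\circ(\cdots)\circ\mc N_{\mathrm{enc},L}\circ(\Enc^{(L-1)})^{\otimes K_L}$. The key observation is that the level-$L$ logical circuit, written in terms of level-$(L-1)$ registers, is itself a level-$(L-1)$ circuit subject to local stochastic noise; this holds by Theorem~\ref{thm-level-conversion}, since the level-conversion steps are performed for $l=1,\dots,L$ in sequence.

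We therefore carry out the conversion bottom-up. At each level $l$, the $K^{(L)}/K^{(l)}$ copies of $\Enc_l$ and $\Dec_l$ become the tensor powers $\mc N_{\mathrm{enc},l}^{\otimes K^{(L)}/K^{(l)}}$ and $\mc N_{\mathrm{dec},l}^{\otimes\cdots}$. Each of these is justified by the interface-goodness bound $|\Enc_l|p_{l-1}\le\tilde p_{l-1}$ from Eqs.~\eqref{eq:error_rate_encoding} and~\eqref{eq:error_rate_decoding}. The syndrome outputs produced at each level are collected into a single state $\sigma_S$ on $N^{(L)}-K^{(L)}$ qubits. The syndrome register comes out as a fixed state rather than input-dependent: on good ExRecs the noise acts only on syndromes, via $\mc S$ independent of the logical data, and its preparation is a state fed in through $\Enc^*$.

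\textbf{Step 3: error accounting.} The only events not absorbed into the $\mc N_{\mathrm{enc},l}$ and $\mc N_{\mathrm{dec},l}$ are bad level-$L$ operation ExRecs in the converted circuit. By Theorem~\ref{thm-level-conversion}, each such ExRec is bad with probability at most $\tilde p_L\le(p_0/\tilde p_{\mathrm{th}})^{2^L}\tilde p_{\mathrm{th}}$. A union bound over the $|\Loc(U)|$ locations then bounds the $1\to1$ distance, via the usual coupling: conditioned on no bad operation ExRec, the two channels agree exactly.

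\textbf{Main obstacle.} The delicate step is the joint treatment of interface failures and operation failures. Interface failures must be kept inside the effective channels $\mc N_{\mathrm{enc},l},\mc N_{\mathrm{dec},l}$ rather than counted in the error term. This requires that a bad interface gadget does not prevent the $*$-decoder from being moved through adjacent ExRecs, i.e.\ that bad interfaces behave like truncating events confined to the interface location itself.

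I would handle this by using the $*$-encoder rules, Eqs.~\eqref{eq:correctness_gate_enc}, on the encoding side and the $*$-decoder rules on the decoding side. The $\textbf{Enc+U+EC}$ and $\textbf{EC+U+Dec}$ conditions then let the interface error be expressed purely as a channel on the interface's input or output registers, as in Eq.~\eqref{eq:u+interfacesbad}. Each $\mc V$ may be an arbitrary channel, so independence issues between local stochastic faults across different interfaces are harmless: we only need a product-form upper bound $\tilde p_{l-1}$ per interface. Conditioning on the independent-looking product structure can be avoided by first fixing the fault pattern and then averaging, absorbing any excess probability into the $\mc V$'s exactly as in Eq.~\eqref{eq-effchannelstep1}.
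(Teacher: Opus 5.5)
Your proposal is essentially the paper's own argument. It recursively applies the single-level corollary, inserts $\Enc_L^*\circ\Dec_L^*$ before $\Dec_L$, and uses the \textbf{Enc+U+EC} rule at the top level and the plain interface correctness rules at lower levels. It pads each interface failure probability $q\le\tilde p_{l-1}$ up to exactly $\tilde p_{l-1}$ inside $\mathcal V_{\mathrm{enc},l}$ and $\mathcal V_{\mathrm{dec},l}$, collects the syndrome preparations into $\sigma_S$, and charges only bad level-$L$ ExRecs of $U$ to the union bound $\tilde p_L|\Loc(U)|$.

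One caution about your last paragraph. Your claim that independence issues are harmless because a product-form upper bound of $\tilde p_{l-1}$ per interface suffices is not right. That bound does not by itself give the exact tensor-power and level-by-level composition structure of $\mathcal N_{\mathrm{enc}}$ and $\mathcal N_{\mathrm{dec}}$: for instance, anti-correlated failures of two interfaces give a classically correlated output on product inputs, which no choice of local $\mathcal V$'s reproduces. This step therefore needs the failure events of distinct interface gadgets to be independent; the paper's sketch also assumes this tacitly rather than proving it.
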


\begin{proof}[Proof sketch.]
    This statement follows from a recursive application of Corollary \ref{thm-single-level-threshold}. Again, we insert $*$-encoders and decoders between the unitary $U_L$ and the interfaces. Suppose that $U_L$ is correct; this is not the case with probability $\leq \tilde{p}_L \Loc(U)$, giving rise to the error term in Eq.~\eqref{eq-error-term-threshold-thm}. If $U_L$ is correct, we apply the transformation rules of the interfaces to obtain effective channels. For $\Dec_L$ and $\Enc_L$, we use the transformation rules from \eqref{eq:correct-u-and-enc} (combined with $U_L$), and for the lower levels, we use the transformation rules from \eqref{eq:correctness_enc} and \eqref{eq:correctness_dec}. The encoding interface, if correct, acts as an identity on $K_l N^{(l-1)}$ qubits and some preparation channel on the remaining $(N_l-K_l) N^{(l-1)}$ qubits; grouping all the preparation channels for the levels $l=0,1,...,L-2,L-1$ together and concatenating them with the syndrome channel from the correctness condition for $U_L$ and $\Enc_L$ gives rise to the state $\sigma_S$.
\end{proof}

\section{Fault-tolerant entanglement-assisted channel coding with concatenated quantum Hamming codes}
\label{sec-coding-thms}

In this section, we present upper and lower bounds on the achievable rates for fault-tolerant entanglement-assisted communication by constructing explicit fault-tolerant coding schemes for general channels using the concatenated quantum Hamming code. As we will illustrate in Figs.~\ref{fig:capacity-comp1}, we further compare our results to previous work, demonstrating that an implementation with this code allows us to obtain substantially higher achievable rates.

We begin by restating the following simple upper bound on the achievable communication rates for fault-tolerant entanglement-assisted communication.
From this upper bound, it is already clear that faulty gates in the encoding and decoding circuit prevent us from completely recovering the noiseless case unless the local gate error $p_0$ approaches zero.

\begin{theorem}[{\cite[Theorem~6.1]{BCMH24}}]\label{thm-cap-upper-bound} Let $p_0\geq0$. For every quantum channel $\mathcal{N}:\mathcal{M}_{d_A}\rightarrow \mathcal{M}_{d_B}$, we have 
    \[C^\mathrm{ea}_{\mathcal{F}(p_0)}(\mathcal{N}) \leq(1-p_0) C^\mathrm{ea}(\mathcal{N}).\]
\end{theorem}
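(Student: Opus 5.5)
The plan is to exhibit, for every admissible choice of encoder and decoder circuits, a particular fault pattern allowed by $\mathcal{F}(p_0)$ under which the whole scheme is equivalent to a coding scheme for an effective channel of the form $\mathcal{N}_{p_0}:=(1-p_0)\mathcal{N}+p_0\,\mathcal{T}$, where $\mathcal{T}$ is a replacement channel. Since the capacity is defined through a supremum over schemes that must tolerate every fault distribution compatible with the local stochastic model with parameter $p_0$, an upper bound for one admissible noise realization gives an upper bound on $C^{\mathrm{ea}}_{\mathcal{F}(p_0)}(\mathcal{N})$.

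\textbf{Step 1: choosing the fault pattern.} The last operation of $[\mathcal{E}]$ on each output qubit that enters a channel use is a location. Let each such location (or, for $d_A>2$, the group of locations producing one $d_A$-dimensional channel input) fail independently with probability $p_0$, and at failure apply a fully depolarizing Pauli channel, which is a uniform mixture of Paulis and is therefore allowed by the model. All other locations, in both $\mathcal{E}$ and $\mathcal{D}$, are left noiseless. Because the full Pauli twirl on a $d_A$-dimensional system sends every input to the maximally mixed state, each channel use sees
\[
(1-p_0)\mathcal{N}(\rho)+p_0\,\mathcal{N}(\mathbbm{1}/d_A).
\]
This map is $\mathcal{N}_{p_0}$ with $\mathcal{T}$ the constant channel that outputs $\mathcal{N}(\mathbbm{1}/d_A)$. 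The faulty fault-tolerant scheme is then an ordinary, noiseless entanglement-assisted scheme for $\mathcal{N}_{p_0}^{\otimes n}$, so $C^{\mathrm{ea}}_{\mathcal{F}(p_0)}(\mathcal{N})\le C^{\mathrm{ea}}(\mathcal{N}_{p_0})$.

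\textbf{Step 2: bounding the mutual information.} Using the BSST formula, it remains to show
\[
I(A:B)_{(\id\otimes\mathcal{N}_{p_0})(\phi)}\le (1-p_0)\,I(A:B)_{(\id\otimes\mathcal{N})(\phi)}
\]
for every pure $\phi_{AA'}$. I would argue this with a flag. Write $\mathcal{N}_{p_0}$ as the composition of a flagged channel, which applies $\mathcal{N}$ with flag $0$ and $\mathcal{T}$ with flag $1$, followed by discarding the flag. Data processing then gives
\[
I(A:B)\le I(A:BF)=I(A:F)+I(A:B|F)=0+(1-p_0)\,I(A:B)_{\mathcal{N}}+p_0\,I(A:B)_{\mathcal{T}}.
\]
Here $I(A:F)=0$ because the flag is independent of $A$, and $I(A:B)_{\mathcal{T}}=0$ because $\mathcal{T}$ is a replacement channel. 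Taking the supremum over $\phi$ gives $(1-p_0)C^{\mathrm{ea}}(\mathcal{N})$.

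\textbf{Main obstacle.} The delicate step is Step 1. I need to confirm that the model permits an adversarially chosen depolarizing fault at the final encoder location with rate exactly $p_0$, and that this fault factorizes across channel uses. I also need to handle $d_A$ not being a power of two, which I would do by embedding the input into qubits and twirling only on the support. A further issue is that the shared entanglement is available in the code space. The faults I place act only on channel inputs, so they do not touch the entanglement resource and the reduction is clean. This matches \cite[Theorem~6.1]{BCMH24}, whose argument carries over verbatim because it does not depend on which code is used.
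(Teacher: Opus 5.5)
Your argument is complete when each channel input is a single qubit ($d_A=2$). Make the last location on each input qubit fully depolarizing with probability $p_0$; then the flag bound $I(A:B)\le I(A:BF)=(1-p_0)\,I(A:B)_{\mathcal{N}}$ gives $(1-p_0)C^{\mathrm{ea}}(\mathcal{N})$. The paper gives no proof of its own, only a citation, and this is presumably the intended argument. You also do not need the faults to be independent across uses. Run the flag argument on all $n$ uses at once and use additivity of $C^{\mathrm{ea}}$; only the marginal failure probability $p_0$ of each use enters. So correlated failures, for instance from one two-qubit gate that is the last location for two different channel inputs, are harmless.

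\textbf{The gap is the extension to $d_A=2^{j_1}$ with $j_1\ge 2$.} Letting the $j_1$ final locations of one channel input fail jointly with probability $p_0$ is not an admissible fault distribution. The local stochastic model requires $\Pr[F\supseteq S]\le\prod_{j\in S}p_{0,j}\le p_0^{|S|}$, so $j_1$ distinct locations can be jointly faulty with probability at most $p_0^{j_1}$. Locations act on at most two qubits, and the encoder, not the adversary, decides which locations come last. With admissible per-qubit faults, each use sees $\mathcal{N}\circ\mathcal{D}_{p_0}^{\otimes j_1}$, where $\mathcal{D}_p(\rho)=(1-p)\rho+p\,\mathbbm{1}/2$. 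Your flag argument then removes only the branch in which all $j_1$ qubits are depolarized, which yields only $(1-p_0^{j_1})\,C^{\mathrm{ea}}(\mathcal{N})$.

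This cannot be repaired by a cleverer fault pattern, because in the qubit-circuit model of this paper the claimed factor fails for multi-qubit channels. Consider the following example.
\begin{itemize}
\item Let $\mathcal{N}:\mathcal{M}_2^{\otimes 3}\to\mathcal{M}_2^{\otimes 3}$ measure in the computational basis and output $|m\rangle\langle m|^{\otimes 3}$, with $m$ the majority bit. Its output is classical with two values, so $C^{\mathrm{ea}}(\mathcal{N})=1$.
\item Encode the message with an outer classical code.
\item Prepare the three input qubits of use $i$ separately in $|c_i\rangle$, using classically controlled single-qubit operations; the model's classical processing is faultless.
\item Measure the three outputs separately and take a majority vote.
\end{itemize}
A use is decoded wrongly only if at least two of its $O(1)$ single-qubit locations are faulty. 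This has probability $O(p_0^2)$, and the set of wrongly decoded uses is again local stochastic. Hence the rate $1-h_2(O(p_0^2))$ is achievable, and it exceeds $(1-p_0)\,C^{\mathrm{ea}}(\mathcal{N})=1-p_0$ for small $p_0$.

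So your proof establishes the statement for $d_A=2$. Placing the depolarizing fault on the receiver's first location on the output instead, and reusing the same flag argument, also gives it for $d_B=2$. For larger $d_A$ and $d_B$ the statement must be restricted in this way, or read in a noise model where a single fault can corrupt an entire $d_A$-dimensional channel input.
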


Here, we show the following lower bound:

\begin{theorem} \label{thm-cap-lower-bound}
For any quantum channel $\mathcal{N}:\mathcal{M}_{2}^{\otimes j_1} \rightarrow \mathcal{M}_{2}^{\otimes j_2}$ and any local gate error rate $p_0<\tilde{p}_\mathrm{th}$ below the threshold $\tilde{p}_\mathrm{th}$ in Corollary~\ref{thm-threshold-interfaced}, we have
\begin{align}
    C_{\mathcal{F}(p_0)}^\mathrm{ea}(\mathcal{N}) \geq C^\mathrm{ea}(\mathcal{N})-16p_0(j_1+j_2)j_2 - 2 (1+4(j_1+j_2)p_{0})h_2\qty(\frac{4(j_1+j_2)p_{0}}{1+4(j_1+j_2)p_{0}}).
\end{align}
\end{theorem}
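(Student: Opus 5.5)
We build a fault-tolerant coding scheme from an ideal entanglement-assisted code for an \emph{effective channel}, using Corollary~\ref{thm-threshold-interfaced}.

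First, we fix an optimal input $\rho_{AA'}$ for $C^\mathrm{ea}(\mathcal{N})$. For blocklength $n$, we choose an ideal $(n,m,\epsilon,R_\mathrm{ea})$ entanglement-assisted coding scheme $(\mathcal{E},\mathcal{D})$ for an effective channel $\mathcal{N}_{p_0}$. The circuits for $\mathcal{E}$ and $\mathcal{D}$ are then implemented as interfaced circuits in $\mathcal{Q}^{(L)}$. The encoder ends with $\Dec^{(L)}$, giving physical qubits to the channel. The decoder starts with $\Enc^{(L)}$ on the received physical qubits. The entanglement resource is assumed to lie in the code space, so no interface is needed on the receiver's share.

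By Corollary~\ref{thm-threshold-interfaced}, the faulty encoder equals, up to $\tilde p_L|\Loc(\mathcal{E})|$ in $\|\cdot\|_{1\to1}$, the map $\mathcal{N}_\mathrm{dec}\circ((\mathcal{E}\circ\cdot)\otimes\sigma_S)$. The faulty decoder similarly equals $\mathcal{D}\circ\mathcal{N}_\mathrm{enc}$ with its syndrome state. Choosing $L=L(n)$ with $(p_0/\tilde p_\mathrm{th})^{2^L}$ decaying faster than circuit size grows (polynomial in $n$ for a fixed code family) makes these errors vanish. Thus the ideal $\mathcal{E},\mathcal{D}$ effectively communicate through $\mathcal{N}_\mathrm{enc}^{\otimes}\circ\mathcal{N}^{\otimes n}\circ\mathcal{N}_\mathrm{dec}^{\otimes}$.

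The key structural step is to bound what the interface noise does per channel use. Only the level-$1$ interfaces carry non-negligible error: $\tilde p_0=p_0$ per physical qubit, while $\sum_{l\ge2}\tilde p_{l-1}$ is $O(p_0^2)$ and can be absorbed or arranged to vanish. Grouping each channel use's $j_1$ input and $j_2$ output qubits, the effective channel per use is $(1-q)\mathcal{N}+q\mathcal{W}$, with $q\le 2(j_1+j_2)p_0$ up to the constant factor $2$ in the theorem. Here $\mathcal{W}$ is arbitrary and possibly correlated across uses through the shared syndrome state $\sigma_S$ and the higher-level $\mathcal{V}$ channels.

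The constant-overhead structure is what keeps this correlation limited. Syndrome qubits make up only a vanishing fraction $1-K^{(L)}/N^{(L)}$ of the block, so correlated failures act on few uses. One might hope to remove them by interleaving or by treating them as a negligible sublinear correction. This is the main obstacle: the correlated higher-level failures must be controlled so that they do not reduce the rate.

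Finally, we apply a continuity bound for the mutual information, in the form used in \cite{CMH20,BCMH24} but without the AVP term. The simpler correlation structure lets us treat the effective channel as close to an IID mixture. The rate $\inf I(B:A)$ over the effective channel family then differs from $I(B:A)_{(\id\otimes\mathcal{N})(\rho)}$ by at most $2\delta j_2+(1+\delta/2)h_2(\delta/(2+\delta))$-type terms, via Alicki--Fannes--Winter, with $\delta=8(j_1+j_2)p_0$. Collecting the constants gives $16p_0(j_1+j_2)j_2+2(1+4(j_1+j_2)p_0)h_2(\cdot)$. The factor $2$ on the entropy term comes from applying continuity separately to $H(B)$ and $H(AB)$ (or to the encoder-side and decoder-side interface noise).

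Achievability for the effective channel then follows from the standard entanglement-assisted coding theorem \cite{BSST02} applied to channels of this form. Combined with the vanishing fault-tolerance error, this yields the stated lower bound.
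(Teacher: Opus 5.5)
There is a genuine gap, and it sits exactly at the step you call ``the main obstacle'' and then leave open. After Corollary~\ref{thm-threshold-interfaced}, your ideal $\mathcal{E},\mathcal{D}$ communicate through $\mathcal{N}_{\mathrm{enc}}\circ\mathcal{N}^{\otimes K^{(L)}}\circ\mathcal{N}_{\mathrm{dec}}$. This map also acts on a syndrome state $\sigma_S$ that neither party controls and that can be arbitrarily correlated across all uses. In addition, each failure channel $\mathcal{V}_{\mathrm{dec},l},\mathcal{V}_{\mathrm{enc},l}$ acts jointly on all $K^{(l)}$ channel uses below one level-$l$ interface. So there is no per-use channel $(1-q)\mathcal{N}+q\mathcal{W}$, nothing is ``close to an IID mixture,'' and the memoryless coding theorem of \cite{BSST02} does not apply.

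The idea you are missing is the paper's reinterpretation of the problem. It treats one level-$L$ register (a block of $K^{(L)}$ uses) as a single use of a channel with an adversarially chosen extra input. This is the arbitrarily-varying-perturbation (jammer) model of \cite{BDNW18,Belzig24}, whose entanglement-assisted capacity is the single-letter quantity $C^{\mathrm{ea}}_{\mathcal{J}}=\sup_\rho\inf_\sigma I(A':B)$. The correlations in $\sigma_S$ are then absorbed by the model itself, which yields the rate $\frac{1}{K^{(L)}}C^{\mathrm{ea}}_{\mathcal{J}}(\mathcal{N}_{\mathrm{enc}}\circ\mathcal{N}^{\otimes K^{(L)}}\circ\mathcal{N}_{\mathrm{dec}})$. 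Your ``$\inf I(B:A)$ over the effective channel family'' points in this direction, but you give no coding theorem that achieves it under a correlated $\sigma_S$.

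The second missing ingredient is how continuity is used. There is no per-use channel to apply it to. The whole-block channel differs from $\mathcal{N}^{\otimes K^{(L)}}\otimes\Tr$ by order $\min\{1,K^{(L)}p_0\}$, so a single continuity estimate on the block gives nothing. The paper instead proceeds level by level:
\begin{itemize}
\item It defines $\mathcal{N}_{\mathrm{eff},l}$ recursively.
\item It shows $\|\mathcal{N}_{\mathrm{eff},l}(\cdot\otimes\sigma)-(\mathcal{N}_{\mathrm{eff},l-1}^{\otimes K_l}\otimes\Tr)(\cdot\otimes\sigma)\|_{1\to1}\le 4\tilde p_{l-1}$ for all $\sigma$.
\item It applies Shirokov's bound \cite{Shirokov17} to $C^{\mathrm{ea}}_{\mathcal{J}}$ at each level, at cost $8\tilde p_{l-1}K^{(l)}+2(1+4\tilde p_{l-1})h_2(\cdot)$.
\item It uses single-letterness of $C^{\mathrm{ea}}_{\mathcal{J}}$ to divide by $K^{(l)}$ and telescope.
\end{itemize}
The higher levels are not ``arranged to vanish''. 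Summing $8\tilde p_{l-1}$ as a geometric series is exactly what turns $8p_0$ into $16p_0$. Your constant bookkeeping ($\delta=8(j_1+j_2)p_0$, and a factor $2$ from $H(B)$ and $H(AB)$) is therefore fitted to the answer rather than derived.

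Finally, your heuristic that syndrome qubits are a vanishing fraction $1-K^{(L)}/N^{(L)}$ is false. $N^{(L)}/K^{(L)}$ tends to a constant (about $36$), so syndrome qubits are the large majority. What constant overhead actually buys is bounded dimension of the uncontrolled input per channel use: $j_1(A-1)$ qubits with $A=N^{(L)}/K^{(L)}=O(1)$, versus $j_1(7^L-1)$ for the Steane code. That is what places the problem inside the jammer model with a usable continuity bound.
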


This bound recovers the noiseless case as $p_0\to 0$. 
We consider quantum channels $\mathcal{N}:\mathcal{M}_{d_A}\rightarrow \mathcal{M}_{d_B}$ with dimensions $d_A$ and $d_B$ being a power of $2$, i.e., mapping $j_1=\log(d_A)$ qubits to $j_2=\log(d_B)$ qubits because the concatenated quantum Hamming code and its interfaces act on blocks of qubits. However, any quantum channel can be embedded into a quantum channel that acts on blocks of qubits in this way.

As outlined in Section~\ref{sec-strategy}, we will now use the level conversion argument in Theorem~\ref{thm-level-conversion} to develop a level-$l$ circuit for the channel's coding scheme as shown in Fig.~\ref{fig:coding}. We will use a fault-tolerant implementation in the same quantum error-correcting code $\mathcal{Q}^{(L)}$ with the same concatenation level $L$ for both the channel's encoder and the channel's decoder. This restriction, however, is not imposed by our construction; analogous results can be obtained for fault-tolerant implementations in two distinct codes (as long as they both fulfill some version of a level-conversion theorem) or with different code levels.

Due to the effective scaling of the error in Theorem~\ref{thm-level-conversion}, the analysis of the fault-tolerant channel coding for a noisy channel $\mathcal{N}$ reduces to the analysis of faultless channel coding for an effective channel model with some channel $\mathcal{N}_{\text{eff},L} $, which receives an additional input state $\sigma_\mathcal{S}$, as presented in Fig.~\ref{fig:coding}.
\begin{figure*}[t]
    \centering
    \includegraphics[scale=0.8]{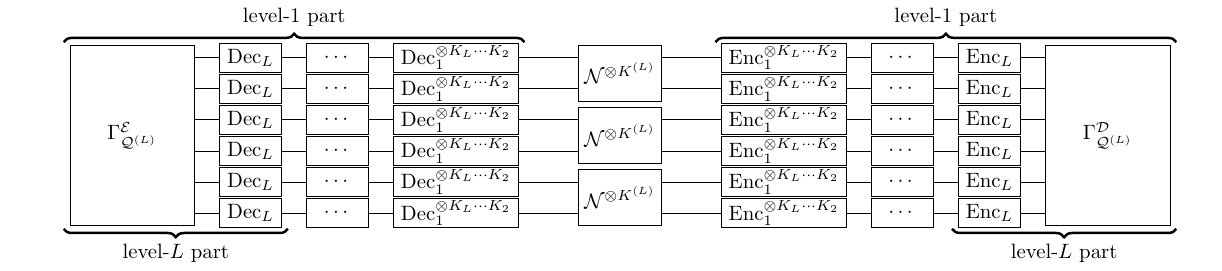}\\
    \raisebox{1.5cm}{\hbox{\Large{$\Rightarrow$}}}
    \includegraphics[scale=0.8]{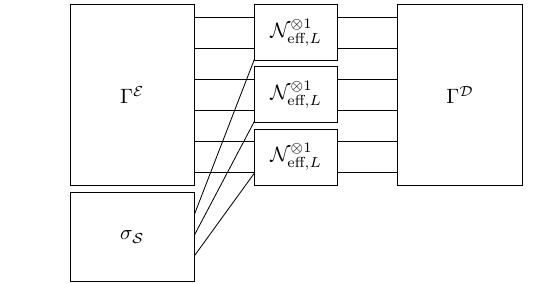}
    \caption{A schematic on the level-$L$ circuit for fault-tolerant channel coding for a quantum channel $\mathcal{N}$ on the top and the corresponding task for an effective channel ${\mathcal{N}}_{\text{eff},L}$ at the bottom.}
    \label{fig:coding}
\end{figure*}
This result for the concatenated quantum Hamming code is an analogue of the corresponding result for the concatenated 7-qubit Steane code in \cite[Lemma~III.4]{CMH20}. However, we emphasize that the effective channel $\mathcal{N}_{\text{eff},L}:\mathcal{M}_2^{\otimes j_1} \otimes \mathcal{M}_2^{\otimes j_1 (A-1)} \rightarrow \mathcal{M}_2^{\otimes j_2}$ with $A=\frac{N^{(L)}}{K^{(L)}}$ differs from its counterpart for the concatenated 7-qubit Steane code, which acts on quantum states in $\mathcal{M}_2^{\otimes j_1} \otimes \mathcal{M}_2^{\otimes j_1 (7^L-1)}$. This difference is a consequence of the respective code's overhead and is very significant in the context of quantum communication (see also Remark~\ref{remark-sigma-overhead}).

\begin{proof}[Proof of Theorem~\ref{thm-cap-lower-bound}] In order to simplify the presentation, first suppose that $j_1=j_2=1$.

\textbf{Mapping the problem to an information theoretic setup with arbitrarily varying perturbations.}
As a first step, we show that a rate $R$ of fault-tolerant entanglement-assisted communication can be achieved if $R$ is an achievable rate of a coding scheme an information-theoretic setup that was previously considered in \cite{BDNW18,Belzig24,CYB25}.

Let $\mc E$ and $\mc D$ be CPTP maps that we specify later. Let $\Gamma^{\mc E}$ and $\Gamma^{\mc D}$ denote the circuits that implement those maps.

We choose a level $L$ such that \begin{align}
    \qty(\qty(\frac{p_0}{\tilde{p}_\mathrm{th}})^{2^{L}}\tilde{p}_\mathrm{th}) (|\Loc( \Gamma^{\mc E} )| + |\Loc( \Gamma^{\mc D} )|)\leq \frac{1}{b}
\end{align}

Let $\Gamma^{\mc E}_{Q^{(L)}}$ and $\Gamma^{\mc D}_{Q^{(L)}}$ denote the implementations of $\Gamma^{\mc E}$ and $\Gamma^{\mc D}$ in $\mc Q^{(L)}$.

Then, by Corollary~\ref{thm-threshold-interfaced}, we have
\begin{align}
&\left\| \left[  \Gamma^{\mc D}_{Q^{(L)}} \circ \qty(\Enc^{(L)})^{\otimes b}\circ \mc N^{\otimes bK^{(L)}} \circ  \qty(\Dec^{(L)})^{\otimes b} \circ  \Gamma^{\mc E}_{Q^{(L)}}\right]_{\mc F (p_0)} - \Gamma^{\mc D} \circ \qty(\mc N_{\textrm{enc}})^{\otimes b}\circ \mc N^{\otimes bK^{(L)}} \circ  \qty(\mc N_{\textrm{dec}})^{\otimes b} \circ ( \Gamma^{\mc E}\otimes \sigma) \right\|_{1\to 1} \notag\\
& \leq \frac{1}{b}
\end{align}
If, for some $R>0$, we have that 
\begin{align} \label{eq-coding-error-norm}
& \left\| \Gamma^{\mc D} \circ \qty(\mc N_{\textrm{enc}})^{\otimes b}\circ \mc N^{\otimes bK^{(L)}} \circ  \qty(\mc N_{\textrm{dec}})^{\otimes b} \circ ( \Gamma^{\mc E}\otimes \sigma) - \id^{\otimes nR } \right\|_{1\to 1}  \to 0,
\end{align}
we also obtain that the following coding error goes to zero: \begin{align}
&\left\|  \left[ \Gamma^{\mc D}_{Q^{(L)}} \circ \qty(\Enc^{(L)})^{\otimes b}\circ \mc N^{\otimes bK^{(L)}} \circ  \qty(\Dec^{(L)})^{\otimes b} \circ  \Gamma^{\mc E}_{Q^{(L)}} \right]_{\mc F (p_0)} - \id^{\otimes nR } \right\|_{1\to 1} \notag\\& \leq \frac{1}{b} +\left\| \Gamma^{\mc D} \circ \qty(\mc N_{\textrm{enc}})^{\otimes b}\circ \mc N^{\otimes bK^{(L)}} \circ  \qty(\mc N_{\textrm{dec}})^{\otimes b} \circ ( \Gamma^{\mc E}\otimes \sigma) - \id^{\otimes nR } \right\|_{1\to 1}\\&
\to 0,
\end{align}
which in turn implies that the fidelity in Eq.~\eqref{eq-coding-error-fidelity} goes to one, and thus we have a valid coding scheme for fault-tolerant entanglement-assisted communication achieving the rate $R$.

Noting that
\begin{align}
    \Gamma^{\mc D} \circ \qty(\mc N_{\textrm{enc}})^{\otimes b}\circ \mc N^{\otimes bK^{(L)}} \circ  \qty(\mc N_{\textrm{dec}})^{\otimes b} \circ ( \Gamma^{\mc E}\otimes \sigma) &= \Gamma^{\mc D} \circ \qty(\mc N_{\textrm{enc}}\circ \mc N^{\otimes K^{(L)}} \circ  \mc N_{\textrm{dec}})^{\otimes b} \circ \qty(\Gamma^{\mc E}\otimes \sigma),
\end{align}
we find a choice of $R, \Gamma^{\mc E}, \Gamma^{\mc D}$ such that Eq.~\eqref{eq-coding-error-norm} does indeed go to zero: by choosing the encoder and decoder for the arbitrarily varying coding scheme for $\mc N_{\textrm{enc}}\circ \mc N^{\otimes K^{(L)}} \circ  \mc N_{\textrm{dec}}$ from \cite{Belzig24}, we obtain that the coding error in Eq.~\eqref{eq-coding-error-norm} vanishes for rates $\tilde{r}$ (in bits per copy of $\mc N_{\textrm{enc}}\circ \mc N^{\otimes K^{(L)}} \circ  \mc N_{\textrm{dec}}$) given by
\begin{align}
    \tilde{r} \leq  C^{\mathrm{ea}}_{\mc J}\qty(\mc N_{\textrm{enc}}\circ \mc N^{\otimes K^{(L)}} \circ  \mc N_{\textrm{dec}})=\sup_{\rho} \inf_{\sigma} I(A':B)_{\id_A\otimes \qty(\mc N_{\textrm{enc}}\circ \mc N^{\otimes K^{(L)}} \circ  \mc N_{\textrm{dec}}) (\rho\otimes \sigma)}, 
\end{align}
and we obtain achievable rates
\begin{align} \label{eq-lowerbound-fqavc}
    C_{\mathcal{F}(p_0)}^\mathrm{ea}(\mc N) \geq \frac{1}{K^{(L)}} C^{\mathrm{ea}}_{\mc J}\qty(\mc N_{\textrm{enc}}\circ \mc N^{\otimes K^{(L)}} \circ  \mc N_{\textrm{dec}}).
\end{align}

\textbf{Effective channel construction.}
We will now make use of the continuity of the capacity $C^{\mathrm{ea}}_{\mc J}$ to connect this expression back to the capacity of the original channel $\mc N$. Note that this is the step that fails in the previous construction from \cite{BCMH24} because the continuity of this capacity is only known to hold for perturbations with a dimension restriction.

In terms of the decomposition of $\mc N_{\textrm{enc}}$ and $\mc N_{\textrm{dec}}$ from Corollary~\ref{thm-threshold-interfaced}, we rewrite the channel as follows: 

\begin{align}
    &\mc N_{\textrm{enc}}\circ \mc N^{\otimes K^{(L)}} \circ  \mc N_{\textrm{dec}}
    \notag\\&=\mc N_{\mathrm{enc},L} \circ \mc N_{\mathrm{enc},L-1}^{\otimes K_L}\circ \cdots \circ \mc N_{\mathrm{enc},2}^{\otimes \frac{K^{(L)}}{K^{(2)}}} \circ \mc N_{\mathrm{enc},1}^{\otimes \frac{K^{(L)}}{K_1}} \circ \mc N^{\otimes K^{(L)}} \circ   \mc N_{\mathrm{dec},1}^{\otimes {K^{(L)}}}\circ  \mc N_{\mathrm{dec},2}^{\otimes \frac{K^{(L)}}{K^{(2)}}}\circ \cdots \circ \mc N_{\mathrm{dec},L-1}^{\otimes K_L }\circ \mc N_{\mathrm{dec},L}
    \\&=\qty(\mc N_{\textrm{enc},L}) \circ \qty(\mc N_{\textrm{enc},L-1})^{\otimes K_L}\circ \cdots \circ \qty(\mc N_{\textrm{enc},2})^{\otimes \frac{K^{(L)}}{K^{(2)}}}  \circ \qty(\mc N_{\textrm{eff},1})^{\otimes K^{(L)}}\circ  \mc N_{\mathrm{dec},2}^{\otimes \frac{K^{(L)}}{K^{(2)}}}\circ \cdots \circ \mc N_{\mathrm{dec},L-1}^{\otimes b K_L }\circ \mc N_{\mathrm{dec},L}
    \\&= \qty(\mc N_{\textrm{enc},L}) \circ \qty(\mc N_{\textrm{enc},L-1})^{\otimes K_L}\circ \cdots \circ \qty(\mc N_{\textrm{eff},l})^{\otimes \frac{K^{(L)}}{K^{(l)}}} \circ \cdots \circ \mc N_{\mathrm{dec},L-1}^{\otimes  K_L }\circ \mc N_{\mathrm{dec},L}
   \\&= \mc N_{\textrm{eff},L},
\end{align}
   where we recursively construct an effective channel $\mc N_{\eff,l}:\mc M_2^{\otimes K^{(l)}}\to\mc M_2^{\otimes K^{(l)}}$ out of the interfaces and the communication channels as
   \begin{align}
       \mc N_{\eff,l}&=\mc N_{\textrm{enc},l-1} \circ \mc N_{\eff,l-1}^{\otimes K^{(l)}} \circ \mc N_{\textrm{dec},l-1}\\& =
    \Big((1-2\tilde{p}_{l-1} )\big( \mc N_{\eff,l-1}^{\otimes K_{l}}\otimes \Tr \big) +2\tilde{p}_{l-1} {\mc W}_{l-1}\Big),
\end{align}
for some channel $\mc W_l$ for each $l$.

For each level $l$ and any state $\sigma$, we thus have \begin{align} \label{eq-continuity-1}
    \left\| \mc N_{\eff,l} (\cdot\otimes \sigma)- \mc N_{\eff,l-1}^{\otimes K_l} \otimes \Tr (\cdot\otimes \sigma) \right\|_{1\to 1} \leq 4 \tilde{p}_{l-1}.
\end{align}

Following \cite{Shirokov17,Belzig24}, for two channels $\mc{N}_1,\mc{N}_2:\mc M_{d_A}\otimes \mc M_{d_S}\rightarrow \mc M_{d_B}$,
we note that if we have,  for all $\sigma \in \mc M_{d_S}$,
\begin{align}
\label{eq:M_sigma}
    \|\mc N_1(\cdot \otimes \sigma ) -\mc N_2 (\cdot \otimes \sigma )\|_{1\to 1}\leq \delta,
\end{align}
then it holds that
\begin{align*}
     C^{\textrm{ea}}_{\mc J}(\mc N_1 )&
     =\inf_{\sigma} C^{\textrm{ea}}(\mc N_1 (\cdot \otimes \sigma) )\\&
     \leq C^{\textrm{ea}}(\mc N_1 (\cdot \otimes \sigma') )
     \\ & \leq C^{\textrm{ea}}(\mc N_2 (\cdot \otimes \sigma') )-g(\delta)\\
     & =  C^{\textrm{ea}}_{\mc J}(\mc N_2 )-g(\delta)
\end{align*}
with $g(\delta)=2\delta\log(d_A) +2 (1+\delta)h_2\qty(\frac{\delta}{1+\delta})$, where we select $\sigma'$ to be the state that achieves the infimum for $C^{\textrm{ea}}_{\mc J}(\mc N_2 )$. The first and last equality follow from \cite{Belzig24} and the second inequality follows from~\cite[Corollary~1]{Shirokov17}.

Using this continuity result, we obtain
\begin{align} \label{eq-continuity-2}
   \qty| C^{\mathrm{ea}}_{\mc J}(\mc N_{\eff,l}) -  C^{\mathrm{ea}}_{\mc J}(\mc N_{\eff,l-1}^{\otimes K_l} \otimes \Tr ) |&=| C^{\mathrm{ea}}_{\mc J}(\mc N_{\eff,l}) -  C^{\mathrm{ea}}_{\mc J}(\mc N_{\eff,l-1}^{\otimes K_l}) |\\&\leq 8 \tilde{p}_{l-1} K^{(l)} +2 (1+4\tilde{p}_{l-1})h_2\qty(\frac{4\tilde{p}_{l-1}}{1+4\tilde{p}_{l-1}}).
\end{align}

Repeated application of Eq.~\eqref{eq-continuity-2} for levels $l=0,1,\cdots,L$ gives 
\begin{align}
     \qty| \frac{1}{K^{(L)}} C^{\mathrm{ea}}_{\mc J}(\mc N_{\eff,L}) -  C^{\textrm{ea}}(\mc N) |& = \qty| \frac{1}{K^{(L)}} C^{\mathrm{ea}}_{\mc J}(\mc N_{\eff,L}) -  C^{\textrm{ea}}(\mc N) + \frac{1}{K^{(L)}}\sum_{l=1}^L (C^{\mathrm{ea}}_{\mc J}(\mc N_{\eff,l-1}^{\otimes K_l}) -C^{\mathrm{ea}}_{\mc J}(\mc N_{\eff,l-1}^{\otimes K_l}) )| \\&
    = \qty|\sum_{l=1}^L  \qty( \frac{1}{K^{(l)}} \qty(C^{\mathrm{ea}}_{\mc J}(\mc N_{\eff,l}) -  C^{\mathrm{ea}}_{\mc J}(\mc N_{\eff,l-1}^{\otimes K_l}) ))|\\&
    \leq\sum_{l=1}^L  \qty( \frac{1}{K^{(l)}} \qty|C^{\mathrm{ea}}_{\mc J}(\mc N_{\eff,l}) -  C^{\mathrm{ea}}_{\mc J}(\mc N_{\eff,l-1}^{\otimes K_l}) |)\\&
    \leq \sum_{l=1}^L \qty( \frac{1}{K^{(l)}}  \qty(8 \tilde{p}_{l-1} K^{(l)} +2 (1+4\tilde{p}_{l-1})h_2\qty(\frac{4\tilde{p}_{l-1}}{1+4\tilde{p}_{l-1}}) ) )\\& \leq \sum_{l=1}^L  \qty(8 \tilde{p}_{l-1} ) + 2 (1+4p_{0})h_2\qty(\frac{4p_{0}}{1+4p_{0}}) 
    \\& \leq \sum_{l=1}^L  \qty(8 \qty(\frac{p_0}{\tilde{p}_\mathrm{th}})^{2^{l-1}}\tilde{p}_\mathrm{th} ) + 2 (1+4p_{0})h_2\qty(\frac{4p_{0}}{1+4p_{0}})
    \\&\leq \frac{8p_0}{1-\frac{p_0}{\tilde{p}_\mathrm{th}}}+2 (1+4p_{0})h_2\qty(\frac{4p_{0}}{1+4p_{0}})
    \\&\leq 16p_0 + 2 (1+4p_{0})h_2\qty(\frac{4p_{0}}{1+4p_{0}}).
\end{align}
The first equality is obtained by adding and subtracting the same terms. The second equality is obtained by rearranging the terms and making use of the fact that $C^{\mathrm{ea}}_{\mc J}$ is single-letter \cite{Belzig24,CYB25}. The first inequality is due to the triangle inequality; the second inequality is obtained by inserting Eq.~\eqref{eq-continuity-2}. Then, in the third inequality, second term is upper bounded because $p_0\leq p_l$ and $L\leq K^{(l)}$. In the fourth inequality, we use Eq.~\eqref{eq:error_rate_universal}, and in the fifth inequality, we use properties of the geometric series. The last inequality holds because we assume that $\tilde{p}_{0}\leq \frac{1}{2}p_{\textrm{th}}$.

\textbf{Arbitrary (finite) channel dimension $j_1,j_2$.}
In this case, we design a coding scheme so that \begin{align}
&\left\| \Gamma^{\mc D} \circ \qty(\mc N_{\textrm{enc}})^{\otimes j_2 b}\circ \mc N^{\otimes bK^{(L)}} \circ  \qty(\mc N_{\textrm{dec}})^{ \otimes j_1 b} \circ ( \Gamma^{\mc E}\otimes \sigma) - \id^{\otimes nR } \right\|_{1\to 1}  \to 0,
\end{align}
i.e. a coding scheme for the communication model from \cite{BDNW18,Belzig24} for the channel \begin{align}
  \mc N_{\textrm{eff},L,j_1,j_2}=  \qty(\mc N_{\textrm{enc}})^{\otimes j_2 }\circ \mc N^{\otimes K^{(L)}} \circ  \qty(\mc N_{\textrm{dec}})^{\otimes j_1}.
\end{align}
We proceed by constructing effective channels (using the union bound)
\begin{equation}
    \mc N_{\eff,l,j_1,j_2}=
    \Big((1-2(j_1+j_2)\tilde{p}_{l-1} )\big( \mc N_{\eff,l-1,j_1,j_2}^{\otimes K_{l}}\otimes \Tr \big) +2(j_1+j_2)\tilde{p}_{l-1} {\mc W}_{l}\Big) \Big( \cdot \otimes \sigma_{l}\Big)
\end{equation}
for some channel $\mc W_l$ and some state $\sigma_l$ for each $l$; then, we apply the continuity bounds from Eq.~\eqref{eq-continuity-1} (with an error of $(j_1+j_2)\tilde{p}_l$ instead of $\tilde{p}_l$) and \eqref{eq-continuity-2} (with an extra factor of $j_2$ from the dimension term) to obtain
\begin{align}
    & \qty| \frac{1}{K^{(L)}} C^{\mathrm{ea}}_{\mc J}(\mc N_{\eff,L,j_1,j_2}) -  C^{\textrm{ea}}(\mc N) |\leq 16(j_1+j_2)p_0j_2 + 2 (1+4(j_1+j_2)p_{0})h_2\qty(\frac{4(j_1+j_2)p_{0}}{1+4(j_1+j_2)p_{0}}).
\end{align}

\end{proof}

\begin{figure}[t]
    \centering
    \includegraphics[width=10cm]{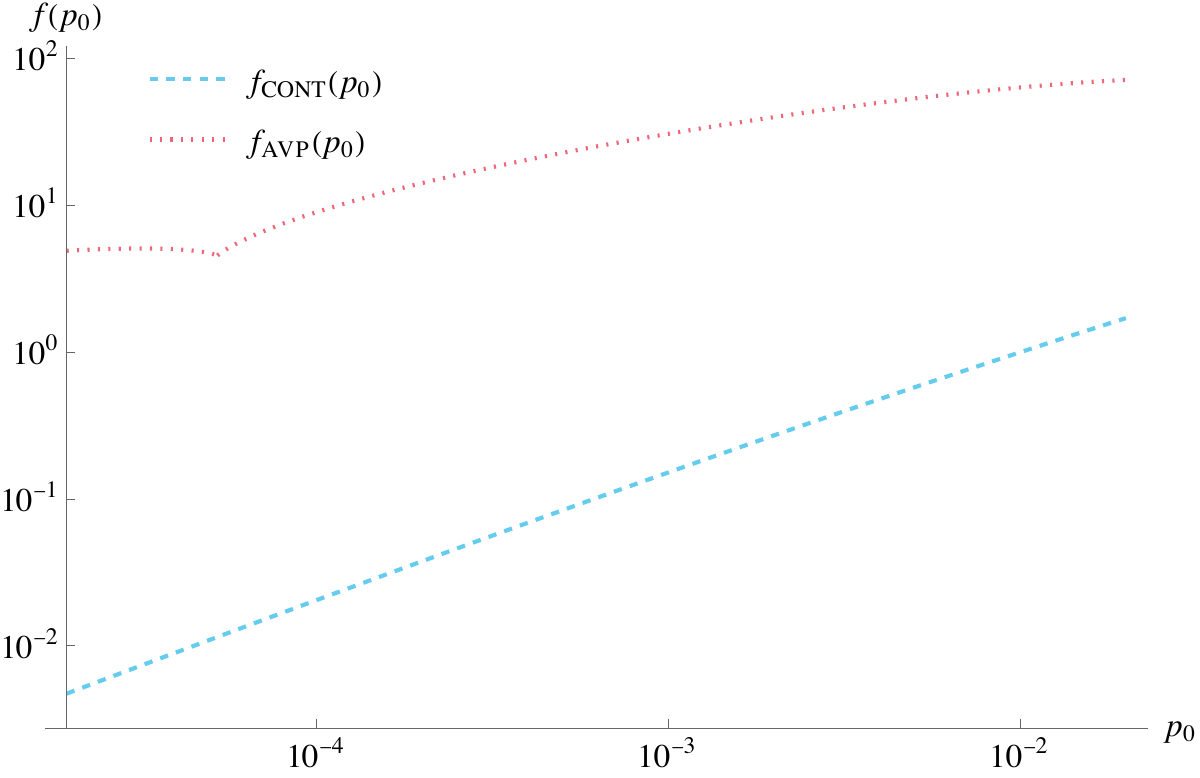}
    \caption{Scaling of the functions by which the noiseless capacity is reduced depending on the local gate error rate $p_0$.}
    \label{fig:capacity-red}
\end{figure}

\begin{remark}[Fault-tolerant channel coding versus a direct application of protocols for fault-tolerant quantum computation] \label{remark-multiplicative-bound}
We briefly discuss an alternative strategy for constructing a fault-tolerant coding scheme, using the example of the entanglement-assisted classical capacity. By simply performing the capacity-achieving coding scheme for more channel copies without leaving the code space, we could hope to lower-bound the fault-tolerant capacity in terms of the faultless capacity and the code's space overhead.

More precisely, let $\mathcal{N}:\mathcal{M}_{d_A}\rightarrow \mathcal{M}_{d_B}$ be a quantum channel. Then, let $\mathcal{E}:\mathbbm{C}^{2^m}\to \mathcal{M}_2^{j_1n}$, $\mathcal{D}:\mathcal{M}_2^{j_2n}\to\mathbbm{C}^{2^m} $ denote an $(n,m,\delta)$-coding scheme for entanglement-assisted communication via $n$ copies of the channel $\mathcal{N}:\mathcal{M}_2^{j_1}\to\mathcal{M}_2^{j_2} $. 
Furthermore, let $\Gamma^{\mathcal{E}}$ and $\Gamma^{\mathcal{D}}$ be the circuits that implement the channels $\mathcal{E}$ and $\mathcal{D}$, and let $\Gamma^{\mathcal{E}}_L:\mathbbm{C}^{2^m}\to \mathcal{M}_2^{j_1n \frac{N^{(L)}}{K^{(L)}}}$ and $\Gamma^{\mathcal{D}}_L:\mathcal{M}_2^{j_1n \frac{N^{(L)}}{K^{(L)}}}\to \mathbbm{C}^{2^m}$ denote the implementations of these circuits in $\mathcal{Q}^{(L)}$.

    Do these circuits form a good coding scheme for $n \frac{N^{(L)}}{K^{(L)}}$ copies of $\mathcal{N}$? If this construction achieves a non-zero communication rate $r$, then $r\frac{K^{(L)}}{N^{(L)}}\leq \frac{r}{36}$ is an achievable rate for fault-tolerant entanglement-assisted communication, where $36$ is an estimate of the inverse of the rate of concatenated quantum Hamming codes in Ref.~\cite{yamasaki2022timeefficient}. However, to the best of our knowledge, it is not known for which channels this is a non-trivial lower bound, and to what extent the fault-tolerant implementation of a fixed coding scheme is a sufficiently good coding scheme for more copies of the same channel for entanglement-assisted communication. Even if non-trivial, the resulting multiplicative lower bound is independent of $p_0$, cannot recover the faultless setting, and is generally less tight than the additive lower bound obtained in Theorem~\ref{thm-cap-lower-bound}.
\end{remark}

\begin{remark}[Difference between our achievable communication rate with the concatenated quantum Hamming code and the previous result with concatenated $7$-qubit Steane code]\label{remark-sigma-overhead}
    In both this result and the previous rate bound \cite{BCMH24}, we find an equivalence between the strategy for constructing a fault-tolerant coding scheme and an effective communication scenario via a different channel acting on an additional input state $\sigma_S$ which may be correlated between the channel uses, and which the sender and the receiver cannot influence. In general, this state may depend on the specific channel, our choice of coding scheme for the channel, our choice of gate set and unitary compilation method, our choice of error-correcting code, level, and compilation procedure, as well as the fault pattern affecting the circuit. This kind of communication model was first studied in~\cite{BDNW18} and has well-understood coding rates only in special cases, for example, when $\sigma_S$ is a product state on its $n$ subsystems. For completely general quantum states $\sigma_S$, the achievable communication rates are not known and, in particular, not known to be non-zero. For this case, which is the relevant case for fault-tolerant communication with the 7-qubit Steane code without making any further assumptions, \cite[Lemma~IV.10]{CMH20} gives a postselection-type bound which they use in order to construct coding schemes and achievable rates for channels of a particular form, namely a convex combination of a channel $\mathcal{N}\otimes\Tr_S:\mathcal{M}_2^{\otimes j_1} \otimes \mathcal{M}_2^{\otimes j_1(7^L-1)}\to \mathcal{M}_2^{\otimes j_2}$, where the extra system is traced out, and a general channel $\mathcal{W}_L:\mathcal{M}_2^{\otimes j_1} \otimes \mathcal{M}_2^{\otimes j_1 (7^L-1)}\to \mathcal{M}_2^{\otimes j_2}$. Similar results are obtained for other error models in \cite[Theorem~63,64]{CFG24}. These results could also be directly applied to fault-tolerant communication with the concatenated quantum Hamming code, which would lead to similar achievable rates to the ones found in~\cite{CMH20,BCMH24} for the concatenated 7-qubit Steane code. However, due to the scaling of the dimension of $\sigma_S$, we recover a special case of the communication model from~\cite{BDNW18}, for which we can simplify and tighten this part of the argument, and for which explicit achievable rates are known~\cite{BDNW18,Belzig24}. Due to this key difference in the dimension scaling of $\sigma_S$, the concatenated quantum Hamming code can be used to construct a fault-tolerant coding scheme with substantially higher achievable rates than the concatenated 7-qubit Steane code.
\end{remark}

The explicit expression for achievable fault-tolerant communication rates is thus given by the noiseless capacity reduced by a function
\begin{align}
    f_\mathrm{CONT}(p_0)=16(j_1+j_2)p_0j_2+(1+4(j_1+j_2)p_0)h_2\qty(\frac{4(j_1+j_2)p_0}{1+4(j_1+j_2)p_0}).
\end{align}
The rates achieved by previous works using the concatenated 7-qubit Steane code for fault-tolerant entanglement-assisted communication given in \cite[Theorem~6.4]{BCMH24} (noting that $f_\mathrm{DIST}(p_0)=0$ for direct comparison) and restated in Eq.~\eqref{eq-bcmh-bound} are additionally reduced by a function $f_\mathrm{AVP}(p_0)=\mathcal{O}(p_0\log p_0)$ which can be shown to dominate the scaling \cite[Section~2.8.2]{Belzig23}, which we illustrate in Fig.~\ref{fig:capacity-red}. By constructing our coding scheme using the concatenated quantum Hamming code instead, we thus get closer to the upper bound (and thereby, the noiseless optimal rate).

\begin{figure}[t]
    \centering
    \includegraphics[width=8cm]{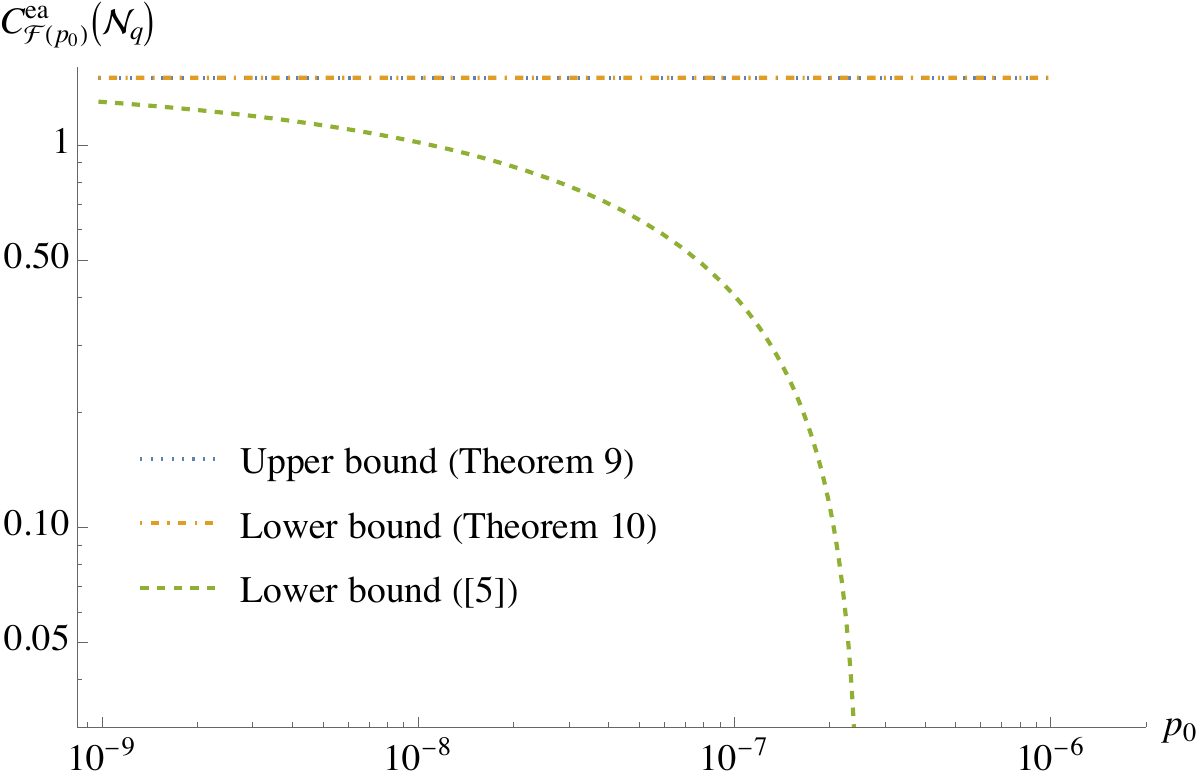}
    \includegraphics[width=8cm]{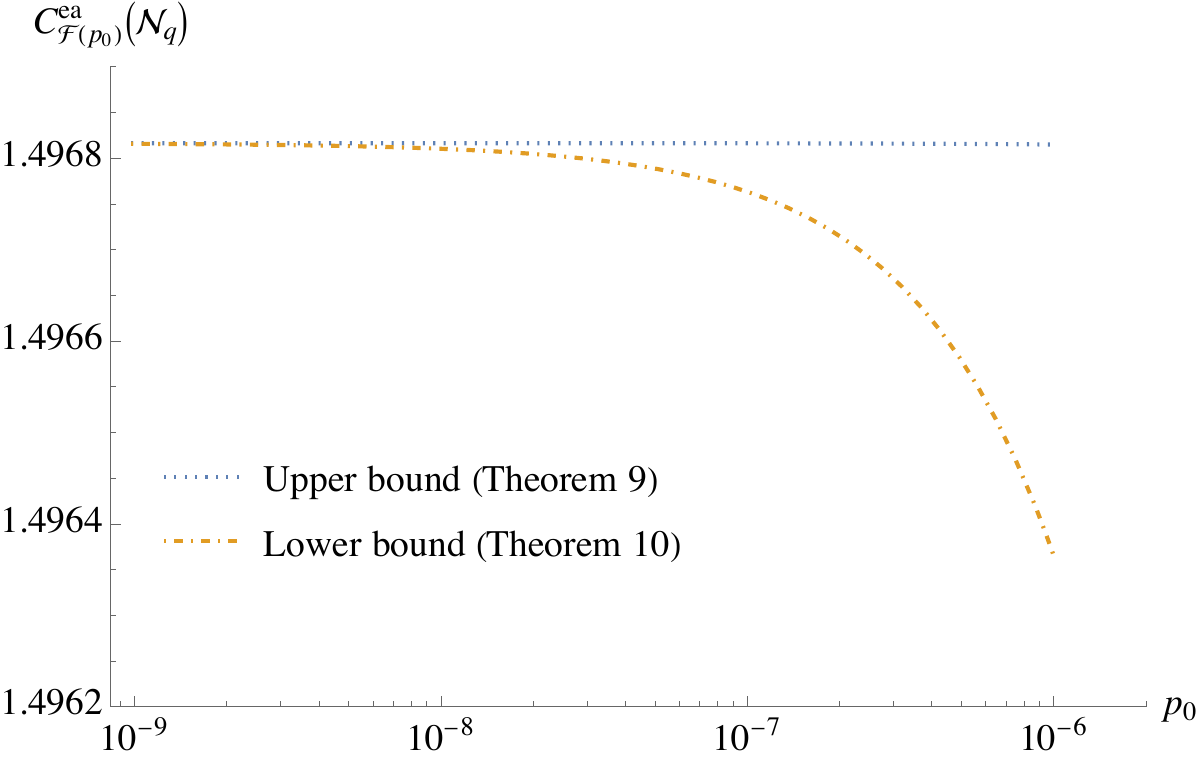}
        \caption{Logarithmic plots of the bounds on the fault-tolerant entanglement-assisted classical capacity of the depolarizing channel $\mathcal{N}_q$ with depolarizing probability $q=0.1$. These plots show the upper bound from Theorem~\ref{thm-cap-upper-bound} in blue. The best lower bound is given by our Theorem~\ref{thm-cap-lower-bound} (orange), which nearly matches the upper bound, as shown in the left plot; in contrast, the lower bound obtained in previous work~\cite{BCMH24} (red) decreases rapidly as the local gate error rate $p_0$ increases. The right plot shows the same bounds with a magnified scale, illustrating that our lower bound still decreases as $p_0 \log p_0$, but at a substantially slower rate. 
        }
    \label{fig:capacity-comp1}
\end{figure}

We illustrate this difference further by considering the example of the qubit depolarizing channel $\mathcal{N}_q(\rho)=(1-q)\rho+q \frac{\mathbbm{1}}{2}$ with depolarizing parameter $q\in[0,1]$. For this channel, an explicit expression of the entanglement-assisted classical capacity is known, which can be found in~\cite{BSST99}. Here, we choose a fixed depolarizing parameter $q=0.1$ and vary the local gate error rate $p_0$ to demonstrate how the upper and lower bounds behave. As expected, both of the lower bounds decrease as $\mc O(p_0\log(p_0))$. However, it is notable that the bound we obtain in Theorem~\ref{thm-cap-lower-bound} is much closer to the upper bound, as can clearly be observed in Fig.~\ref{fig:capacity-comp1}. Even more importantly, it is clear that our lower bound is non-zero for much higher local gate error rates $p_0$ than the previous bound, including for values of $p_0$ that are more likely to be achievable for current state-of-the-art devices. In our example of the qubit depolarizing channel, the lower bound from \cite{BCMH24} becomes zero for the local gate error rate
\begin{align}
    p_0=\min\{2.5\times 10^{-7},\tilde{p}_\mathrm{th}\}.
\end{align}
By contrast, our lower bound remains positive for local gate error rates up to
\begin{align}
    p_0=\min\{0.011,\tilde{p}_\mathrm{th}\}.
\end{align}

\section{Conclusion and open problems}

We have introduced methods for fault-tolerant protocols with the concatenated quantum Hamming code \cite{yamasaki2022timeefficient} when the circuits have quantum input or output. 
Apart from fault-tolerant quantum computation and fault-tolerant channel coding, this construction may also be relevant for dealing with large quantum computations where parts of the device may be subject to higher or unusual errors compared to the rest of the device, for example, in distributed quantum computing setups \cite{Y5,Wehnereaam9288,7562346,PRXQuantum.6.010101}.

Furthermore, we find that fault-tolerance methods for the concatenated quantum Hamming code can considerably reduce the gap between upper and lower bounds on achievable rates for fault-tolerant entanglement-assisted communication, and notably, give non-zero achievable rates for much higher local gate error rates than previous methods. This is not only a consequence of the code's constant overhead, but also of the concatenated structure of the interfaced circuit implementation and the reduced size of syndrome qubit states that our analysis explicitly takes into account. 
It remains an open question whether other code constructions, for example, quantum expander codes \cite{postol2001proposed,7354429,8555154,tamiya2024polylogtimeconstantspaceoverheadfaulttolerantquantum} (which also achieve constant overhead), can implement fault-tolerant coding schemes that achieve similarly high communication rates.

Recently, it has been demonstrated that communication rates similar to those in \cite{CMH20,BCMH24} can be achieved under more general circuit noise models than those considered here. This was shown in \cite{CFG24} for noise models where each gate may be replaced by an arbitrary similar gate,  and in ~\cite{tamiya2024polylogtimeconstantspaceoverheadfaulttolerantquantum} for a circuit-level local stochastic error model with correlated noise. Under such noise models, it is unclear whether the constant overhead of the concatenated quantum Hamming code still allows us to obtain tighter bounds on the achievable rates, which would be an interesting future direction to explore.

\section*{Acknowledgments}

We would like to thank Ashutosh Goswami and Alexander Müller-Hermes for insightful discussions. PB acknowledges funding from the Canada First Research Excellence Fund and the NSERC Alliance Consortia Quantum grant ALLRP 578455-22. HY acknowledges JST PRESTO Grant Number JPMJPR201A, JPMJPR23FC, JSPS KAKENHI Grant Number JP23K19970, and JST CREST Grant Number JPMJCR25I5\@. 

\bibliographystyle{marcotomPB}
\bibliography{citation_bibtex}

\end{document}